\pdfoutput=1
\documentclass[11pt]{article}

\usepackage[letterpaper,margin=1in]{geometry}
\parindent=18pt

\usepackage{lipsum}

\newcommand\blfootnotea[1]{%
  \begingroup
  \renewcommand\thefootnote{}\footnote{#1}%
  \endgroup
}

\usepackage[utf8]{inputenc} %
\usepackage[T1]{fontenc}    %

\usepackage{url}            %
\usepackage{booktabs}       %
\usepackage{nicefrac}       %
\usepackage{microtype}      %
\usepackage{enumitem}
\usepackage{dsfont}
\usepackage{multicol}
\usepackage{xcolor}
\usepackage{mdframed}

\usepackage{amsthm,amsfonts,amsmath,amssymb,epsfig,color,float,graphicx,verbatim, enumitem}

\usepackage{algorithmicx}
\usepackage[noend]{algpseudocode}
\usepackage{algorithm}

\usepackage{mathtools}
\usepackage{bbm}

\usepackage{thm-restate}

\definecolor{green}{rgb}{0.0, 0.5, 0.0}
\usepackage[colorlinks,citecolor=blue,linkcolor=magenta,bookmarks=true,hypertexnames=false]{hyperref}
\usepackage[nameinlink,capitalize]{cleveref}

\crefformat{equation}{(#2#1#3)}
\crefname{lemma}{lemma}{lemmata}
\crefname{claim}{claim}{claims}
\crefname{theorem}{theorem}{theorems}
\crefname{proposition}{proposition}{propositions}
\crefname{corollary}{corollary}{corollaries}
\crefname{claim}{claim}{claims}
\crefname{remark}{remark}{remarks}
\crefname{definition}{definition}{definitions}
\crefname{fact}{fact}{facts}
\crefname{question}{question}{questions}
\crefname{condition}{condition}{conditions}
\crefname{algorithm}{algorithm}{algorithms}
\crefname{assumption}{assumption}{assumptions}
\crefname{notation}{notation}{notation}
\crefname{cond}{Condition}{Conditions}

  {%
   \par\noindent{\bfseries\upshape Proof Sketch\ }%
  }%
  {\qed}

\newtheorem{theorem}{Theorem}[section]
\newtheorem{lemma}[theorem]{Lemma}

\newtheorem{claim}[theorem]{Claim}

\newtheorem{definition}[theorem]{Definition}

\newtheorem{fact}[theorem]{Fact}

\theoremstyle{definition}

\newtheorem{remark}[theorem]{Remark}

\newcommand{\eps}{\epsilon}
\newcommand{\proj}[1]{\mathbb{P}_{#1}}

\newcommand{\poly}{\mathrm{poly}}

\newcommand{\rank}{\operatorname{rank}}

\def\R{\mathbb R}

\def\N{\mathbb N}

\def\Z{\mathbb Z}

\newcommand{\cD}{\mathcal{D}}

\newcommand{\cI}{\mathcal{I}}
\newcommand{\cJ}{\mathcal{J}}

\newcommand{\cN}{\mathcal{N}}

\newcommand{\cS}{\mathcal{S}}
\newcommand{\cT}{\mathcal{T}}
\newcommand{\cU}{\mathcal{U}}

\DeclareMathOperator*{\pr}{\mathbf{Pr}}
\DeclareMathOperator*{\E}{\mathbf{E}}
\DeclareMathOperator*{\Var}{\mathbf{Var}}
\DeclareMathOperator*{\cov}{\mathbf{Cov}}
\newcommand{\eqdef}{\stackrel{{\mathrm {\footnotesize def}}}{=}}

\newcommand{\op}{\textnormal{op}}
\newcommand{\fr}{\textnormal{F}}

\newcommand{\ymed}{\widetilde{y}_{\mathrm{median}}}

\newcommand{\Tset}[2]{T^{(#1)}_{(#2)}}
\newcommand{\tildeTset}[2]{\widetilde{T}^{(#1)}_{(#2)}}
\newcommand{\Tt}[1]{T_{(#1)}}
\newcommand{\tildeTt}[1]{\widetilde{T}_{(#1)}}

\newcommand{\Cvar}{C_1}

\newcommand{\tr}{\mathrm{tr}}

\newcommand{\oS}{\widetilde{S}}
\newcommand{\oT}{\widetilde{T}}
\newcommand{\omu}{\widetilde{\mu}}
\newcommand{\oSigma}{\widetilde{\Sigma}}
\newcommand{\ox}{\widetilde{x}}
\newcommand{\oX}{\widetilde{X}}
\newcommand{\oY}{\widetilde{Y}}

\def\colorful{0}

\ifnum\colorful=1
\newcommand{\new}[1]{{\color{red} #1}}

\else
\newcommand{\new}[1]{{#1}}

\fi

\allowdisplaybreaks

\title{A Spectral Algorithm for List-Decodable Covariance Estimation\\ in Relative Frobenius Norm\blfootnotea{Authors are in alphabetical order.}}

\author{
Ilias Diakonikolas\thanks{Supported by NSF Medium Award CCF-2107079, NSF Award CCF-1652862 (CAREER), and a DARPA Learning with Less Labels (LwLL) grant.}\\
University of Wisconsin-Madison\\
{\tt ilias@cs.wisc.edu}\\
\and
Daniel M. Kane\thanks{Supported by NSF Medium Award CCF-2107547 and NSF Award CCF-1553288 (CAREER).}\\
University of California, San Diego\\
{\tt dakane@cs.ucsd.edu}
\and
Jasper C.H. Lee\thanks{Supported by a Croucher Fellowship for Postdoctoral Research, NSF award DMS-2023239, NSF Medium Award CCF-2107079, and NSF AiTF Award CCF-2006206.}\\
University of Wisconsin-Madison\\
{\tt jasper.lee@wisc.edu}\\
\and
Ankit Pensia\thanks{Supported by NSF Award CCF-1652862 (CAREER), and NSF grants CCF-1841190 and CCF-2011255.}\\
University of Wisconsin-Madison\\
{\tt ankitp@cs.wisc.edu}\\
\and
Thanasis Pittas\thanks{Supported by NSF Medium Award CCF-2107079.}\\
University of Wisconsin-Madison\\
{\tt pittas@wisc.edu}\\
}

\begin{document}

\maketitle

\begin{abstract}%
We study the problem of list-decodable Gaussian covariance estimation. 
Given a multiset $T$ of $n$ points in $\R^d$ such that an unknown 
$\alpha<1/2$ fraction of points in $T$ are i.i.d.\ samples from an unknown 
Gaussian $\mathcal{N}(\mu, \Sigma)$, the goal is to output 
a list of $O(1/\alpha)$ hypotheses at least one of which is close to $\Sigma$ 
in relative Frobenius norm. 
Our main result is a $\poly(d,1/\alpha)$ sample and time algorithm for this task that guarantees relative Frobenius norm error of $\poly(1/\alpha)$.
Importantly, our algorithm relies purely on spectral techniques.
As a corollary, we obtain an efficient spectral algorithm for robust partial clustering of Gaussian mixture models (GMMs) --- a key ingredient in the recent work of~\cite{BakDJKKV22} on robustly learning arbitrary GMMs. Combined with the other components of~\cite{BakDJKKV22}, our new method
yields the first Sum-of-Squares-free algorithm
for robustly learning GMMs.
At the technical level, we develop a novel multi-filtering method for list-decodable covariance estimation that may be useful in other settings.
\end{abstract}

\thispagestyle{empty}
\newpage
{
  \hypersetup{linktoc=all,linkcolor=black}
  \tableofcontents
}
\thispagestyle{empty}
\newpage
\setcounter{page}{1}

\section{Introduction} \label{sec:intro}

Robust statistics studies the efficient (parameter) learnability 
of an underlying distribution given samples 
some fraction of which might be corrupted, perhaps arbitrarily. 
While the statistical theory of these problems has been well-established 
for some time \cite{Hub64,AndBHHRT72,HubRon09}, 
only recently has the %
algorithmic theory of such estimation 
problems begun to be understood \cite{DiaKKLMS16-focs,LaiRV16,DiaKan22-book}.

A classical problem in robust estimation is that of 
multivariate robust mean estimation ---
that is, estimating the mean of an unknown distribution 
in the presence of a small constant fraction of outliers. 
One of the original results in the field is that 
given samples from a Gaussian $\cN(\mu,I)$ 
with an $\eps$-fraction of outliers (for some $\eps<1/2$), 
the unknown mean $\mu$ can be efficiently estimated 
to an error of $O(\eps \sqrt {\log 1/\eps})$
in the $\ell_2$-norm~\cite{DiaKKLMS16-focs}. 
Note that the use of the $\ell_2$-norm here is quite natural, 
as the total variation distance between 
two identity-covariance Gaussians 
is roughly proportional to the $\ell_2$-distance between their means;
thus, this estimator learns the underlying distribution 
to total variation distance error $O(\eps \sqrt{\log 1/\eps})$. 
This bound cannot be substantially improved, 
as learning to error $o(\eps)$ in total variation distance 
is information-theoretically impossible. %

While the above algorithmic result works 
when $\eps$ is small (less than $1/2$), 
if more than half of the samples are corrupted, 
it becomes impossible to learn with only a single 
returned hypothesis---the corruption might simply simulate other 
Gaussians, and no algorithm can identify which Gaussian 
is the original one. 
This issue can be circumvented using the {\em list-decodable} mean 
estimation paradigm~\cite{CSV17}, 
where the algorithm is allowed to output a small {\em list of hypotheses} 
with the guarantee that at least one of them is relatively close to the target. 
For Gaussian mean estimation in particular, 
if an $\alpha$-fraction of the samples are clean (i.e., uncorrupted), 
for some $\alpha<1/2$, there exist polynomial-time algorithms 
that return a list of $O(1/\alpha)$ hypotheses 
such that (with high probability) 
at least one of them is within $\tilde{O}(\sqrt{1/\alpha})$ 
of the mean in $\ell_2$-distance~\cite{CSV17}. 
Note that while this $\ell_2$-norm distance bound 
does not imply a good bound 
on the total variation distance between the true distribution 
and the learned hypothesis, 
it does bound their distance away from one, 
ensuring some non-trivial overlap.

Another important, and arguably more complex, 
problem in robust statistics (the focus of this work) 
is that of robustly estimating 
the covariance of a multivariate Gaussian. 
It was shown in~\cite{DiaKKLMS16-focs} 
that given $\eps$-corrupted samples from $\cN(0,\Sigma)$ 
(for $\eps<1/2$)
there is a polynomial-time algorithm for estimating $\Sigma$ 
to error $O(\eps \log 1/\eps)$ in the relative Frobenius norm, i.e., 
outputting a hypothesis covariance $\widetilde{\Sigma}$ satisfying
$\|\widetilde{\Sigma}^{-1/2} \Sigma \widetilde{\Sigma}^{-1/2} - I \|_\fr$. 
This is again the relevant metric, as the total variation distance between 
two such Gaussians is proportional to the relative Frobenius norm between
their covariances when 
$\|\widetilde{\Sigma}^{-1/2} \Sigma \widetilde{\Sigma}^{-1/2} - I \|_\fr$ 
is smaller than a small universal constant.

A natural goal, with a number of applications,
is to extend the above algorithmic result 
to the list-decodable setting. 
That is, one would like a polynomial-time algorithm that given corrupted 
samples from $\cN(0,\Sigma)$ (with an $\alpha$-fraction of clean samples, 
for some $\alpha<1/2$), returns a list of $O(1/\alpha)$ many hypotheses 
with the guarantee that at least one of them %
has some non-trivial overlap with the true distribution in total variation distance.
We start by noting that the sample complexity of 
list-decodable covariance estimation is 
$\poly(d/\alpha)$, albeit via an exponential time algorithm.
The only known algorithm for list-decodable covariance estimation 
(with total variation error guarantees) 
is due to~\cite{IvkKot22}. This algorithm essentially relies on the Sum-of-Squares method 
and has (sample and computational) complexity $d^{\poly(1/\alpha)}$. 
Intriguingly, there is compelling evidence that this complexity bound cannot be improved. 
Specifically, \cite{DiaKS17,DiaKPPS21} showed that any Statistical Query (SQ) algorithm 
for this task requires complexity $d^{\poly(1/\alpha)}$.
Combined with the reduction 
of~\cite{BBHLS20}, this implies a similar lower bound for low-degree polynomial tests. These lower bounds suggest an intrinsic information-computation gap for the problem. 

The aforementioned lower bound results~\cite{DiaKS17,DiaKPPS21} 
establish that
it is SQ (and low-degree) hard to distinguish between a standard multivariate Gaussian, $\cN(0,I)$, 
and a distribution $P$ that behaves like $\cN(0,I)$ (in terms of low-degree moments) 
but $P$ contains an $\alpha$-fraction of samples from a Gaussian $\cN(0,\Sigma)$, 
where $\Sigma$ is {\em very thin} in some hidden direction $v$.
Since the distribution $\cN(0,\Sigma)$ is very thin along $v$, 
i.e., has very small variance, the obvious choice of $\cN(0,I)$ essentially 
has no overlap with $\cN(0,\Sigma)$ --- 
making this kind of strong list-decoding guarantee (closeness in total variation distance) 
likely computationally intractable. 
\looseness=-1

Interestingly, there are two possible ways that a pair of mean zero Gaussians 
can be separated~\cite{DHKK20, BakDHKKK20}: (1) one could be much thinner 
than the other in some direction, 
or (2) they could have many orthogonal directions in which their variances differ, 
adding up to something more substantial. 
While the lower bounds of \cite{DiaKS17,DiaKPPS21} seem to rule out 
being able to detect deviations of the former type in fully polynomial time 
(i.e., $\poly(d/\alpha)$), 
it does not rule out efficiently detecting deviations of the latter. 
In particular, we could hope to find (in $\poly(d/\alpha)$ time) 
a good hypothesis $\widetilde{\Sigma}$ such that 
$\|\widetilde{\Sigma}^{-1/2} \Sigma \widetilde{\Sigma}^{-1/2} - I\|_\fr$ is not too big. 
While this does not exclude the possibility that $\Sigma$ is 
much thinner than $\widetilde{\Sigma}$ in a small number of independent directions, 
it does rule out the second kind of difference between the two. 
The main goal of this paper is to provide an elementary (relying only on spectral techniques) 
list-decoding algorithm of this form.
Given corrupted samples from a Gaussian $\cN(0,\Sigma)$, 
we give a $\poly(d/\alpha)$-time algorithm that returns a small list of hypotheses 
$\widetilde{\Sigma}$ such that for at least one of them we have that 
$\|\widetilde{\Sigma}^{-1/2} \Sigma \widetilde{\Sigma}^{-1/2} - I\|_\fr< \poly(1/\alpha)$.

In addition to providing the best qualitative guarantee we could hope to achieve 
in fully polynomial time, the above kind of ``weak'' list-decoding algorithm has interesting implications for the well-studied problem of robustly learning Gaussian mixture models (GMMs). 
\cite{BakDJKKV22} gave a polynomial-time algorithm 
to robustly learn arbitrary mixtures of Gaussians (with a constant number of components). 
One of the two main ingredients of their approach is a subroutine that could perform partial 
clustering of points into components that satisfy exactly this kind of weak closeness guarantee. 
\cite{BakDJKKV22} developed such a subroutine by making essential use of 
a Sum-of-Squares (SoS) relaxation, for the setting that the samples come from a mildly 
corrupted mixture of Gaussians (with a small constant fraction of outliers). 
As a corollary of our techniques, we obtain an elementary spectral 
algorithm for this partial clustering task (and, as already stated, our 
results work in the more general list-decoding setting). This yields
the first SoS-free algorithm for robustly learning GMMs, answering an 
open problem in the literature~\cite{VempalaGMMTalk,KothariGMMTalk}.

\subsection{Our Results}

Our main result is a polynomial-time algorithm for list-decoding the 
covariance of a Gaussian in relative Frobenius norm, under adversarial 
corruption where more than half the samples could be outliers.
\Cref{def:corruption} makes precise the corruption model, and 
\Cref{thm:intro} states the guarantees of our main algorithm (\Cref{alg:main_alg}).

\begin{definition}[Corruption model for list-decoding]
\label{def:corruption}
Let the parameters $\eps,\alpha \in (0,1/2)$ and a distribution family $\cD$.
The statistician specifies the number of samples $m$.
Then a set of $n \ge \alpha m$ i.i.d.~points are sampled from an unknown $D \in \cD$.
We call these $n$ samples the \emph{inliers}. 
Upon inspecting the $n$ inliers, a (malicious and computationally unbounded) adversary can replace an arbitrary $\ell \leq \eps n$  
 of the inliers with arbitrary points, and further add $m-n$ arbitrary points to the dataset, before returning the entire set of $m$ points to the statistician.
The parameter $\alpha$ is also known a-priori to the statistician, but the number $n$ chosen by the adversary is unknown.
We refer to this set of $m$ points as an $(\alpha,\epsilon)$-corrupted set of samples from $D$.
\end{definition}

In our context, the notion of list-decoding is as follows: 
our algorithm will return a polynomially-sized list of matrices $H_i$, 
such that at least one $H_i$ is an ``approximate square root'' 
of the true covariance $\Sigma$ having bounded 
{\em dimension-independent} error 
$\|H_i^{-1/2}\Sigma H_i^{-1/2} - I\|_\fr$.
As discussed earlier, the bound we guarantee is $\poly(1/\alpha)$, which 
is in general larger than $1$ and thus does not lead to non-trivial total 
variation bounds, thus circumventing related SQ lower bounds.

The formal statement of our main result is as follows:

\begin{theorem}[List-Decodable Covariance Estimation in Relative Frobenius Norm]
\label{thm:intro}
Let the ambient dimension be $d \in \N$, $C'$ be a sufficiently large absolute constant, and let the parameters $\alpha \in (0,1/2)$, $\eps \in  (0,\eps_0)$ for a sufficiently small positive constant $\eps_0$, and failure probability $\delta \in (0,1/2)$ be known to the algorithm.
Let $D = \cN(\mu, \Sigma)$ be a distribution with mean $\mu \in \R^d$, a full-rank covariance $\Sigma \in \R^{d \times d}$.
There is an $\tilde{O}(m^2d^2)$-time algorithm (\Cref{alg:main_alg}) such that, on input $\alpha, \delta$ as well as an $(\alpha,\eps)$-corrupted set of $m$ points from $D$ (\Cref{def:corruption}) for  
any $m > C'\frac{d^2 \log^5(d/\alpha \delta)}{\alpha^6}$,
with probability at least $1-\delta$, the algorithm returns a list of at most $O(1/\alpha)$ many sets $T_i$ which are disjoint subsets of samples, each of size at least $0.5\alpha m$, and there exists a $T_i$ in the output list such that: 
\begin{itemize}%
    \item Recall the notation in the corruption model (\Cref{def:corruption}) where $n$ is the size of the original inlier set $S$ and $\ell$ is the number of points in $S$ that the adversary replaced---$n$ and $\ell$ are unknown to the algorithm except that $n \ge \alpha m$ and $\ell \leq \eps n$.
    The set $T_i$ in the returned list satisfies that $|T_i \cap S| \ge (1-0.01\alpha)(n-\ell)$. 
    \item  Denote $H_i := \E_{X \sim T_i}[X X^\top]$. The matrix $H_i$  satisfies 
    $\|H_i^{-1/2}\Sigma H_i^{-1/2}- I\|_\fr \lesssim \frac{1}{\alpha^4} \log(\frac{1}{\alpha}).$
\end{itemize}

\end{theorem}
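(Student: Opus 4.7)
The plan is to design a recursive multi-filter which maintains a collection of candidate subsets of the corrupted sample, so that at every stage at least one candidate retains nearly all of the inlier set $S$ (the invariant $|T \cap S| \ge (1 - o(\alpha))(n-\ell)$), while guaranteeing that every surviving candidate has size at least $0.5 \alpha m$. Because we will force the surviving candidates to be disjoint, the $0.5\alpha m$ lower bound on size immediately caps the returned list at $O(1/\alpha)$. Starting from the entire input, we repeatedly apply a spectral test-and-split step to each candidate until each one either passes a certificate that its empirical second moment $H = \E_{X \sim T}[XX^\top]$ satisfies $\|H^{-1/2}\Sigma H^{-1/2} - I\|_\fr \lesssim \alpha^{-4}\log(1/\alpha)$, or it is split into strictly smaller candidates.

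The test at each candidate $T$ is performed in the whitened frame $Y := H^{-1/2} X$, in which the inliers (after recentering) are distributed approximately as $\cN(0,\, H^{-1/2}\Sigma H^{-1/2})$. The key quantity is the operator
\[
\Phi(A) \;:=\; \E_{X \sim T}\bigl[ (Y^\top A Y - \tr A)^2 \bigr] \;-\; 2 \|A\|_\fr^2
\]
on symmetric matrices $A$, which vanishes identically when $Y$ is a unit-variance Gaussian by Isserlis' theorem. A Gaussian-concentration argument on the inliers then shows that $\sup_{\|A\|_\fr=1}|\Phi(A)|$ is a faithful proxy for $\|H^{-1/2}\Sigma H^{-1/2} - I\|_\fr$, provided $T$ is inlier-heavy. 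Concretely, if the spectral norm of the $\binom{d+1}{2} \times \binom{d+1}{2}$ matrix representing $\Phi$ sits below an explicit $\poly(1/\alpha)$ threshold, we certify $T$; otherwise, the top eigen-matrix $A^\star$ witnesses a direction in matrix space along which the empirical variance of $Y^\top A^\star Y$ is anomalously large, and we pass $A^\star$ to the splitting subroutine.

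Given such a witness $A^\star$, the multi-filter step reduces the problem to one dimension via the scalar score $s(X) := Y^\top A^\star Y$ and partitions $T$ using overlapping windows on the real line, in the style of the multi-filter paradigm from~\cite{DiaKan22-book}. The window width is tuned so that, by Gaussian hyper-contractivity for degree-$2$ polynomials, almost all inliers of $T$ land in one common window, while the certificate forces the full empirical distribution of $s$ to be too spread out to fit in any single window, guaranteeing at least one nontrivial split. Every window containing at least $0.5\alpha m$ points becomes a child candidate. The inlier-preserving invariant is maintained by choosing the window width from the Gaussian tail bound and taking a union bound over the recursion tree; a monotone potential based on the Frobenius magnitude of the certificate operator $\Phi$ strictly decreases on the inlier-preserving branch at every split, ensuring termination in $\poly(1/\alpha, \log d)$ rounds.

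The main obstacle will be coupling three constraints simultaneously on the inlier-preserving branch: preserving essentially all inliers at every level (the size guarantee demanded by the theorem), strictly reducing the relative-Frobenius potential (needed for termination in a bounded number of rounds), and keeping the \emph{global} number of disjoint candidates below $O(1/\alpha)$. The first two requirements pull the window threshold in opposite directions, and threading this needle is what determines the final $\tilde O(\alpha^{-4})$ rate rather than something tighter. I would handle it by setting the window width proportional to $\|A^\star\|_\fr / \poly(1/\alpha)$ and proving a per-branch stability lemma that says, conditional on falling in any fixed window of that width, the retained inliers remain sufficiently close to a Gaussian in their low-degree moments for the next recursion to make progress. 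Establishing this stability lemma uniformly over all branches, so that it can be composed across the $\poly(1/\alpha)$ levels of recursion without the inlier count decaying below $(1 - 0.01\alpha)(n-\ell)$, is the technically most delicate component of the argument.
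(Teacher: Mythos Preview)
Your high-level architecture matches the paper's (recursive multi-filter, whiten by the empirical second moment, test via a quadratic form, then recurse), but there are two substantive gaps that the paper's proof handles and your plan does not.

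First, your certificate $\Phi(A) = \E_T[(Y^\top A Y - \tr A)^2] - 2\|A\|_\fr^2$ centers the score $Y^\top A Y$ at $\tr A$, which is only correct when $Y$ has identity covariance. After whitening, the inliers have covariance $H^{-1/2}\Sigma H^{-1/2}$, not $I$, and this matrix is precisely what is unknown. The correct inlier center is $\tr(A H^{-1/2}\Sigma H^{-1/2}) + \tilde\mu^\top A \tilde\mu$, and more importantly the \emph{inlier variance} of $Y^\top A Y$ also depends on $\Sigma$. This is exactly the circular dependence the paper flags as the main departure from the mean-estimation multi-filter of \cite{DiaKS18-list}: calibrating either a filter or a window width requires bounding $\Var_{\text{inliers}}[Y^\top A Y]$, which a priori requires knowing $\Sigma$. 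The paper breaks the circularity by proving (\Cref{lem:var_bound}) that $\Var_D[(H^{\dagger/2}X)^\top A (H^{\dagger/2}X)] \lesssim 1/\alpha^2$ using only $\Sigma \preceq O(1/\alpha)\,H$ (since inliers are an $\alpha$-fraction of $T$), and then centers the score at the empirical \emph{median} rather than at any quantity involving $\Sigma$. Your plan invokes Gaussian concentration/hypercontractivity for the inliers, but without a $\Sigma$-free variance bound you cannot set the window width, and the per-branch ``stability lemma'' you identify as the hard step cannot be formulated, let alone iterated.

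Second, your plan has only a split step (``overlapping windows''), whereas the paper crucially uses a \emph{dichotomy}: either the $(\alpha m/9)$-quantile range of the scores is small, in which case one \emph{filters} by removing a single random point with probability proportional to $(s(X)-\tilde y_{\mathrm{median}})^2$ (a martingale argument, \Cref{lem:martingale_argument}, controls cumulative inlier loss), or the quantile range is large, in which case a disjoint \emph{bipartition} via \textsc{FindDivider} is guaranteed to put essentially all inliers on one side. Without the filtering branch, your algorithm stalls whenever the empirical scores have large variance but no clean gap (e.g., a few extreme outliers): no window-based split will simultaneously shrink the set and preserve inliers. Relatedly, overlapping windows produce non-disjoint children, contradicting your own disjointness claim and hence the $O(1/\alpha)$ list bound; the paper obtains disjointness and an $O(1/\alpha)$ recursion depth directly from bipartition (each child loses at least $\alpha m/9$ points), replacing your unproven ``$\Phi$-potential decreases'' termination with a simple size-decrease argument.
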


\Cref{alg:main_alg} is an iterative spectral algorithm, as opposed to involving large convex programs based on the sum-of-squares hierarchy.
We state \Cref{alg:main_alg} and give a high-level proof sketch of why it satisfies \Cref{thm:intro} in \Cref{sec:single_iter_main}.
The formal proof of \Cref{thm:intro} appears in \Cref{sec:proof_of_main_thm} (where we allow $\Sigma$ to be rank-deficient as well).

As a corollary of our main result, the same algorithm 
(but with a slightly higher sample complexity) also achieves 
outlier-robust list-decoding of the covariances 
for the components of a Gaussian mixture model, in relative Frobenius norm.
\Cref{def:GMMcorruption} and \Cref{thm:GMM} state the corresponding corruption model 
and theoretical guarantees on \Cref{alg:main_alg}.

\begin{definition}[Corruption model for samples from Gaussian Mixtures]
\label{def:GMMcorruption}
Let  $\eps \in (0,1/2)$. Consider a Gaussian mixture model 
$\sum_p \alpha_p \cN(\mu_p,\Sigma_p)$, where the parameters 
$\alpha_p, \mu_p$ and $\Sigma_p$ are unknown and satisfy 
$\alpha_p \ge \alpha$ for some known parameter $\alpha$.
The statistician specifies the number of samples $m$, and $m$ 
i.i.d.~samples are drawn from the Gaussian mixture, which are called the 
\emph{inliers}.
The malicious and computationally unbounded adversary then inspects the 
$m$ inliers and is allowed to replace an arbitrary subset of 
$\eps \alpha m$ many inlier points with arbitrary outlier points, before 
giving the modified dataset to the statistician. We call this the 
$\eps$-corrupted set of samples.
\end{definition}

\begin{restatable}[Outlier-Robust Clustering and Estimation of Covariances for GMM]{theorem}{ThmGMM}
\label{thm:GMM}
Let the ambient dimension be $d \in \N$ and $C'$ be a sufficiently large absolute constant.
Let $\alpha \in (0,1/2)$, $\eps \in (0,\eps_0)$ for some sufficiently small positive constant 
$\eps_0$, and failure probability $\delta \in (0,1/2)$ be known to the algorithm.
There is an $\tilde{O}(m^2d^2)$-time algorithm such that, on input $\alpha, \delta$ 
as well as $m > C'\frac{d^2 \log^5(d/\alpha \delta)}{\alpha^6}$ many $\eps$-corrupted samples 
from an unknown $k$-component Gaussian mixture $\sum_{p = 1}^k \alpha_p \cN(\mu_p,\Sigma_p)$ 
as in \Cref{def:GMMcorruption}, where all $\Sigma_p$'s are full-rank 
and all $\alpha_p$ satisfies $\alpha_p \ge \alpha$ and $k \le \frac{1}{\alpha}$ 
is unknown to the algorithm, with probability at least $1-\delta$ over the corrupted samples 
and the randomness of the algorithm, the algorithm returns a list of at most $k$ 
many disjoint subsets of samples $\{T_i\}$ such that:
\begin{itemize}%
    \item For the $p^\text{th}$ Gaussian component, denote the set $S_p$ as the samples in the inlier set $S$ that were drawn from component $p$.
    Let $n_p$ be the size of $S_p$, and let $\ell_p$ be the number of points in $S_p$ that the adversary replaced---$n_p$ and $\ell_p$ are both unknown to the algorithm except that $\E[n_p] = \alpha_p m \ge \alpha m$ for each $p$ and $\sum_p \ell_p \le \eps \alpha m$.
    Then, for every Gaussian component $p$ in the mixture, there exists a set $T_{i_p}$ in the returned list such that $|T_{i_p} \cap S_p| \ge (1-0.01\alpha)(n_p-\ell_p)$. %
    \item For every component $p$, there is a set of samples $T_{i_p}$ in the returned list such that, defining $H_{i_p} = \E_{X \sim T_{i_p}}[X X^\top]$ , we have $H_{i_p}$ satisfying $\|H_{i_p}^{-1/2}\Sigma_p H_{i_p}^{-1/2}- I\|_\fr \lesssim (1/\alpha^4)\log(1/\alpha)$.
    
    \item Let $\Sigma$ be the (population-level) covariance matrix of the Gaussian mixture.
    For any two components $p\neq p'$ with $\|\Sigma^{-1/2}(\Sigma_p - \Sigma_{p'})\Sigma^{-1/2}\|_\fr > C (1/\alpha)^5\log(1/\alpha)$ for a sufficiently large constant $C$, the sets $T_{i_p}$ and $T_{i_{p'}}$ from the previous bullet are guaranteed to be different.
\end{itemize}
\end{restatable}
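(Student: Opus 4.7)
The plan is to derive \Cref{thm:GMM} from \Cref{thm:intro} by analyzing the output of \Cref{alg:main_alg} from the viewpoint of each mixture component separately. For each component $p$, I would view the $m$ received points as an $(\alpha_p,\eps_p)$-corrupted sample from the single Gaussian $\cN(\mu_p,\Sigma_p)$, regarding samples from the other components together with the adversary's replacements as outliers. Standard multinomial concentration gives $n_p \gtrsim \alpha_p m \ge \alpha m$ with high probability, and because $\sum_q \ell_q \le \eps\alpha m$, the effective replacement fraction for $p$ satisfies $\eps_p = \ell_p/n_p \le O(\eps) \le \eps_0$, while the inlier fraction exceeds $\alpha$. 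Since \Cref{alg:main_alg} is run only once and its output is agnostic to the analytical choice of ``reference component,'' \Cref{thm:intro} applied to each viewpoint (with a union bound over the at most $k \le 1/\alpha$ components, with failure probability rescaled accordingly) yields, in the same output list, a subset $T_{i_p}$ for each $p$ satisfying $|T_{i_p}\cap S_p| \ge (1-0.01\alpha)(n_p-\ell_p)$ and $\|H_{i_p}^{-1/2}\Sigma_p H_{i_p}^{-1/2}-I\|_\fr \le c_1$, where $c_1 := O((1/\alpha^4)\log(1/\alpha))$. This immediately establishes the first two bullets.

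For the third bullet I would argue the contrapositive. Suppose $T := T_{i_p} = T_{i_{p'}}$ for two distinct components $p \ne p'$, and let $H := \E_{X \sim T}[X X^\top]$. Applying the triangle inequality to the Frobenius bounds for $p$ and $p'$,
\[
\|H^{-1/2}(\Sigma_p - \Sigma_{p'}) H^{-1/2}\|_\fr \le 2c_1.
\]
Expressing the target $\Sigma$-relative Frobenius norm as a conjugation through $H$ and applying the submultiplicative bound $\|ABC\|_\fr \le \|A\|_\op \|B\|_\fr \|C\|_\op$,
\[
\|\Sigma^{-1/2}(\Sigma_p - \Sigma_{p'}) \Sigma^{-1/2}\|_\fr \le \|\Sigma^{-1/2} H \Sigma^{-1/2}\|_\op \cdot 2c_1.
\]
It therefore suffices to prove $\|\Sigma^{-1/2} H \Sigma^{-1/2}\|_\op = O(1/\alpha)$, which combined with $c_1 = O((1/\alpha^4)\log(1/\alpha))$ yields the required threshold $O((1/\alpha^5)\log(1/\alpha))$ and contradicts the assumed separation.

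The main obstacle is establishing the one-sided bound $H \preceq O(1/\alpha)\,\Sigma$: the Frobenius-norm guarantee of \Cref{thm:intro} on its own only yields $\Sigma_p \preceq (1+c_1) H$ (a lower bound on $H$ in terms of $\Sigma_p$), and since $c_1 > 1$ in the relevant regime of small $\alpha$, the ``reverse'' direction of the operator-norm consequence of the Frobenius bound degenerates and does not by itself upper-bound $H$ by a multiple of $\Sigma_p$. To close this gap I would invoke the structural properties of the subsets produced by \Cref{alg:main_alg} as developed in \Cref{sec:single_iter_main}: $T$ contains the bulk of the inliers of both $S_p$ and $S_{p'}$, whose empirical second moments concentrate around $\Sigma_p + \mu_p\mu_p^\top$ and $\Sigma_{p'} + \mu_{p'}\mu_{p'}^\top$ respectively, and the algorithm's iterative spectral filtering ensures the residual contribution of the outlier part of $T$ to $H$ is controlled by the current covariance estimate rather than blowing up. Combined with the mixture identity $\Sigma = \sum_q \alpha_q(\Sigma_q + (\mu_q-\mu)(\mu_q-\mu)^\top) \succeq \alpha\,\Sigma_q$ for every $q$, this should yield $H \preceq O(1/\alpha)\,\Sigma$ and thereby the third bullet.
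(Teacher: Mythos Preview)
Your reduction for the first two bullets matches the paper's proof: apply \Cref{thm:intro} once per component, viewing the remaining samples as outliers, and union-bound over $k\le 1/\alpha$ components. (You omit the short counting argument showing the list has size \emph{at most $k$}, not just $O(1/\alpha)$: the $T_{i_p}$ already absorb all but $O(\alpha m)$ points, leaving no room for a further set of size $\ge 0.5\alpha m$. This is easy to add.)

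The genuine gap is in your argument for the third bullet. Your chain
\[
\|\Sigma^{-1/2}(\Sigma_p-\Sigma_{p'})\Sigma^{-1/2}\|_\fr
\;\le\;
\|\Sigma^{-1/2} H^{1/2}\|_\op^2 \cdot \|H^{-1/2}(\Sigma_p-\Sigma_{p'})H^{-1/2}\|_\fr
\;\le\; \|\Sigma^{-1/2} H \Sigma^{-1/2}\|_\op \cdot 2c_1
\]
is algebraically fine, but the step $\|\Sigma^{-1/2} H \Sigma^{-1/2}\|_\op = O(1/\alpha)$ is not justified and, as far as the guarantees of \Cref{thm:intro} go, need not hold. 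The relative Frobenius bound $\|H^{-1/2}\Sigma_p H^{-1/2}-I\|_\fr\le c_1$ places \emph{no} upper spectral control on $H$: take $\Sigma_p=I$ and $H$ equal to $I$ except with one diagonal entry set to $R$; then $\|H^{-1/2}\Sigma_p H^{-1/2}-I\|_\fr\le 1$ while $\|H\|_\op=R$ is unbounded. The paper makes exactly this point when motivating \Cref{lem:closeness_norm}. Your attempt to rescue this via ``the algorithm's iterative spectral filtering ensures the residual contribution of the outlier part of $T$ to $H$ is controlled'' is not supported by any lemma in \Cref{sec:single_iter_main}; in particular, a single adversarial point at $R\cdot v$ can survive into $T$ (the filtering scores are computed \emph{after} normalization by the current $H$, so such a point need not look anomalous) and alone makes $v^\top H v$ arbitrarily large relative to $v^\top\Sigma v$.

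The paper circumvents this entirely with the linear-algebraic \Cref{lem:closeness_norm}: from $\|I-H^{-1/2}\Sigma_q H^{-1/2}\|_\fr\le\rho$ for $q\in\{p,p'\}$ one deduces directly
\[
\|\Sigma^{-1/2}(\Sigma_p-\Sigma_{p'})\Sigma^{-1/2}\|_\fr
\;\lesssim\; \rho\cdot\max_q \|\Sigma^{-1/2}\Sigma_q\Sigma^{-1/2}\|_\op,
\]
and the right-hand operator norms are each at most $1/\alpha$ by the mixture identity $\Sigma\succeq\alpha_q\Sigma_q$. The proof of \Cref{lem:closeness_norm} (via \Cref{lem:diffFrobNew}) shows that $H$ may indeed blow up, but only along at most $O(\rho^2)$ singular directions; one projects those away and handles their contribution with the $\|\Sigma^{-1/2}\Sigma_q\Sigma^{-1/2}\|_\op$ bound. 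This is precisely the missing idea in your approach.
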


The theorem states that, not only does \Cref{alg:main_alg} achieve 
list-decoding of the Gaussian component covariances, 
but it also clusters samples according to separation of covariances in relative Frobenius norm.
The recent result of \cite{BakDJKKV22} on robustly learning GMMs 
also involves an algorithmic component for clustering.
Their approach is based on the sum-of-squares hierarchy 
(and thus requires solving large convex programs) while, \Cref{alg:main_alg} is a purely spectral algorithm. %

We also emphasize that the list size returned by \Cref{alg:main_alg}, 
in the case of a $k$-component Gaussian mixture model, is at most $k$ --- 
instead of a weaker result such as $O(1/\alpha)$ or 
polynomial/exponentially large in $1/\alpha$.
This is possible because \Cref{alg:main_alg} keeps careful track of samples 
and  makes sure that no more than a $0.01\alpha$-fraction of samples 
is removed from each component or mis-clustered into another component.
We prove \Cref{thm:GMM} in \Cref{sec:gmm_appendix}.

\subsection{Overview of Techniques}

At a high level, our approach involves integrating robust covariance estimation techniques from  \cite{DiaKKLMS16-focs} 
with the multifilter for list-decoding of \cite{DiaKS18-list}.
For a set $S$, we use $SS^\top$ to denote the set $\{xx^\top: x \in S\}$.
\cite{DiaKKLMS16-focs} states that in order to estimate the true covariance 
$\Sigma$, it suffices to find a large subset $S$ of points with large overlap with 
the original set of good points, so that if $\Sigma' = \cov(S)$ then 
$\cov((\Sigma')^{-1/2} SS^\top (\Sigma')^{-1/2})$ has no large eigenvalues. This 
will imply that $\| (\Sigma')^{-1/2}(\Sigma - \Sigma') (\Sigma')^{-1/2}\|_\fr$ is 
not too large (\Cref{lem:stop_guarantee_new}).

As in \cite{DiaKKLMS16-focs}, our basic approach for finding such a set $S$ is 
by the iteratively repeating the following procedure: 
We begin by taking $S$ to be the set of all samples and repeatedly check 
whether or not $\cov((\Sigma')^{-1/2} SS^\top (\Sigma')^{-1/2})$ has any large eigenvalues. 
If it does not, we are done. 
If it has a large eigenvalue corresponding to a matrix $A$ 
(normalized to have unit Frobenius norm), we consider the values of 
$f(x) = \langle A,(\Sigma')^{-1/2} xx^\top (\Sigma')^{-1/2}\rangle$, 
for $x$ in $S$, and attempt to use them to find outliers. 
Unfortunately, as the inliers might comprise a minority of the sample, 
the values we get out of this formula might end-up in several reasonably large clusters, 
any one of which could plausibly contain the true samples; 
thus, not allowing us to declare any particular points 
to be outliers with any degree of certainty. 
We resolve this issue by using the multifilter approach 
of \cite{DiaKS18-list}---we either (i) iteratively remove outliers, 
or (ii) partition the data into clusters and recurse on each cluster. %
In particular, we note that if there is large variance in the $A$-direction, 
one of two things must happen, either:
(i) The substantial majority of the values of $f(x)$ lie in a single cluster, 
with some extreme outliers.
In this case, we can be confident that the extreme outliers are actual errors 
and remove them (\Cref{sec:filter-main}).
(ii) There are at least two clusters of values of $f(x)$ 
that are far apart from each other. In this case, we have to do something more complicated. 
Instead of simply removing obvious outliers, 
we replace $S$ by two subsets $S_1$ and $S_2$ 
with the guarantee that at least one of the $S_i$ contains 
almost all of the inliers. 
Na\"{i}vely, this can be done by finding a value $y$ between the two clusters 
so that very few samples have $f(x)$ close to $y$, 
and letting $S_1$ be the set of points with $f(x) < y$ 
and $S_2$ the set of points with $f(x) > y$ (\Cref{sec:divider-main}).
In either case, we will have cleaned up our set of samples 
and can recurse on each of the returned subsets of $S$. 
Iterating this technique recursively on all of the smaller subsets returned 
ensures that there is always at least one subset containing 
the majority of the inliers, and that eventually 
once it stops having too large of a covariance, we 
will return an appropriate approximation to $\Sigma$.

We want to highlight the main point of difference 
where our techniques  differ notably from \cite{DiaKS18-list}.
In order to implement the algorithm outlined above, 
one needs to have good a priori bounds for what the variance of $f(x)$ 
over the inliers ought to be.
Since $f(\cdot)$ is a quadratic polynomial, the variance of 
$f$ over the inliers, itself depends on the covariance $\Sigma$, 
which is exactly what we are trying to estimate.
This challenge of circular dependence does not appear in \cite{DiaKS18-list}: 
their goal was to estimate the unknown mean of an identity-covariance Gaussian, 
and thus it sufficed to use a linear polynomial $f$ (instead of a quadratic polynomial). 
Importantly, the covariance of a linear polynomial does not depend on the (unknown) mean 
(it depends only on the covariance, which was known in their setting). 
In order to overcome this challenge, we observe that if $S$ 
contains most of the inliers, then the covariance of $S$ cannot be too much smaller 
than the true covariance $\Sigma$. 
This allows us to find an upper bound on $\Sigma$, 
which in turn lets us upper bound the variance of $f(x)$ over the good samples (\Cref{lem:var_bound}). 
\looseness=-1

\subsection{Related Work}
We refer the reader to \cite{DiaKan22-book} for an overview of algorithmic robust statistics.
The model of list-decodable learning was introduced in \cite{BBV08} and was first 
used in the context of high-dimensional estimation in \cite{CSV17}. Regarding mean 
estimation in particular, a body of work presented algorithms with increasingly 
better guarantees under various assumptions on the clean data~\cite{DiaKS18-list,KStein17,raghavendra2020list, CheMY20, DiaKK20-multifilter, DiaKKLT21-pca, DiaKKLT22-cluster,DiaKKPP22-neurips,ZenShe22}. 
Algorithms for list-decodable covariance estimation 
were developed in the special cases of subspace estimation 
and linear regression in ~\cite{karmalkar2019list,BK20-subspace,RY20-subspace,raghavendra2020list,DasKKS22}.
On the other hand, \cite{DiaKS17, DiaKPPS21} present SQ lower bounds 
for learning Gaussian mixture models and list-decodable linear regression 
(and thus list-decodable covariance estimation), respectively.

\cite{IvkKot22} gave the first algorithm for general list-decodable covariance estimation 
that achieves non-trivial bounds in total variation distance using the powerful sum-of-squares 
hierarchy. 
Their algorithm outputs an $\exp(\poly(1/\alpha))$-sized list of matrices 
containing an $H_i$ that is close to the true $\Sigma$ in two metrics 
(i) relative Frobenius norm:   $\|H_i^{-1/2} \Sigma H_i^{-1/2}-I\|_\fr = \poly(1/\alpha)$ {\em and} 
(ii) multiplicative spectral approximation, namely 
$\poly(\alpha) \Sigma \preceq H_i \preceq \poly(1/\alpha) \Sigma$.
Their algorithm uses $d^{\poly(1/\alpha)}$ samples, 
which seems to be  necessary for efficient (statistical query) algorithms 
achieving multiplicative spectral approximation~\cite{DiaKS17,DiaKPPS21}.
In comparison, \Cref{thm:intro} uses only $\poly(d/\alpha)$ samples, 
returns a list of $O(1/\alpha)$ matrices, 
but approximates only in the relative Frobenius norm, i.e.,  
$\|H_i^{-1/2} \Sigma H_i^{-1/2}-I\|_\fr = \poly(1/\alpha)$.

\paragraph{Gaussian Mixture Models} 
Gaussian mixture models (GMM) have a long history in statistics and theoretical 
computer science.
In theoretical computer science, a systematic investigation of efficient algorithms 
for learning GMMs was initiated by \cite{Dasgupta:99}, 
and then improved upon in a series of works; see, for example, \cite{vempala2004spectral,arora2005learning,
dasgupta2007probabilistic,kumar2010clustering}.
These algorithms imposed additional assumptions on the parameters of GMM, 
for example, separation between the components.
Subsequent work \cite{MoitraValiant:10,BelkinSinha:10} gave algorithms 
without any assumptions on the components. All the aforementioned 
algorithms were sensitive to outliers.  
A recent line of work \cite{DHKK20,BakDHKKK20, Kan21,LiuMoi21,BakDJKKV22} has 
developed efficient algorithms for learning GMMs robustly.

\subsection{Organization}
In \Cref{sec:prelim}, we record necessary notation and background 
and present the set of deterministic conditions that our algorithm will use. 
\Cref{sec:single_iter_main} provides the main algorithmic ingredients and their 
analysis. Then, \Cref{sec:proof_of_main_thm} combines these ideas 
into the final recursive algorithm and proves \Cref{thm:intro} inductively.
 Intermediate technical lemmata and proofs are deferred to the Appendix.

\section{Preliminaries} \label{sec:prelim}

\subsection{Basic Notation}
We use $\Z_+$ to denote positive integers. For $n \in \Z_+$, we  denote $[n] \eqdef \{1,\ldots,n\}$ and use $\cS^{d-1}$ for the $d$-dimensional unit sphere.  
For a vector $v$, we let $\|v\|_2$ denote its $\ell_2$-norm. 
We use $I_d$ to denote the $d \times d$ identity matrix. We will drop the subscript when it is clear from the context.
For a matrix $A$, we use $\|A\|_\fr$ and $\|A\|_{\op}$ to denote the Frobenius and spectral (or operator) norms respectively. 
We say that a symmetric $d \times d$ matrix $A$ is PSD (positive semidefinite) and write $A \succeq 0$ if for all vectors $x \in \R^d$ we have that $x^\top  A x \geq 0$. 
 We denote  $\lambda_{\max}(A) := \max_{x \in \cS^{d-1}} x^\top  A x$.  
 We write $A \preceq B$ when $B-A$ is PSD. For a matrix $A \in \R^{d \times d}$, $\tr(A)$  denotes the trace of the matrix A.
 We denote by $\langle v,u \rangle$, the standard inner product between the vectors $u,v$. For matrices $U,V \in \R^{d \times d}$, we generalize the inner product as follows: $\langle U,V\rangle = \sum_{ij} U_{ij} V_{ij}$.
 For a matrix $A \in \R^{d \times d}$, we use $A^\flat$ to denote the flattened vector in $\R^{d^2}$, and for a $v \in \R^{d^2}$, we use $v^\sharp$ to denote the unique matrix $A$ such that $A^\flat = v^\sharp$.
 For a matrix $A$, we let $A^\dagger$ denote its pseudo-inverse.
 For PSD matrices $A$, we also use the notation $A^{\dagger/2}$ to denote the PSD square root of $A^\dagger$ (which itself is also PSD).
We use $\otimes$ to denote the Kronecker product. 
For a matrix $A$, we use $\ker(A)$ for the null space of $A$
and use $\proj{A}$ to denote the projection onto the range of $A$, i.e.,
$\proj{A} = AA^\dagger$~\cite[Proposition 8.1.7 (xii)]{Bernstein18}.

We use the notation $X \sim D$ to denote that a random variable $X$ is distributed according to the distribution $D$. For a random variable $X$, we use $\E[X]$ for its expectation.	
	 We use $\cN(\mu, \Sigma)$ to denote the Gaussian distribution with mean $\mu$ and covariance matrix $ \Sigma$. For a set $S$, we use $\cU(S)$ to denote the uniform distribution on $S$ and use $X \sim S$ as a shortcut for $X \sim \cU(S)$. 

We use $a\lesssim b$ to denote that there exists an absolute universal constant $C>0$ (independent of the variables or parameters on which $a$ and $b$ depend) such that $a \leq C  b$.

\subsection{Miscellaneous Facts}

In this section, we state standard facts that will be used in our analysis. 
\begin{restatable}{fact}{FactBoundRadius}
\label{fact:boundradius}
For any matrix $A \in \R^{d \times d}$,  $ \Var_{X \sim \cN(\mu,\Sigma)}[X^\top  A X] \leq 4 \|\Sigma^{1/2} A \Sigma^{1/2} \|_\fr^2 + 8 \|\Sigma^{1/2} A \mu \|_2^2$.
\end{restatable}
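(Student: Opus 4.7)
The plan is to prove the bound by decomposing $X$ into its deterministic and centred Gaussian parts and then separately handling the linear and quadratic contributions, with a short symmetrization argument absorbing the constant factors in the claim.

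First I would note that $X^\top A X = X^\top A_s X$ where $A_s \deff (A+A^\top)/2$, so we may WLOG assume $A$ is symmetric (in the paper's application $A$ in fact arises as a symmetric matrix, but the reduction is cheap regardless: $\|\Sigma^{1/2} A_s \Sigma^{1/2}\|_\fr \leq \|\Sigma^{1/2} A \Sigma^{1/2}\|_\fr$ by the triangle inequality and the transpose-invariance of $\|\cdot\|_\fr$, and one analogously controls $\|\Sigma^{1/2} A_s \mu\|_2$ up to the factor of two that accounts for the constants $4,8$ in the stated inequality). So the real task is to prove, for symmetric $A$, the sharp identity
\[
\Var_{X \sim \cN(\mu,\Sigma)}[X^\top A X] \;=\; 2\|\Sigma^{1/2} A \Sigma^{1/2}\|_\fr^2 + 4\|\Sigma^{1/2} A \mu\|_2^2 .
\]

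Next I would write $X = \mu + Z$ with $Z \sim \cN(0,\Sigma)$ and expand
\[
X^\top A X \;=\; \mu^\top A \mu \;+\; 2\mu^\top A Z \;+\; Z^\top A Z ,
\]
where the first term is a constant, $L \deff 2\mu^\top A Z$ is linear in $Z$, and $Q \deff Z^\top A Z$ is quadratic in $Z$. Thus $\Var(X^\top A X) = \Var(L) + 2\cov(L,Q) + \Var(Q)$. The key observation is that $\cov(L,Q) = \E[LQ]$ (since $\E[L] = 0$), and $LQ$ is a cubic polynomial in the centred Gaussian $Z$, so by the vanishing of odd Gaussian moments $\E[LQ] = 0$. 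This decouples the linear and quadratic parts.

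It then remains to compute $\Var(L)$ and $\Var(Q)$. For the linear term, $\Var(L) = 4 \mu^\top A \Sigma A \mu = 4 \|\Sigma^{1/2} A \mu\|_2^2$. For the quadratic term, I would invoke the standard Isserlis/Wick computation $\Var(Z^\top A Z) = 2\tr(A\Sigma A \Sigma)$ for symmetric $A$, and then use the cyclic property of the trace to rewrite this as $\tr(\Sigma^{1/2} A \Sigma^{1/2} \cdot \Sigma^{1/2} A \Sigma^{1/2}) = \|\Sigma^{1/2} A \Sigma^{1/2}\|_\fr^2$. Adding the two contributions yields the claimed bound (indeed with sharper constants $2$ and $4$, the slack being absorbed by the symmetrization step). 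The only part that requires any care is verifying that the cross term $\cov(L,Q)$ vanishes by the odd-moment argument; everything else is routine linear algebra.
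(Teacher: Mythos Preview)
Your argument for symmetric $A$ is correct and in fact sharper than the paper's: where you observe that the linear--quadratic cross term $\cov(L,Q)$ vanishes by the odd-moment property to obtain the exact identity $\Var = 2\|\Sigma^{1/2}A\Sigma^{1/2}\|_\fr^2 + 4\|\Sigma^{1/2}A\mu\|_2^2$, the paper instead applies the crude bound $\Var[L+Q]\le 2\Var[L]+2\Var[Q]$, which is precisely why the stated constants are $4,8$ rather than $2,4$.

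The symmetrization step, however, has a real gap. You assert that $\|\Sigma^{1/2}A_s\mu\|_2$ is controlled by $\|\Sigma^{1/2}A\mu\|_2$ ``up to the factor of two,'' but $\Sigma^{1/2}A_s\mu$ contains the term $\Sigma^{1/2}A^\top\mu$, which bears no general relation to $\Sigma^{1/2}A\mu$. Concretely, take $d=2$, $\Sigma=I$, $\mu=Me_1$, $A=e_1e_2^\top$. Then $X^\top AX=X_1X_2$ has variance $M^2+1$, while $\Sigma^{1/2}A\mu=0$ and $\|\Sigma^{1/2}A\Sigma^{1/2}\|_\fr^2=1$, so the right-hand side of the stated inequality equals $4$; the inequality fails for every $M>\sqrt{3}$. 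Thus the fact as written (``for any matrix $A$'') is actually false; it holds only for symmetric $A$, which is the only case the paper ever uses. (The paper's own proof also tacitly assumes symmetry when it writes the cross term as $2\mu^\top A\Sigma^{1/2}Z$ and sets $v=\Sigma^{1/2}A\mu$.) So drop the symmetrization paragraph and state the result for symmetric $A$; the remainder of your proof then goes through cleanly.
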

For completeness, we provide a proof of this statement in \Cref{sec:proof-of-boundradius}.

\begin{fact}[Azuma-Hoeffding Inequality] \label{fact:Azuma}
Let $\{ X_n \}_{n \in \N}$ be a submartingale or supermatingale with respect to a sequence $\{Y_n \}_{n \in \N}$. If for all $n=1,2,\ldots$ it holds $|X_n-X_{n-1}|\leq c_n$ almost surely, then for any $\eps>0$ and $n \in \N$
\begin{align*}
    \Pr\left[  X_n - X_t > \eps  \right] \leq  \exp\left( - \frac{2\eps^2}{\sum_{t=1}^\infty c_t }  \right) \;.
\end{align*}

\end{fact}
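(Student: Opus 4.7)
The plan is to prove this via the standard exponential moment (Chernoff-style) method applied to the martingale difference sequence, which is the canonical route to Azuma--Hoeffding. First I would reduce to the supermartingale case (the submartingale case follows by negating $X_n$), and introduce the difference sequence $D_t = X_t - X_{t-1}$, so that $\E[D_t \mid Y_1,\ldots,Y_{t-1}] \leq 0$ and $|D_t| \leq c_t$ almost surely. For any $\lambda > 0$, Markov's inequality applied to the exponential yields
\[
\Pr[X_n - X_0 > \eps] \;\leq\; e^{-\lambda \eps}\, \E\!\left[\exp\!\left(\lambda \sum_{t=1}^n D_t\right)\right].
\]
Thus everything reduces to controlling the moment generating function of the sum of bounded, conditionally mean-nonpositive increments.

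The core technical ingredient is Hoeffding's lemma: for any random variable $Y$ with $\E[Y] \leq 0$ and $|Y| \leq c$ a.s., one has $\E[e^{\lambda Y}] \leq \exp(\lambda^2 c^2 / 2)$. I would prove this by a short convexity argument, writing $e^{\lambda y}$ for $y \in [-c,c]$ as dominated by the linear interpolation between its endpoints, then expanding the resulting $\log$ in a Taylor series (the symmetric Bernoulli random variable on $\{-c,+c\}$ saturates the bound up to constants). Applying this lemma conditionally on $\mathcal{F}_{t-1} = \sigma(Y_1,\ldots,Y_{t-1})$ gives $\E[e^{\lambda D_t} \mid \mathcal{F}_{t-1}] \leq e^{\lambda^2 c_t^2 / 2}$ almost surely.

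Iterating via the tower property peels off the increments one at a time:
\[
\E\!\left[\exp\!\left(\lambda \sum_{t=1}^n D_t\right)\right] \;=\; \E\!\left[\exp\!\left(\lambda \sum_{t=1}^{n-1} D_t\right) \E[e^{\lambda D_n} \mid \mathcal{F}_{n-1}]\right] \;\leq\; \cdots \;\leq\; \exp\!\left(\tfrac{\lambda^2}{2} \sum_{t=1}^n c_t^2\right).
\]
Substituting back and optimizing over $\lambda$ with the choice $\lambda = \eps / \sum_{t=1}^n c_t^2$ gives the bound $\exp\!\left(-\eps^2/(2\sum_{t=1}^n c_t^2)\right)$, which is the standard form of Azuma--Hoeffding (and matches the statement up to what appear to be typographical issues in the exponent, namely $\sum c_t$ versus $\sum c_t^2$ and the starting index $X_t$ versus $X_0$).

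The only nontrivial step is Hoeffding's lemma; the rest is bookkeeping via the tower property and a one-variable optimization. Since the result is entirely classical, the paper (as typical in this literature) states it only as a tool and defers to standard references rather than including a proof.
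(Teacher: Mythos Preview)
Your proposal is correct and follows the canonical Chernoff--Hoeffding route. The paper itself does not prove this fact at all: it is stated in the ``Miscellaneous Facts'' subsection purely as a tool and never revisited, exactly as you anticipated in your final paragraph. So there is no alternative approach in the paper to compare against; your argument is the standard one, and your observation about the typographical issues in the displayed bound (the exponent should involve $\sum_{t=1}^n c_t^2$ rather than $\sum_{t=1}^\infty c_t$, and the left-hand side should read $X_n - X_0$) is also on point.
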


\begin{fact}[{See, e.g., \cite[Proposition 11.10.34]{Bernstein18}}]
\label{fact:FrobeniusSubmult}
Let $R$ and $M$ be two square matrices, then
$\|R M\|_\fr \leq \min\left((\|R\|_\fr \|M\|_\op\,,\|R\|_\op \|M\|_\fr\right) $.
\end{fact}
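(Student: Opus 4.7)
The plan is to establish the two inequalities comprising the minimum separately; since $\|RM\|_\fr$ is bounded by each, it is bounded by their minimum. The strategy in each case is identical: decompose the matrix on one side into its columns, use the fact that the squared Frobenius norm equals the sum of squared $\ell_2$-norms of those columns, and then apply the defining bound $\|Av\|_2 \le \|A\|_\op \|v\|_2$ of the operator norm columnwise.

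Concretely, to prove $\|RM\|_\fr \le \|R\|_\op \|M\|_\fr$, I would write $M = [m_1 \mid \cdots \mid m_d]$ and observe that $RM$ has columns $Rm_i$, so that
\[
\|RM\|_\fr^2 \;=\; \sum_{i=1}^d \|R m_i\|_2^2 \;\le\; \|R\|_\op^2 \sum_{i=1}^d \|m_i\|_2^2 \;=\; \|R\|_\op^2 \|M\|_\fr^2.
\]
For the companion bound $\|RM\|_\fr \le \|R\|_\fr \|M\|_\op$, rather than duplicating an analogous row-decomposition of $R$, I would invoke transpose invariance of the Frobenius norm to write $\|RM\|_\fr = \|(RM)^\top\|_\fr = \|M^\top R^\top\|_\fr$, then apply the inequality just proved to the pair $(M^\top, R^\top)$, noting $\|M^\top\|_\op = \|M\|_\op$ and $\|R^\top\|_\fr = \|R\|_\fr$.

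There is no genuine obstacle here: the proof is essentially a one-liner once one recognizes that the Frobenius norm is the $\ell_2$-norm of the vector of column norms, and that the operator norm controls exactly this quantity columnwise. The only stylistic choice is whether to prove both inequalities from scratch or to reduce the second to the first via transposition; the transpose route avoids repetition. The statement is written for square matrices, but the argument goes through unchanged for any compatible rectangular pair.
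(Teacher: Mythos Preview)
Your proof is correct. The paper does not actually prove this fact; it is stated with a citation to \cite[Proposition 11.10.34]{Bernstein18} and used without proof. Your column-decomposition argument combined with the transpose trick is the standard elementary proof and is entirely sound.
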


\begin{fact}[Frobenius norm and projection matrices]
\label{fact:frob-projection}
    Let $A$ be a symmetric matrix.
    If $L_1$ and $L_2$ are two PSD projection matrices satisfying $L_1 \preceq L_2$, then $\|L_1 A L_1^\top\|_\fr \leq \|L_2 A L_2^\top\|_\fr$.
\end{fact}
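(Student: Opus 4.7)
The plan is to reduce the claim to Fact~\ref{fact:FrobeniusSubmult} by exploiting the two standard properties of orthogonal projection matrices: (i) they are idempotent and symmetric, so in particular $L_i A L_i^\top = L_i A L_i$, and (ii) they have operator norm at most $1$. The only nontrivial ingredient is to translate the PSD order $L_1 \preceq L_2$ into a usable algebraic relation.

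The key intermediate step I would establish first is that, for two PSD projections, $L_1 \preceq L_2$ is equivalent to $\mathrm{range}(L_1) \subseteq \mathrm{range}(L_2)$, which in turn is equivalent to the algebraic identity
\[
L_2 L_1 = L_1 = L_1 L_2.
\]
The forward implication follows by a short orthogonal decomposition argument: if there existed $y \in \mathrm{range}(L_1) \setminus \mathrm{range}(L_2)$, splitting $y$ along $\mathrm{range}(L_2)$ and its orthogonal complement would give $\|L_2 y\|_2^2 < \|y\|_2^2 = \|L_1 y\|_2^2$, contradicting $L_1 \preceq L_2$. The reverse implication is immediate by writing $\|L_2 x\|_2^2 = \|L_1 x\|_2^2 + \|L_2(I-L_1)x\|_2^2$ after using $L_2 L_1 = L_1$.

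Once this identity is in hand, the rest is a one-line computation. Using $L_1 = L_1 L_2 = L_2 L_1$ and symmetry of $L_1$:
\[
L_1 A L_1^\top \;=\; L_1 A L_1 \;=\; (L_1 L_2) A (L_2 L_1) \;=\; L_1 \bigl( L_2 A L_2 \bigr) L_1.
\]
Now I would apply Fact~\ref{fact:FrobeniusSubmult} twice, with $M := L_2 A L_2$, and use $\|L_1\|_\op \leq 1$:
\[
\|L_1 M L_1\|_\fr \;\leq\; \|L_1\|_\op \, \|M L_1\|_\fr \;\leq\; \|L_1\|_\op^2 \, \|M\|_\fr \;\leq\; \|L_2 A L_2\|_\fr = \|L_2 A L_2^\top\|_\fr,
\]
completing the argument.

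I do not anticipate a genuine obstacle here; the only subtlety is the passage from the PSD inequality $L_1 \preceq L_2$ to the multiplicative identity $L_2 L_1 = L_1$, which relies crucially on $L_1$ and $L_2$ being \emph{projection} matrices (so their eigenvalues lie in $\{0,1\}$) rather than arbitrary PSD matrices. Everything else is a direct application of submultiplicativity of the Frobenius norm together with the fact that orthogonal projections are contractions.
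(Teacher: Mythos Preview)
Your proof is correct. The paper states this result as a background fact without proof, so there is no argument to compare against; your reduction via the identity $L_2 L_1 = L_1$ (which you correctly derive from $L_1 \preceq L_2$ for orthogonal projections) together with Fact~\ref{fact:FrobeniusSubmult} and $\|L_1\|_\op \le 1$ is exactly the natural way to justify it.
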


\subsection{Deterministic Conditions}
\label{sec:det-conditions}
Our algorithm will rely on the uncorrupted inliers satisfying a set of properties, similar to the ``stability conditions'' from \cite{DiaKKLMS16-focs, DK19-survey}.
Intuitively, stability conditions are concentration properties for sets of samples, but with the added requirement that every large subset of the samples also satisfies these properties.

\begin{definition}[$(\eta,\eps)$-Stable Set]\label{def:stability}
Let $D$ be a distribution with mean $\mu$ and covariance $\Sigma$. We say a set of points $A \subset \R^d$ is  $(\eta,\eps)$-stable with respect to $D$, if for any subset $A' \subseteq A$ with $|A'| \geq (1 - \epsilon) |A|$, the following hold:
\begin{enumerate}[label=(L.\arabic*)]
 \item \label[cond]{it:mean}   For every $v \in \R^d$, $ \big|(1/|A'|)\sum_{x \in A'}v^\top (x  - \mu)  \big| \leq 0.1 \sqrt{(v^\top  \Sigma  v)}$ .
    \item  \label[cond]{it:cov} For every symmetric $U {\in} \R^{d \times d}$, $\left| \left\langle \frac{1}{|A'|} \sum_{x \in A'}(x  - \mu )(x   - \mu )^\top  - \Sigma ,   U\right\rangle \right| \leq  0.1 \left\| \Sigma^{1/2} U \Sigma^{1/2} \right\|_\fr\,.$
        \item \label[cond]{it:tails} For every even degree-$2$ polynomial $p$,  $\pr\limits_{X \sim A'}\Big[ \big|p(X) {-} \hspace{-6pt}\E\limits_{X \sim D}[p(X)]\big| {>} 10 \sqrt{\Var\limits_{Y \sim D}[p(Y)]}  \ln\left(\frac{2}{\eta}\right)\Big] {\leq} \eta$.
        \item \label[cond]{it:var} For all even degree-$2$ polynomials $p$, we have that $\Var_{X \sim A'}[p(X)] \leq 4 \Var_{X \sim D}[p(X)]$.
        \item \label[cond]{it:rank} The null space of second moment matrix of $A'$ is contained in the null space of $\Sigma$, i.e., $\ker\left(\sum_{x \in A'} xx^\top\right) \subseteq \ker(\Sigma)$.
        
\end{enumerate}
\end{definition}

It is a standard fact that a 
Gaussian dataset is ``stable'' with high probability~\cite{DiaKKLMS16-focs}; formally, we have \Cref{lem:DeterministicCond}, proved in \Cref{sec:stability-appendix}.
As we note in \Cref{remark:stability_note} in \Cref{sec:stability-appendix}, \Cref{lem:DeterministicCond} can be extended to a variety of distributions.

\begin{restatable}[Deterministic Conditions Hold with High Probability]{lemma}{DetConditions}
\label{lem:DeterministicCond}
For a sufficiently small positive constant $\eps_0$ and a sufficiently large absolute constant $C$, a set of $m > C d^2 \log^5(d/(\eta \delta) )/\eta^2$ samples from $ \cN(\mu,\Sigma)$, with probability $1-\delta$,  is  $(\eta,\eps)$-stable set with respect to $\mu,\Sigma$ for all $\eps \leq \eps_0$.
\end{restatable}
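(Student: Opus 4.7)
The plan is to reduce the general Gaussian case to the standard isotropic case via an affine transformation, and then verify each of the five stability conditions using standard concentration inequalities, covering arguments over test functions, and a worst-case-removal principle to handle the uniformity over large subsets.

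First, I would apply the change of variables $Y = \Sigma^{\dagger/2}(X - \mu)$, which maps the samples to i.i.d.\ draws from $\cN(0, \proj{\Sigma})$; equivalently, the $Y$'s are i.i.d.\ standard Gaussian in the $r$-dimensional range of $\Sigma$, with $r = \rank(\Sigma) \le d$. Each of the five conditions pulls back to the analogous condition for $\cN(0, I_r)$ under this reduction, so it suffices to work in the isotropic case. In particular, \Cref{it:rank} becomes automatic once $m > r$, since the sample matrix then has full row rank almost surely.

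For \Cref{it:mean} and \Cref{it:cov}, I would invoke the classical stability argument for the mean and covariance of the standard Gaussian. On the full set $A$, matrix concentration (e.g.\ matrix Bernstein or Gaussian rotation-invariance) yields $\|\tfrac{1}{|A|}\sum_{x \in A} x\|_2 \lesssim \sqrt{d/m}$ and $\|\tfrac{1}{|A|}\sum_{x \in A} xx^\top - I\|_\op \lesssim \sqrt{d^2/m}$ with probability at least $1-\delta/5$, each of which is well below $\eps_0$ for the stated sample complexity. To extend uniformly to all subsets $A' \subseteq A$ with $|A'| \ge (1-\eps)|A|$, I would use the worst-case-removal argument: along any unit direction $v$, the largest shift achievable by removing an $\eps$-fraction of $A$ is at most $\sqrt{\eps}$ times the empirical second moment of $v^\top X$, which is $O(1)$ by the control on $A$; an analogous bound for Frobenius-norm test matrices handles \Cref{it:cov}.

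For the polynomial conditions \Cref{it:tails} and \Cref{it:var}, I would identify each even degree-$2$ polynomial with a point in the $O(d^2)$-dimensional space of symmetric matrices plus a constant. For a fixed $p$ with $\Var_D[p] = 1$, the Hanson--Wright inequality gives the sub-exponential tail $\pr[|p(X) - \E[p(X)]| > t] \le 2\exp(-c\min(t, t^2))$. I would union-bound this tail over an $(\eta/\poly(d))$-net of the unit Frobenius sphere of such polynomials, which has $(\poly(d)/\eta)^{O(d^2)}$ elements, together with a Lipschitz step at the net resolution, to get \Cref{it:tails} uniformly on the full set $A$. The variance bound \Cref{it:var} follows by similar uniform control of empirical second moments of $p(X)$ on $A$. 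The $\log^5$ factors in the sample complexity absorb the covering entropy and the logarithmic tail factor of Hanson--Wright.

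The main obstacle is obtaining simultaneous uniformity over both subsets $A'$ and test functions $v, U, p$. After uniform control on the full set $A$ as above, this is resolved by the same worst-case-removal principle applied to quadratics: by \Cref{it:tails} on $A$, no $\eps$-fraction of points has $|p(X) - \E[p(X)]|$ exceeding $O(\sqrt{\Var_D[p]}\log(1/\eta))$, so removing any such fraction can shift the empirical mean or variance of $p(X)$ by at most an $O(\eps \log(1/\eta))$ multiple of $\sqrt{\Var_D[p]}$. This converts uniform concentration on $A$ into uniform stability on all large subsets simultaneously for all five conditions.
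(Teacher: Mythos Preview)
Your proposal is correct in outline and would yield the lemma, but it takes a genuinely different route from the paper's proof. The paper does not build the argument from scratch: after the same affine reduction to the isotropic case, it directly invokes existing stability results --- specifically \cite[Corollaries 2.1.9 and 2.1.13]{li2018principled} for \Cref{it:mean} and \Cref{it:cov}, and \cite[Lemma 5.17]{DiaKKLMS16-focs} for \Cref{it:tails} and \Cref{it:var} --- which already give the uniform-over-large-subsets guarantees at the stated sample complexity, and then handles \Cref{it:rank} by a short inductive argument. Your approach instead spells out the underlying machinery (matrix concentration on the full sample, Hanson--Wright plus an $\eta$-net over the $O(d^2)$-dimensional space of quadratics, and a worst-case-removal step to pass from the full set to all large subsets). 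This buys self-containment and makes the origin of the $d^2/\eta^2$ and polylogarithmic factors transparent, at the cost of more work; the paper's approach buys brevity by outsourcing exactly these steps. Two small points to tighten in your sketch: for \Cref{it:tails} you should make explicit the transfer from the population tail (Hanson--Wright) to the empirical tail (a Bernoulli concentration per net point) before union-bounding over the net; and in your final worst-case-removal bound, the shift in the empirical \emph{variance} of $p(X)$ from deleting an $\eps$-fraction with values bounded by $O(\sqrt{\Var_D[p]}\log(1/\eta))$ scales like $O(\eps\,\Var_D[p]\log^2(1/\eta))$, not $O(\eps\sqrt{\Var_D[p]}\log(1/\eta))$.
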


\subsection{Facts Regarding Stability} \label{sec:stability_lemma_proof}

The following corollaries will be useful later on. 
In particular, since our algorithm will perform transformations of the dataset, we will need \Cref{cl:stability_implication_cov} to argue about the stability properties of these sets. Moreover, the next lemma (\Cref{lem:from_empirical_to_true}) will be used to extract the final Frobenius norm guarantee of our theorem given the first bullet of that theorem.
The proofs are deferred to \Cref{sec:proofs-stability}.

\begin{restatable}[Stability of Transformed Sets]{corollary}{StabilityImplications}
 \label{cl:stability_implication_cov}
 Let $D$ be a distribution over $\R^d$ with mean $\mu$ and covariance $\Sigma$.
 Let $A$ be  a set of points in $\R^d$ that satisfies \Cref{it:mean,it:cov} of \Cref{def:stability} of $(\eta, \eps)$-stability with respect to $D$.
Then we have that for any 
symmetric
transformation $P \in \R^{d \times d}$ and every $A' \subset A$ with $|A'| \geq (1-\eps) |A|$, the following holds:
\begin{enumerate}[label=(\roman*)]
    \item \label{it:cond1} $\left\| P \left( \frac{1}{|A'|}\sum_{x \in A'}x  -  \mu   \right)  \right\|_2 \leq   0.1 \sqrt{\|P \Sigma  P \|_\op}$.
    \item $\left\| P \left( \frac{1}{|A'|} \sum_{x \in A'}(x  - \mu )(x  - \mu )^\top  - \Sigma   \right)P \right\|_\fr \leq  0.1 \|P \Sigma  P \|_{\op}$. \label{it:cond2}
\end{enumerate}
\end{restatable}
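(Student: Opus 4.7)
The plan is to reduce both bounds to their ``unstransformed'' counterparts (L.1) and (L.2) by a variational argument, with the transformation $P$ absorbed on the test vector / test matrix side. Since $P$ is symmetric, everything works out cleanly.

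For part (i), I would write
$\bigl\| P(\bar{\mu}_{A'}-\mu)\bigr\|_2 = \sup_{u\in\cS^{d-1}} u^\top P(\bar{\mu}_{A'}-\mu) = \sup_{u\in\cS^{d-1}} (Pu)^\top(\bar{\mu}_{A'}-\mu),$
where $\bar{\mu}_{A'}=\frac{1}{|A'|}\sum_{x\in A'}x$. Applying \Cref{it:mean} with the test vector $v=Pu$, and then using $v^\top\Sigma v = u^\top P\Sigma P u \leq \|P\Sigma P\|_\op$ for $\|u\|_2=1$, immediately gives the claimed $0.1\sqrt{\|P\Sigma P\|_\op}$ bound.

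For part (ii), let $M \coloneqq \frac{1}{|A'|}\sum_{x\in A'}(x-\mu)(x-\mu)^\top - \Sigma$, which is symmetric, so $PMP$ is symmetric as well. I would use the variational formula
\[\|PMP\|_\fr = \sup_{\substack{U\text{ symmetric}\\ \|U\|_\fr=1}} \langle PMP, U\rangle = \sup_{\substack{U\text{ symmetric}\\ \|U\|_\fr=1}} \langle M, PUP\rangle.\]
Since $PUP$ is symmetric whenever $U$ and $P$ are, I can apply \Cref{it:cov} with the test matrix $PUP$ to get $|\langle M, PUP\rangle| \leq 0.1\,\|\Sigma^{1/2} PUP\Sigma^{1/2}\|_\fr$. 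The remaining task is to bound this Frobenius norm by $\|P\Sigma P\|_\op$ uniformly in unit-Frobenius-norm $U$.

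This last step is where the only ``real'' computation sits, but it reduces to two applications of \Cref{fact:FrobeniusSubmult}: splitting as $(\Sigma^{1/2}P)\,U\,(P\Sigma^{1/2})$ and pulling the two outer matrices out in operator norm yields
$\|\Sigma^{1/2} PUP\Sigma^{1/2}\|_\fr \leq \|\Sigma^{1/2}P\|_\op\,\|U\|_\fr\,\|P\Sigma^{1/2}\|_\op.$
Using symmetry of $P$, one has $(\Sigma^{1/2}P)^\top = P\Sigma^{1/2}$, so $\|\Sigma^{1/2}P\|_\op = \|P\Sigma^{1/2}\|_\op$ and $\|P\Sigma^{1/2}\|_\op^2 = \|(P\Sigma^{1/2})(P\Sigma^{1/2})^\top\|_\op = \|P\Sigma P\|_\op$. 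Combining with $\|U\|_\fr=1$ gives $\|\Sigma^{1/2} PUP\Sigma^{1/2}\|_\fr \leq \|P\Sigma P\|_\op$, which closes out the proof with the desired $0.1\|P\Sigma P\|_\op$ bound. I do not foresee any real obstacle; the only point requiring care is ensuring the test matrix fed to \Cref{it:cov} is symmetric, which is automatic because $P$ is.
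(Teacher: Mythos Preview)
Your proposal is correct and follows essentially the same approach as the paper: a variational characterization of the norm, absorbing $P$ into the test vector/matrix, applying \Cref{it:mean} or \Cref{it:cov}, and then bounding $\|\Sigma^{1/2}PUP\Sigma^{1/2}\|_\fr$ by $\|P\Sigma P\|_\op$. The only cosmetic difference is in this last step: the paper uses the trace identity $\|\Sigma^{1/2}PUP\Sigma^{1/2}\|_\fr^2 = \tr(UP\Sigma PUP\Sigma P) = \|UP\Sigma P\|_\fr^2$ and then a single application of \Cref{fact:FrobeniusSubmult}, whereas you apply \Cref{fact:FrobeniusSubmult} twice to peel off the outer factors in operator norm; both routes are equally clean.
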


 The following lemma  will be needed to prove our certificate lemma  to transfer the final guarantees of our algorithm from the empirical covariance of the sample set to the actual covariance of the Gaussians.

\begin{restatable}{lemma}{FrobGuarantee}\label{lem:from_empirical_to_true}
Let $A$ be a set of points in $\R^d$ satisfying \Cref{it:mean,it:cov} of $(\eta,\eps)$-stability (\Cref{def:stability}) with respect to $\mu,\Sigma$. 
 Let $A' \subseteq A$ such that $|A'| \geq (1-\eps) |A|$.
 If $P, L \in \R^{d \times d}$ are symmetric matrices with the property that $\| \cov_{X \sim A'}[P X]- L \|_\fr \leq r$, then
\begin{align*}
    \left\| P \Sigma P  - L \right\|_\fr = O(r+ \|L\|_\op)\;.
\end{align*}
\end{restatable}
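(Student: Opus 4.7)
Write $\hat\mu := \frac{1}{|A'|}\sum_{x \in A'} x$ and $\hat\Sigma := \frac{1}{|A'|}\sum_{x \in A'}(x-\hat\mu)(x-\hat\mu)^\top$, so that $\cov_{X\sim A'}[PX] = P\hat\Sigma P$ and the hypothesis reads $\|P\hat\Sigma P - L\|_\fr \le r$. By the triangle inequality
\[
\|P\Sigma P - L\|_\fr \;\le\; \|P(\Sigma-\hat\Sigma)P\|_\fr + r,
\]
so it suffices to bound $\|P(\Sigma-\hat\Sigma)P\|_\fr$ in terms of $\|P\Sigma P\|_\op$, and then close the loop against $\|L\|_\op$.

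First I would expand around the true mean: $\hat\Sigma = \frac{1}{|A'|}\sum_{x\in A'}(x-\mu)(x-\mu)^\top - (\hat\mu-\mu)(\hat\mu-\mu)^\top$. Plugging into $P(\Sigma-\hat\Sigma)P$ and applying the triangle inequality, part (ii) of \Cref{cl:stability_implication_cov} controls the centered second-moment deviation by $0.1\,\|P\Sigma P\|_\op$, while part (i) gives $\|P(\hat\mu-\mu)\|_2 \le 0.1\sqrt{\|P\Sigma P\|_\op}$, hence $\|P(\hat\mu-\mu)(\hat\mu-\mu)^\top P\|_\fr = \|P(\hat\mu-\mu)\|_2^2 \le 0.01\,\|P\Sigma P\|_\op$ (using that $P$ is symmetric so the rank-one term stays symmetric and its Frobenius norm equals its operator norm). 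Combining,
\[
\|P(\Sigma-\hat\Sigma)P\|_\fr \;\le\; 0.11\,\|P\Sigma P\|_\op.
\]

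The final step is to absorb the $\|P\Sigma P\|_\op$ factor. Using $\|\cdot\|_\op \le \|\cdot\|_\fr$ and the triangle inequality,
\[
\|P\Sigma P\|_\op \;\le\; \|P\Sigma P - L\|_\op + \|L\|_\op \;\le\; \|P\Sigma P - L\|_\fr + \|L\|_\op.
\]
Substituting into the chain above yields $\|P\Sigma P - L\|_\fr \le r + 0.11\,\|P\Sigma P - L\|_\fr + 0.11\,\|L\|_\op$, which rearranges to the desired $\|P\Sigma P - L\|_\fr = O(r + \|L\|_\op)$.

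\textbf{Main obstacle.} The only subtle point is that the natural bound coming from stability is in terms of $\|P\Sigma P\|_\op$, not a fixed a-priori quantity; one has to break this apparent circularity by writing $\|P\Sigma P\|_\op \le \|P\Sigma P - L\|_\fr + \|L\|_\op$ and then absorbing the small multiple of $\|P\Sigma P - L\|_\fr$ back into the left-hand side. This is precisely why the statement produces $\|L\|_\op$ (rather than some quantity depending only on $\Sigma$) as the additive term, and why the stability constants in \Cref{def:stability} being small absolute constants (here $0.1$) is used in an essential way to make the rearrangement valid.
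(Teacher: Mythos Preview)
Your proposal is correct and follows essentially the same approach as the paper: bound $\|P(\Sigma-\hat\Sigma)P\|_\fr$ by a small multiple of $\|P\Sigma P\|_\op$ via \Cref{cl:stability_implication_cov}, then close the self-referential loop by a rearrangement. The only cosmetic difference is that the paper rearranges on $\|P\Sigma P\|_\op$ (bounding it by $\|P\hat\Sigma P - L\|_\op + \|L\|_\op + \|P(\hat\Sigma-\Sigma)P\|_\op$ and solving), whereas you rearrange directly on $\|P\Sigma P - L\|_\fr$; your version is marginally cleaner, and you also track the constant more carefully ($0.11$ rather than the paper's looser $0.2$).
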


\section{Analysis of a Single Recursive Call of the Algorithm}
\label{sec:single_iter_main}

Our algorithm, \Cref{alg:main_alg}, filters and splits samples into multiple sets recursively, until we can certify that the empirical second moment matrix of the ``current dataset" is suitable to be included in the returned list of covariances. As a reference point, we define the notations and assumptions necessary to analyze each recursive call below. However, before moving to the formal analysis we will first give an informal overview of the algorithm's steps and the high-level ideas behind them.

\begin{mdframed}
  \begin{restatable}{assumption}{ASSUMPTIONMAIN}\label{def:notation-new}
\text{(Assumptions and notations for a single recursive call of the algorithm).}
\begin{itemize}[leftmargin=*]
\item $S = \{x_i\}_{i=1}^{n}$ is a set of $n$ uncontaminated samples (prior to the adversary's edits), which is assumed to be $(\eta,2\eps_0)$-stable
with respect to the inlier distribution $D$ having mean and covariance $\mu,\Sigma$ (c.f.\ \Cref{def:stability}). We assume $\eta \le 0.001$, $\eps_0  =  0.01$, and $\Var_{X \sim D}[X^\top A X] \leq \Cvar (\|\Sigma^{1/2} A \Sigma^{1/2}\|_F^2 + \| \Sigma^{1/2} A \mu \|_2^2)$ for all symmetric $d \times d$ matrices  $A$ and a  constant $C_1$.

\item $T$ is the input set to the current recursive call of the algorithm (after the adversarial corruptions), which satisfies $|S \cap T| \geq (1-2\eps_0)|S|$
and $|T| \leq (1/\alpha)|S|$. These two imply that $|T \cap S| \geq \alpha |T|(1-2\eps_0)$ and $|T| \geq |S|(1-2\eps_0)$.
\item We denote $H = \E_{X \sim T}\left[XX^\top \right]$.
\item We denote by $\oS,\oT$ the versions of $S$ and $T$ normalized by $H^{\dagger/2}$, that is, $\oS = \{ H^{\dagger/2}x : x \in S \}$ and $\oT = \{ H^{\dagger/2} x : x \in T\}$.
We will use the notation $\ox$ for elements in $\oS$ and $\oT$, and $x$ for elements in $S$ and $T$.
Similarly, we will use the notation $\oX$ for random variables with support in $\oS$ (or $\oT$ or similar ``tilde" versions of the sets).
\item The mean and covariance of the inlier distribution $D$ after transformation with $H^{\dagger/2}$ are denoted by $\omu  := H^{\dagger/2} \mu$, $\oSigma  := H^{\dagger/2} \Sigma H^{\dagger/2}$. We denote the empirical mean and covariance of the transformed inliers in $T$ by $\hat{\mu}:=\E_{\oX \sim \oS \cap \oT}[\oX]$ and $\hat{\Sigma}:= \cov_{\oX \sim \oS \cap \oT}[\oX ]$.
\end{itemize}
\end{restatable}
\end{mdframed}

The following lemma (\Cref{cl:meancov_after_transf_new}) shows that, assuming the input data set is a corrupted version of a stable inlier set, then after normalizing the data by its empirical second moment, various versions of means and covariances have bounded norm.
The proof is mostly routine calculations, and can be found in \Cref{sec:appendix_normalization}.

\begin{restatable}[Normalization]{lemma}{LemNormalization}
 \label{cl:meancov_after_transf_new}
Make \Cref{def:notation-new} and recall  the notations $\omu,\oSigma,\hat{\mu},\hat{\Sigma}$ that were defined in \Cref{def:notation-new}.  We have that:
\begin{enumerate}[label=(\roman*)]
    \item $\| \hat{\Sigma} \|_\op \leq 2/\alpha$. \label{it:emp_cov}
    \item $\| \hat{\mu} \|_2 \leq \sqrt{2/\alpha}$. \label{it:emp_mean}
    \item $\| \oSigma \|_\op \leq 3/\alpha$ \label{it:true_cov}
    \item $\| \omu \|_2 \leq \sqrt{3/\alpha}$  \label{it:true_mean}
    \item $\| \E_{\oX \sim \oS}[\oX ] \|_2 \leq 2/\sqrt{\alpha}$ and $\|\cov_{\oX \sim \oS}[\oX] \|_\op \leq 4/\alpha$. \label{it:inliers_bounds}
    \item For every matrix $A$ with $\|A\|_\fr\leq 1$, $| \E_{\oX \sim \oS} [\oX^\top A \oX  ] - \E_{X \sim D} [ (H^{\dagger/2 }X)^\top A (H^{\dagger/2 }X) ]| <1/\alpha$. \label{it:quadratic}
\end{enumerate}
\end{restatable}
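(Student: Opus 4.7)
The plan is to prove the six bounds in a layered order: (i) and (ii) follow from the definition of $H$ alone; (iii) and (iv) combine (i)--(ii) with the stability corollary applied to $A' = S \cap T$; (v) and (vi) apply the stability corollary directly to $A' = S$. The only nontrivial twist is a mild circularity in (iii), which I will resolve by a self-bounding identity.

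First I would observe that by construction $\E_{\oX \sim \oT}[\oX \oX^\top] = H^{\dagger/2} H H^{\dagger/2} = \proj{H}$. Since $\oS \cap \oT \subseteq \oT$ and $|S \cap T| \geq \alpha(1-2\eps_0)|T|$, restricting the average to the subset inflates each entry by at most $|T|/|S \cap T| \leq 1/(\alpha(1-2\eps_0)) \leq 2/\alpha$ when $\eps_0 = 0.01$, giving $\E_{\oX \sim \oS \cap \oT}[\oX \oX^\top] \preceq (2/\alpha)\,\proj{H}$. Since $\hat\Sigma \preceq \E_{\oX \sim \oS \cap \oT}[\oX \oX^\top]$, this immediately yields (i); and (ii) follows by Jensen's inequality, since $\|\hat\mu\|_2^2 = \max_v (v^\top \hat\mu)^2 \leq \max_v v^\top \E[\oX \oX^\top] v \leq 2/\alpha$.

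Next, for (iii) and (iv), I would apply \Cref{cl:stability_implication_cov} with $P = H^{\dagger/2}$ and $A' = S \cap T$ (valid because the inlier set is assumed $(\eta, 2\eps_0)$-stable). Writing $\hat\Sigma^{\mu}_{S \cap T}$ for the empirical second moment of $S \cap T$ centered at the true mean $\mu$, the standard expansion gives the key self-bounding identity
\[
 H^{\dagger/2} \hat\Sigma^{\mu}_{S \cap T} H^{\dagger/2} \;=\; \hat\Sigma + (\hat\mu - \omu)(\hat\mu - \omu)^\top.
\]
\Cref{cl:stability_implication_cov}\ref{it:cond2} bounds the left-hand side in operator norm within $0.1\|\oSigma\|_\op$ of $\oSigma$, while \Cref{cl:stability_implication_cov}\ref{it:cond1} gives $\|\hat\mu - \omu\|_2 \leq 0.1 \sqrt{\|\oSigma\|_\op}$. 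Substituting the identity and using (i), the chain $0.9\|\oSigma\|_\op \leq \|\hat\Sigma\|_\op + \|\hat\mu - \omu\|_2^2 \leq 2/\alpha + 0.01\|\oSigma\|_\op$ rearranges to (iii). Then (iv) is immediate from triangle inequality: $\|\omu\|_2 \leq \|\hat\mu\|_2 + \|\hat\mu - \omu\|_2 \leq \sqrt{2/\alpha} + 0.1\sqrt{3/\alpha} < \sqrt{3/\alpha}$.

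For (v), I would reapply \Cref{cl:stability_implication_cov} with $A' = S$, and combine with (iii) and (iv) using triangle inequalities and the decomposition of the centered empirical second moment. For (vi), the plan is to first symmetrize $A$ (the quadratic form depends only on $(A + A^\top)/2$, whose Frobenius norm does not exceed $\|A\|_\fr$), set $B = H^{\dagger/2} A H^{\dagger/2}$, and expand $X^\top B X = (X-\mu)^\top B (X-\mu) + 2\mu^\top B X - \mu^\top B \mu$. The difference $\E_{X \sim S}[X^\top B X] - \E_{X \sim D}[X^\top B X]$ then splits into $\langle \hat\Sigma^\mu_S - \Sigma, B\rangle$, which stability \Cref{it:cov} bounds by $0.1\|\Sigma^{1/2} B \Sigma^{1/2}\|_\fr$ and in turn by $0.1\|H^{\dagger/2}\Sigma^{1/2}\|_\op^2 \|A\|_\fr = 0.1 \|\oSigma\|_\op$ via \Cref{fact:FrobeniusSubmult}; and $2\mu^\top B(\hat\mu_S - \mu)$, which Cauchy-Schwarz together with \Cref{cl:stability_implication_cov}\ref{it:cond1} bounds by $2\|\omu\|_2 \|A\|_\op \|H^{\dagger/2}(\hat\mu_S - \mu)\|_2 = O(\|\omu\|_2 \sqrt{\|\oSigma\|_\op})$. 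Plugging in (iii) and (iv), both contributions are $O(1/\alpha)$ with small constants that sum to below $1/\alpha$. The only real obstacle throughout is the mild circularity in (iii), which the self-bounding identity displayed above dissolves cleanly.
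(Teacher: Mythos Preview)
Your proposal is correct and follows essentially the same approach as the paper: parts (i)--(ii) use the normalization $\E_{\oX\sim\oT}[\oX\oX^\top]=\proj{H}$ together with $|S\cap T|\ge\alpha(1-2\eps_0)|T|$; parts (iii)--(iv) combine \Cref{cl:stability_implication_cov} at $A'=S\cap T$ with the self-bounding identity $\E_{\oX\sim\oS\cap\oT}[(\oX-\omu)(\oX-\omu)^\top]=\hat\Sigma+(\hat\mu-\omu)(\hat\mu-\omu)^\top$, exactly as the paper does; and parts (v)--(vi) reapply stability at $A'=S$. The only cosmetic difference is in (vi): the paper splits $\E[\oX\oX^\top]-(\oSigma+\omu\omu^\top)$ into a covariance piece and a mean-outer-product piece, whereas you expand $X^\top BX$ around $\mu$ directly, yielding $\langle\hat\Sigma^\mu_S-\Sigma,B\rangle+2\mu^\top B(\hat\mu_S-\mu)$; both decompositions land on the same stability bounds and the same final constant below $1/\alpha$.
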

Much of the algorithm makes use of the following fact that, for a Gaussian, certain even quadratic polynomials have a small variance.
We will leverage this fact for filtering and clustering \new{normalized} samples in the algorithm. 

\begin{restatable}{lemma}{LemVarBound}
 \label{lem:var_bound}
Make \Cref{def:notation-new} and recall that $H = \E_{X \sim T}[XX^\top]$, where $T$ is the corrupted version of a \emph{stable} inlier set $S$.
For every symmetric matrix $A$ with $\|A\|_\fr=1$ , we have that {$\Var_{X \sim D} [(H^{\dagger/2} X)^\top  A (H^{\dagger/2} X)] \leq 18 \Cvar /\alpha^2$.}
\end{restatable}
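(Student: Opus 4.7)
The plan is to reduce the variance on the left-hand side to a variance of a quadratic in $X \sim D$ itself, and then apply the Gaussian-like variance bound from \Cref{def:notation-new}. Specifically, since $H^{\dagger/2}$ and $A$ are symmetric, setting $B := H^{\dagger/2} A H^{\dagger/2}$ (which is symmetric) gives $(H^{\dagger/2}X)^\top A (H^{\dagger/2}X) = X^\top B X$. Applying the assumed inequality $\Var_{X \sim D}[X^\top B X] \leq \Cvar(\|\Sigma^{1/2} B \Sigma^{1/2}\|_\fr^2 + \|\Sigma^{1/2} B \mu\|_2^2)$ from \Cref{def:notation-new}, the problem reduces to bounding these two terms in terms of the normalized quantities $\oSigma = H^{\dagger/2}\Sigma H^{\dagger/2}$ and $\omu = H^{\dagger/2}\mu$, for which \Cref{cl:meancov_after_transf_new} already provides $\|\oSigma\|_\op \leq 3/\alpha$ and $\|\omu\|_2 \leq \sqrt{3/\alpha}$.

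For the Frobenius term, I would use cyclicity of the trace to get
$\|\Sigma^{1/2} H^{\dagger/2} A H^{\dagger/2}\Sigma^{1/2}\|_\fr^2 = \tr(\oSigma A \oSigma A) = \|\oSigma^{1/2} A \oSigma^{1/2}\|_\fr^2,$
where $\oSigma^{1/2}$ is the PSD square root of the PSD matrix $\oSigma$. Then two applications of \Cref{fact:FrobeniusSubmult} yield $\|\oSigma^{1/2} A \oSigma^{1/2}\|_\fr \leq \|\oSigma^{1/2}\|_\op^2 \|A\|_\fr = \|\oSigma\|_\op \cdot 1 \leq 3/\alpha$, so this term contributes at most $9/\alpha^2$. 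For the mean term, expanding the squared norm gives
$\|\Sigma^{1/2} H^{\dagger/2} A H^{\dagger/2}\mu\|_2^2 = \omu^\top A \oSigma A \omu \leq \|\oSigma\|_\op \|A\omu\|_2^2 \leq \|\oSigma\|_\op \|A\|_\op^2 \|\omu\|_2^2 \leq \|\oSigma\|_\op \|A\|_\fr^2 \|\omu\|_2^2,$
which is at most $(3/\alpha)(1)(3/\alpha) = 9/\alpha^2$. Summing the two contributions and multiplying by $\Cvar$ produces the desired $18\Cvar/\alpha^2$ bound.

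The only nontrivial point is keeping the algebra honest when $\Sigma$ or $H$ is rank-deficient, so that $H^{\dagger/2}$ denotes the PSD square root of the pseudoinverse and $\oSigma^{1/2}$ is well-defined. All the identities used above depend only on symmetry/PSD-ness of $\oSigma$ and cyclicity of the trace, both of which are unaffected by rank-deficiency, so no extra projection arguments should be required. I do not anticipate any substantive obstacle beyond this bookkeeping; the main content is the clean reduction to the assumed variance-of-quadratic inequality and the already-established operator-norm bounds on $\oSigma$ and $\omu$.
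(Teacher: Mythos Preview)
Your proposal is correct and follows essentially the same approach as the paper: set $B = H^{\dagger/2} A H^{\dagger/2}$, apply the assumed variance inequality from \Cref{def:notation-new}, rewrite both terms via $\oSigma$ and $\omu$, and bound them using \Cref{cl:meancov_after_transf_new} and \Cref{fact:FrobeniusSubmult}. The only cosmetic difference is that you justify the identity $\|\Sigma^{1/2} H^{\dagger/2} A H^{\dagger/2}\Sigma^{1/2}\|_\fr^2 = \|\oSigma^{1/2} A \oSigma^{1/2}\|_\fr^2$ explicitly via trace cyclicity and expand the mean term as $\omu^\top A \oSigma A \omu$, whereas the paper simply asserts the equivalent identities; the substance is identical.
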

\begin{proof}
By \Cref{def:notation-new}, we have that $\Var_{X \sim D}[X^\top B X] \leq \Cvar(\|\Sigma^{1/2} B \Sigma^{1/2}\|_F^2 + \| \Sigma^{1/2} B \mu \|_2^2)$ for all $B \in \R^{d\times d}$. Applying this to $B=H^{\dagger/2} A H^{\dagger/2}$, we have that 
\begin{align}
\Var_{X \sim D}[(H^{\dagger/2} X)^\top A (H^{\dagger/2} X)] &\leq \Cvar(\|\Sigma^{1/2} H^{\dagger/2} A H^{\dagger/2}  \Sigma^{1/2}\|_F^2 + \| \Sigma^{1/2} H^{\dagger/2} A H^{\dagger/2} \mu \|_2^2) \notag \\
&= \Cvar(\|\oSigma^{1/2}  A    \oSigma^{1/2}\|_F^2 + \| \oSigma^{1/2}  A   \omu \|_2^2 ) \;,  \label{eq:boundradius}
\end{align}
where we use that $\|\oSigma^{1/2}  A    \oSigma^{1/2}\|_F^2 = \|\Sigma^{1/2} H^{\dagger/2} A H^{\dagger/2}  \Sigma^{1/2}\|_F^2$ and $\| \Sigma^{1/2} H^{\dagger/2} A H^{\dagger/2} \mu \|_2^2 = \| \oSigma^{1/2}  A   \omu \|_2^2$ since $\oSigma = H^{\dagger/2} \Sigma H^{\dagger/2}$ and $\omu = H^{\dagger/2} \mu$.
By \Cref{cl:meancov_after_transf_new},  $\|\oSigma\|_\op \leq 3/\alpha$.
Since $\|A\|_\fr=1$, we have that the first term in \Cref{eq:boundradius} is bounded as follows: $  \|\oSigma^{1/2} A \oSigma^{1/2} \|_\fr^2 \leq  \|\oSigma\|_\op^2  \|A\|^2_\fr \leq 9/\alpha^2$, 
\new{using the fact} that  $\|A B\|_\fr \leq \|A\|_\op \|B\|_\fr$ (\Cref{fact:FrobeniusSubmult}).
We also have that
$ \|\oSigma^{1/2} A \omu \|_2^2 \leq \|\oSigma\|_\op \| A \|_\op^2 \| \omu \|_2^2 \leq 9/{\alpha^2}$,
where we the bounds from \Cref{cl:meancov_after_transf_new}. The desired conclusion then follows by applying these upper bounds in \cref{eq:boundradius}.
\end{proof}

Armed with \Cref{lem:var_bound}, we can now give a high-level motivation and overview of a recursive call of \Cref{alg:main_alg}:
\begin{enumerate}
    \item In our notation, we call the current data set $T$.
    Let $H$ be the empirical second moment matrix of $T$ and let $\oT$ be the normalized data $\oT = \{H^{\dagger/2} x: x \in T\}$.
    The matrix $H$ is going to be our candidate estimate of the covariance.
    In the rest of the algorithm, we will check whether $H$ is a sufficiently good estimate, and if not, filter or bipartition our dataset.
    
    \item Since we are trying to estimate a covariance, consider the vectors $\tilde{s} = \{(\tilde{x}\tilde{x}^\top)^\flat : \tilde{x} \in \oT\}$, which are the second moment matrices of each data point flattened into vectors.
    \item The first step is standard in filtering-based outlier-robust estimation: we test whether the covariance of the $\tilde{s}$ vectors is small. If so, we are able to prove that the current $H^{\dagger/2}$ is a good approximate square root of $\Sigma$ (c.f.\ \Cref{sec:certificate-main}) hence we just return $H$.
    \item If the first test fails, that would imply that the empirical covariance of the $\tilde{s}$ vectors is large in some direction. We want to leverage this direction to make progress, either by removing outliers through filtering or by bi-partitioning our samples into two clear clusters.
    \item To decide between the 2 options, consider projecting the $\tilde{s}$ vectors onto their largest variance direction.
    Specifically, let $A$ be the matrix lifted from the largest eigenvector of the covariance of the $\tilde{s}$ vectors.
    Define the vectors $\tilde{y} = \tilde{x}^\top A \tilde{x} = \langle\tilde{x}\tilde{x}^\top, A \rangle$ for $\tilde{x} \in \oT$, corresponding to the 1-dimensional projection of $\tilde{s}$ onto the $A^\flat$ direction.
    Since we have failed the first test, these $\tilde{y}$ elements must have a large variance.
    We will decide to filter or divide our samples, based on whether the $\alpha m$-smallest and $\alpha m$-largest elements of the $\tilde{y}$s are close to each other.
    \item If they are close,  yet we have large variance, we will use this information to design a \emph{score function} and perform filtering that removes a random sample with probability proportional to its score. We will then go back to Step 1.
    This would work because by \Cref{lem:var_bound} and by stability (\Cref{def:stability}), the  (unfiltered) inliers have a small empirical variance within themselves, meaning that the large total empirical variance is mostly due to the outlier.
    
    Ideally, the score of a sample would be (proportional to) the squared distance between the sample and the mean of the inliers---the total inlier score would then be equal to the inlier variance.
    However, since we do not know which points are the inliers, we instead use the median of all the projected samples as a proxy for the unknown inlier mean.
    We show that the distance between the $\alpha m$-smallest and largest $\tilde{y}$s bounds the difference between the ideal and proxy scores.\looseness=-1
    \item Otherwise, $\alpha m$-smallest and $\alpha m$-largest elements of the $\tilde{y}$s are far apart.
    By the stability condition \Cref{def:stability} (specifically, \Cref{it:tails}), most of the inliers must be close to each other under this 1-dimensional projection.
    Therefore, the large quantile range necessarily means there is a threshold under this projection to divide the samples into two sets, such that each set has at least $\alpha m$ points and most of the inliers are kept within a single set.
\end{enumerate}

The score function mentioned in Step 6 upper bounds the maximum variance we check in Step 3.
For simplicity, in the actual algorithm (\Cref{alg:main_alg}) we use the score directly for the termination check instead of checking the covariance, but it does not matter technically which quantity we use.

\begin{algorithm}[h]  
    \caption{List-Decodable Covariance Estimation in Relative Frobenius Norm}\label{alg:main_alg}
    \begin{algorithmic}[1]   
    \Statex  
    
    \State \textbf{Constants}: $m,\alpha,\eta, R:= C (1/\alpha^2)\log(1/(\eps_0\alpha))$ for $C> 6000 \sqrt{\Cvar}$, $C' > 720/\eps_0$ (where $\Cvar,\eps_0$ are defined in \Cref{def:notation-new}).
    
     \Function{CovarianceListDecoding}{$T_0$}
     
     \State $t \gets 0$.
     
     \Loop 
     \State Compute $H_t = \E_{X \sim T_t}[X X^\top ]$. %
    \State Let $\widetilde{T}_t = \{ H_t^{\dagger/2} x \; : \; x \in T_{t}\}$ be the transformed set of samples. \label{line:Ttdef}
    \State Let $A$ be the symmetric matrix corresponding to the top eigenvector of $\cov_{\oX \sim \widetilde{T}_t}[\oX^{\otimes 2}]$.  \label{line:A}
    \State Normalize $A$ so that $\|A\|_\fr=1$.
    \State Compute the set $\oY_t = \{\ox^\top  A \ox \; : \; \ox \in \widetilde{T}_t\}$
    \State{Compute the $\alpha m/9$-th smallest element $q_{\mathrm{left}}$ as well as the $\alpha m/9$-th largest element $q_{\mathrm{right}}$, as well as the median $\ymed$ of $\oY_t$.}
    \State Define the function $f(\ox) = (\ox^\top A \ox - \ymed)^2$.
    \If{$\E_{\oX \sim \widetilde{T}_t}[f(\oX)] \le  C' R^2/\alpha^3$} \label{line:checkvariance}
    \Comment{c.f.\ \Cref{lem:stop_guarantee_new}} 
    \label{line:victorycheck}
        \If{$|T_t| \ge 0.5 \alpha m$} \label{line:victorysizecheck}
        \State \Return $\{T_t\}$.  
        \label{line:victory}
        \Else
        \State \Return the empty list.
        \EndIf
    \ElsIf{$q_{\mathrm{right}} - q_{\mathrm{left}} \le R$} \label{line:quantilecheck}
        \Comment{Randomized Filtering (c.f.\ \Cref{lem:filter-main})}
        \State Let the probability mass function $p(\ox):= f(\ox)/ \sum_{\ox \in \widetilde{T}_t} f(\ox)$.
        \State Pick $x_\mathrm{removed} \in T_t$ according to $p(H_t^{\dagger/2}x)$. \label{line:randomfilter}
        \State $T_{t+1} \gets T_t \setminus \{x_\mathrm{removed}\}$.  %
    \Else \label{line:cond}
        \State $\tau \gets \mathrm{FindDivider}(\oY_t,\alpha m/9)$. \Comment{c.f.\ \Cref{lem:divider-main}} \label{line:call_divider1}
        \State $T' \gets \{H_t^{1/2} \ox :  \ox \in  \widetilde{T}_t, \ox^\top A \ox \leq \tau\}$, $T'' \gets \{ H_t^{1/2}\ox : \ox \in T_t, \ox^\top A\ox >\tau\}$. \label{line:new_datasets}
        \State $L_1 \gets \textsc{CovarianceListDecoding}(T')$. \label{line:rec1}
        \label{line:call_divider2_left}
        \State $L_2 \gets \textsc{CovarianceListDecoding}(T'')$ \label{line:rec2}
        \label{line:call_divider2_right}
        \State \Return $ L_1\cup L_2$.
    \EndIf
    
    \State $t \gets t + 1$.
    \EndLoop

    \EndFunction
    \end{algorithmic}  
  \end{algorithm}

\begin{remark}[Runtime of \Cref{alg:main_alg}]
We claim that each ``loop'' in \Cref{alg:main_alg} takes $\tilde{O}(m d^2)$ time to compute.
The number of times we run the ``loop'' is at most $O(m)$, since each loop either ends in termination, removes 1 element from the dataset, or splits the dataset, all of which can happen at most $O(m)$ times.
From this, we can conclude a runtime of $\tilde{O}(m^2 d^2)$.
The sample complexity of our algorithm is also explicitly calculable to be $\tilde{O}(d^2/\alpha^6)$, which follows from \Cref{lem:DeterministicCond} and the choice of parameter $\eta = \Theta(\alpha^3)$ from \Cref{thm:main}.

To see the runtime of a single loop: the most expensive operations in each loop are to compute $H_t$, its pseudo-inverse, and to compute the symmetric matrix $A$ in \Cref{line:A} that is the top eigenvector of a $d^2 \times d^2$ matrix.
Computing $H_t$ trivially takes $O(md^2)$ time, resulting in a $d \times d$ matrix.
Its pseudoinverse can be computed in $O(d^\omega)$ time, which is dominated by $O(md^2)$ since $m \gtrsim d$.
Lastly, we observe that, instead of actually computing the top eigenvector in \Cref{line:A} to yield the matrix $A$, it suffices in our analysis to compute a matrix $B$ whose Rayleigh quotient $(B^\flat)^\top \left(\cov_{\oX \sim\oT_t}[\oX^{\otimes 2}]\right)B^\flat/((B^\flat)^\top B^\flat)$ is at least $\frac{1}{2}$ times $(A^\flat)^\top \left(\cov_{\oX \sim\oT_t}[\oX^{\otimes 2}]\right)A^\flat/((A^\flat)^\top A^\flat)$.
We can do this via $O(\log d)$ many power iterations.
Since 
$$\cov_{\oX \sim\oT_t}[\oX^{\otimes 2}] = \frac{1}{|\oT_t|}\sum_{\oX \in \oT_t} zz^\top - \left(\frac{1}{|\oT_t|}\sum_{\oX \in \oT_t} z\right)^\top\left(\frac{1}{|\oT_t|}\sum_{\oX \in \oT_t} z\right) \;,$$ 
we can compute each matrix-vector product in $O(|T_t|d^2) \le O(md^2)$ time, thus 
yielding an $\tilde{O}(m^2d^2)$ runtime for the power iteration.
\end{remark}

\subsection{Certificate Lemma: Bounded Fourth Moment Implies Closeness}
\label{sec:certificate-main}

The first component of the analysis is our certificate lemma, which states that, if the empirical covariance of the (flattened) second moment matrices of current data set (after normalization) is bounded, then the empirical second moment matrix $H$ of the current data set is a good approximation to the covariance of the Gaussian component we want to estimate.

\begin{lemma}[Case when we stop and return] \label{lem:stop_guarantee_new}
Make \Cref{def:notation-new}. Let $w \in \R^{d^2}$ be the leading eigenvector of the $\cov_{\oX \sim \oT}[\oX^{\otimes 2}]$ with $\|w\|_2 = 1$, and let $A \in \R^{d \times d}$ be $w^\sharp$.
Note that $\|w\|_2 = 1$ implies $\|A\|_\fr = 1$.
For every subset $\oT' \subseteq \oT$ with $|\oT'| \geq (\alpha/2) |\oT|$, we have that 
\begin{align*}
    \left\| \cov_{\oX \sim \oT'}[\oX] - \proj{H}\right\|_\fr^2 
    \lesssim \frac{1}{\alpha}\Var_{\oX \sim \oT}[\oX^\top  A \oX] + \frac{1}{\alpha^2}\;.
\end{align*}
In particular, we have that  \[\left\| H^{\dagger/2} \Sigma H^{\dagger/2} - \proj{H}\right\|_\fr^2 
    \lesssim \frac{1}{\alpha}\Var_{\oX \sim \oT}[\oX^\top  A \oX] + \frac{1}{\alpha^2}\] and 
    \[\left\| H^{\dagger/2} \Sigma H^{\dagger/2} - \proj{\Sigma}\right\|_\fr^2 
    \lesssim \frac{1}{\alpha}\Var_{\oX \sim \oT}[\oX^\top  A \oX] + \frac{1}{\alpha^2} \;.\]
  \end{lemma}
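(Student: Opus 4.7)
The plan is to prove the main subset bound via a Cauchy–Schwarz-type argument that exploits the choice of $A$ as the top eigendirection, and then derive the two ``in particular'' statements by instantiating $\oT' = \oS \cap \oT$ and invoking \Cref{lem:from_empirical_to_true}.

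I would begin with two identities: $\E_{\oX \sim \oT}[\oX\oX^\top] = H^{\dagger/2}HH^{\dagger/2} = \proj{H}$ by definition of the pseudo-inverse, and $\Var_{\oX \sim \oT}[\oX^\top M \oX] \leq \tau := \Var_{\oX \sim \oT}[\oX^\top A \oX]$ for every symmetric $M$ with $\|M\|_\fr \leq 1$, by the Rayleigh-quotient characterization of the top eigenvalue of $\cov_{\oX \sim \oT}[\oX^{\otimes 2}]$ (identifying symmetric matrices with flattened $d^2$-vectors). Next, a one-line Cauchy–Schwarz applied to $g(\oX) - \E_{\oT}[g]$ gives the ``subset deviation'' bound: for any function $g$ and any $\oT' \subseteq \oT$ with $|\oT'| \geq (\alpha/2)|\oT|$,
$$|\E_{\oT'}[g(\oX)] - \E_{\oT}[g(\oX)]|^2 \leq (|\oT|/|\oT'|)\Var_{\oT}[g(\oX)] \leq (2/\alpha)\Var_{\oT}[g(\oX)].$$
Taking $g(\oX) = \oX^\top M \oX$ and using the duality $\|N\|_\fr = \sup_{\|M\|_\fr \leq 1} \langle N, M\rangle$ for symmetric $N$ yields the second-moment estimate $\|\E_{\oT'}[\oX\oX^\top] - \proj{H}\|_\fr \leq \sqrt{2\tau/\alpha}$.

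To pass from second moments to covariances, I expand $\cov_{\oT'}[\oX] = \E_{\oT'}[\oX\oX^\top] - \mu_{\oT'}\mu_{\oT'}^\top$ (where $\mu_{\oT'} := \E_{\oT'}[\oX]$) and use the triangle inequality, leaving me to bound $\|\mu_{\oT'}\|_2^2$. This is the step I expect to be the main obstacle: for an arbitrary $\oT'$ I cannot invoke the inlier-mean bound in \Cref{cl:meancov_after_transf_new}, and the naive Cauchy–Schwarz $\|\mu_{\oT'}\|_2^2 \leq \tr(\E_{\oT'}[\oX\oX^\top])$ is too loose, as it carries a factor of $d$ from $\tr(\proj{H})$. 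The resolution is to reapply the subset-deviation bound with the rank-one, Frobenius-unit-norm matrix $M = vv^\top$ for $v := \mu_{\oT'}/\|\mu_{\oT'}\|_2$, yielding
$$\|\mu_{\oT'}\|_2^2 \leq v^\top \E_{\oT'}[\oX\oX^\top] v \leq v^\top\proj{H}v + \sqrt{2\tau/\alpha} \leq 1 + \sqrt{2\tau/\alpha},$$
which is dimension-free. Squaring the combined triangle inequality then gives $\|\cov_{\oT'}[\oX] - \proj{H}\|_\fr^2 \lesssim \tau/\alpha + 1 \lesssim \tau/\alpha + 1/\alpha^2$.

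For the ``in particular'' consequences, I instantiate $\oT' = \oS \cap \oT$, whose size satisfies $|S\cap T| \geq (1-2\eps_0)|S| \geq 0.98\alpha|T| \geq (\alpha/2)|\oT|$ by \Cref{def:notation-new}. Since $\cov_{\oX \sim \oS \cap \oT}[\oX] = H^{\dagger/2}\cov_{X \sim S\cap T}[X]H^{\dagger/2}$, the resulting inequality is in precisely the form needed by \Cref{lem:from_empirical_to_true}, which I apply with $P = H^{\dagger/2}$, $A = S$, $A' = S\cap T$, $L = \proj{H}$, and $r = O(\sqrt{\tau/\alpha} + 1/\alpha)$; using $\|\proj{H}\|_\op \leq 1$ and squaring delivers the $\proj{H}$ version. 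For the $\proj{\Sigma}$ version, in the full-rank regime of the main theorem the stability rank condition (\Cref{it:rank}) applied to $S\cap T$ ensures that $\sum_{x \in S\cap T} xx^\top$, and hence $H$, is of full rank, so $\proj{H} = \proj{\Sigma} = I$ and the two statements coincide.
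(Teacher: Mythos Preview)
Your argument for the first two inequalities is correct and essentially coincides with the paper's (via \Cref{lem:certificate_lemma}): both reduce to bounding $\|\E_{\oT'}[\oX\oX^\top]-\proj{H}\|_\fr$ using the top-eigenvector property and a Jensen/subset-deviation step, and then control the rank-one mean term. Your treatment of $\|\mu_{\oT'}\|_2^2$ via the choice $M=vv^\top$ is a slight variant of the paper's PSD argument $\mu_{\oT'}\mu_{\oT'}^\top\preceq \E_{\oT'}[\oX\oX^\top]\preceq(2/\alpha)\E_{\oT}[\oX\oX^\top]$, and either yields the same conclusion. The passage to $H^{\dagger/2}\Sigma H^{\dagger/2}$ via \Cref{lem:from_empirical_to_true} with $\oT'=\oS\cap\oT$ is exactly what the paper does.

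There is, however, a genuine gap in your last step. The lemma is stated under \Cref{def:notation-new}, which does \emph{not} assume $\Sigma$ is full rank (and the paper explicitly uses the rank-deficient version in \Cref{thm:main}). Your argument ``$\proj{H}=\proj{\Sigma}=I$ so the two statements coincide'' only covers the full-rank case. In general one only has $\ker(H)\subseteq\ker(\Sigma)$, i.e.\ $\proj{\Sigma}\preceq\proj{H}$, and $\proj{H}-\proj{\Sigma}$ can have positive rank; passing from $\|H^{\dagger/2}\Sigma H^{\dagger/2}-\proj{H}\|_\fr$ to $\|H^{\dagger/2}\Sigma H^{\dagger/2}-\proj{\Sigma}\|_\fr$ is then a nontrivial step. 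The paper handles this with a separate linear-algebraic lemma (\Cref{lem:sigma-guarantee}): if $\ker(A)\subseteq\ker(B)$ then $\|ABA-\proj{B}\|_\fr\le 2\|ABA-\proj{A}\|_\fr$, proved by comparing both sides to $\sqrt{\rank(A)-\rank(B)}$. You would need either this lemma or an equivalent argument to complete the $\proj{\Sigma}$ bound in the generality stated.
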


Our termination check of \Cref{line:checkvariance}  uses the score $f(\ox) = (\ox^\top A \ox - \ymed)^2$ where $\ymed$ is the median of $\{\ox^\top A \ox \; : \; \ox \in \oT\}$.
Since $\Var_{\oX \sim \oT}[\oX^\top A \oX] \leq \E_{\oX \sim \oT}[f(\oX)]$ our check ensures that  $\Var_{\oX \sim \oT}[\oX^\top A \oX] \leq \poly(1/\alpha)$ before returning.

We in fact prove a slightly more general lemma, \Cref{lem:certificate_lemma}, which states that after an arbitrary linear transformation matrix $P$, if the covariance  matrix of $X^{\otimes 2}$ over the corrupted set of points is bounded in operator norm, then the empirical second moment matrix of $T$ is a good estimate of covariance of inliers in Frobenius norm.
\Cref{lem:stop_guarantee_new} then follows essentially from choosing $P = H^{\dagger/2}$.

\begin{restatable}[Bounded fourth-moment and second moment imply closeness of covariance]{lemma}{LemCertificateGeneral}
\label{lem:certificate_lemma}
Let $T$ be a set of $n$ points in $\R^d$.
Let $P$ be an arbitrary symmetric matrix.
Let $w \in \R^{d^2}$ be the leading eigenvector of the $\cov_{X \sim T}[ (P X)^{\otimes 2}]$ with $\|w\|_2 = 1$, and let $A \in \R^{d \times d}$ be $w^\sharp$.
Note that $\|w\|_2 = 1$ implies $\|A\|_\fr = 1$.
For every subset $T' \subseteq T$ with $|T'| \geq (\alpha/2) |T|$, we have that 
\begin{align*}
    \left\| \cov_{X \sim T'}[PX] - \E_{X \sim T}[PXX^\top P]  \right\|_\fr^2
    &\leq \frac{4}{\alpha}\Var_{X \sim T}[X^\top P  A P X] + \frac{8}{\alpha^2}\left\|\E_{X \sim T}[PXX^\top P ] \right\|_\op^2\;.
\end{align*}
In particular, suppose that there exists a set $S$  that satisfies \Cref{it:mean,it:cov} of ($\eta,\eps$)-stability (\Cref{def:stability}) with respect to $\mu$ and $\Sigma$ for some $\eta$ and $\epsilon$, and suppose that $|S \cap T| \geq \min(\alpha |T|/2 , (1 -  \epsilon)|S|)$.
Then, applying the above to $T' = S \cap T$, we obtain the following:
 \begin{align*}
    \left\| P \Sigma P - \E_{X \sim T}[PXX^\top P] \right\|_\fr 
    \lesssim \frac{1}{\alpha}\Var_{X \sim T}[X^\top P  A P X] + \frac{1}{	\alpha^2} \left\|\E_{X \sim T}[PXX^\top P ] \right\|_\op^2\;.
\end{align*}
\end{restatable}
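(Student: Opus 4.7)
The plan is to dualize the Frobenius norm: writing $M := \cov_{X \sim T'}[PX] - \E_{X \sim T}[PXX^\top P]$ and $B := M/\|M\|_\fr$, we have $\|B\|_\fr = 1$ and $\|M\|_\fr = \langle B, M\rangle$. Expanding the covariance as $\cov_{X \sim T'}[PX] = \E_{X \sim T'}[PXX^\top P] - \E_{X \sim T'}[PX]\E_{X \sim T'}[PX]^\top$, I split $\langle B, M\rangle$ into two pieces: (i) the expected rank-one difference $\E_{X \sim T'}[X^\top PBPX] - \E_{X \sim T}[X^\top PBPX]$, and (ii) the quadratic form $-\E_{X \sim T'}[PX]^\top B \E_{X \sim T'}[PX]$, and bound each separately.

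For piece (i), I use the identity that for $T' \subseteq T$ with $p := |T'|/|T|$, one has $|\E_{X \sim T'}[g] - \E_{X \sim T}[g]|^2 \leq ((1-p)/p)\,\Var_{X \sim T}[g]$, which follows from the two-group variance decomposition $\Var_T[g] = p\Var_{T'}[g] + (1-p)\Var_{T \setminus T'}[g] + p(1-p)(\E_{T'}[g] - \E_{T \setminus T'}[g])^2$ combined with $\E_{T'}[g] - \E_T[g] = (1-p)(\E_{T'}[g] - \E_{T \setminus T'}[g])$. Since $p \geq \alpha/2$, the prefactor is at most $2/\alpha$. Now the defining property of $A$ as the top eigenvector of $\cov_{X \sim T}[(PX)^{\otimes 2}]$ with $\|A\|_\fr = 1$ gives the Rayleigh-quotient extremality $\Var_T[X^\top PBPX] = (B^\flat)^\top \cov_T[(PX)^{\otimes 2}]\, B^\flat \leq \Var_T[X^\top PAPX]$ for every symmetric $B$ with $\|B\|_\fr = 1$, so piece (i) is bounded by $\sqrt{(2/\alpha)\,\Var_{X \sim T}[X^\top PAPX]}$.

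For piece (ii), I bound $|\langle B, vv^\top\rangle| \leq \|B\|_\op \|v\|_2^2 \leq \|v\|_2^2$ with $v := \E_{X \sim T'}[PX]$, and then Jensen's inequality gives $\|\E_{X \sim T'}[PX]\|_2^2 \leq \|\E_{X \sim T'}[PXX^\top P]\|_\op$. Since sums of PSD matrices grow monotonically with the index set, restricting the PSD second moment from $T$ to $T' \subseteq T$ loses at most a factor of $|T|/|T'| \leq 2/\alpha$ in operator norm, so $\|v\|_2^2 \leq (2/\alpha)\|\E_{X \sim T}[PXX^\top P]\|_\op$. Adding the contributions of (i) and (ii) and squaring via $(a+b)^2 \leq 2a^2 + 2b^2$ yields the main inequality with the stated constants $4/\alpha$ and $8/\alpha^2$.

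The stability corollary follows by applying the main inequality with $T' = S \cap T$ (whose cardinality satisfies both $|T'| \geq \alpha|T|/2$ and $|T'| \geq (1-\eps)|S|$ under the hypothesis, so stability is usable on $T'$), and then invoking \Cref{lem:from_empirical_to_true} with $L := \E_{X \sim T}[PXX^\top P]$ to convert the bound on $\|\cov_{X \sim S \cap T}[PX] - L\|_\fr$ into a bound on $\|P\Sigma P - L\|_\fr$; this introduces an extra additive $O(\|L\|_\op)$ term that is cleanly absorbed into the stated right-hand side. I do not anticipate a single difficult step in the argument; the main thing to get right is the separation of the two error sources (variance along the top flattened direction versus operator norm of the uncentering correction) and making sure the extremality of $A$ is applied to the correctly flattened covariance $\cov_{X \sim T}[(PX)^{\otimes 2}]$ rather than to $\cov_{X \sim T}[PX]$.
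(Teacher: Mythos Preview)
Your argument is correct and follows essentially the same decomposition as the paper's own proof: split $\cov_{T'}[PX]-\E_T[PXX^\top P]$ into the second-moment difference $\E_{T'}[PXX^\top P]-\E_T[PXX^\top P]$ and the rank-one uncentering term $\E_{T'}[PX]\E_{T'}[PX]^\top$, bound the latter via $vv^\top\preceq\E_{T'}[PXX^\top P]\preceq(2/\alpha)\E_T[PXX^\top P]$, and combine with $(a+b)^2\le 2a^2+2b^2$. The only cosmetic difference is in how the second-moment difference is controlled: the paper starts from $\Var_T[\oX^\top A\oX]=\sup_{\|v\|_2=1}\E_T[\langle \oX^{\otimes 2}-\E_T\oY^{\otimes 2},v\rangle^2]$, restricts the average to $T'$ (losing $\alpha/2$), and applies Jensen, whereas you dualize, use the two-group variance identity $|\E_{T'}g-\E_Tg|^2\le\frac{1-p}{p}\Var_Tg$, and then invoke the Rayleigh extremality of $A$ to pass from $B$ back to $A$; these are two sides of the same argument and yield identical constants. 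Your treatment of the stability corollary via \Cref{lem:from_empirical_to_true} also matches the paper's.
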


We briefly sketch the proof of \Cref{lem:certificate_lemma} before showing the formal calculations.
For the first part of the lemma, the left hand side is bounded by the sum of the squared Frobenius norm of $\E_{X \sim T'}[PXX^\top P] - \E_{X \sim T}[PXX^\top P]$ and the squared Frobenius norm of $\E_{X \sim T'}[PX]\E_{X \sim T'}[X^\top P]$, by decomposing the covariance term.
The former term is the contribution of the points of $T'$ to the variance term on the right, up to a factor of $|T'|/|T|$, which is at most $O(1/\alpha)$ by assumption.
The latter term is the squared Frobenius norm of a rank 1 matrix, which is equal to its operator norm, and can be further bounded by $\|\E_{X \sim T'}[PXX^\top P]\|_\op$.
To bound $\|\E_{X \sim T'}[PXX^\top P]\|_\op$, we use the same argument that this is at most $|T'|/|T|$ times $\|\E_{X \sim T}[PXX^\top P]\|_\op$ by construction of $T'$.

The second part of the lemma then follows from the choice of $T' = S \cap T$, and applying \Cref{it:cov}  of the stability of the inlier set $S$.

The formal proof of \Cref{lem:certificate_lemma} is given below.

\begin{proof}[Proof of \Cref{lem:certificate_lemma}]
Let $\oT$ define the set $\{PX : X \in T\}$.
In this notation,  $w$ is  the leading eigenvector of  $\cov_{\oX \sim \oT}[\oX^{\otimes 2}]$ with $\|w\|_2 = 1$, and  $A =w^\sharp$. Then, for any $\oT' \subseteq \oT$ with $|\oT'| \geq \alpha |\oT'|/2$, we have the following 
\begin{align}
    \Var_{\oX \sim \oT}[\oX^\top  A \oX] 
    &=\sup_{v \in \R^{d^2}, \|v\|_2^2=1} \E_{\oX \sim \oT}\left[ \left \langle \oX^{\otimes 2} - \E_{\oY \sim \oT}[\oY^{\otimes 2}], v \right \rangle^2\right] \notag \\
    &= \sup_{v \in \R^{d^2}, \|v\|_2^2=1} \frac{|\oT'|}{|\oT|}\E_{\oX \sim \oT'}\left[ \left \langle \oX^{\otimes 2} - \E_{\oY \sim \oT}[\oY^{\otimes 2}], v \right \rangle^2\right] \notag\\
    &\qquad\qquad\qquad  + \frac{|\oT \setminus \oT'|}{|\oT|}\E_{\oX \sim \oT\setminus \oT'}\left[ \left \langle \oX^{\otimes 2} - \E_{\oY \sim \oT}[\oY^{\otimes 2}], v \right \rangle^2\right] \notag\\
    &\geq \frac{\alpha}{2}  \sup_{v \in \R^{d^2}, \|v\|_2^2=1} \E_{\oX \sim \oT'}\left[ \left \langle \oX^{\otimes 2} - \E_{\oY \sim \oT}[\oY^{\otimes 2}] , v\right \rangle^2\right] \tag{since $|\oT'| \ge (\alpha/2)|\oT|$} \\
    &\geq  \frac{\alpha}{2} \sup_{v \in \R^{d^2}, \|v\|_2^2=1} \left\langle \E_{\oX \sim \oT' }[\oX^{\otimes 2}] - \E_{\oY \sim \oT}[\oY^{\otimes 2}], v \right\rangle^2 \tag{$\E[Y^2] \ge (\E[Y])^2$ for any $Y$}
    \\
    &=  \frac{\alpha}{2} \sup_{J \in \R^{d \times d}, \|J\|_\fr^2=1} \left[ \left \langle \E_{\oX \sim \oT' }[\oX \oX^\top] - \E_{\oY \sim \oT }[\oY \oY^\top],  J \right \rangle ^2\right] \nonumber \\ 
    &=  \frac{\alpha}{2} \left \| \E_{\oX \sim \oT' }[\oX \oX^\top] - \E_{\oX \sim \oT }[\oX \oX^\top] \right \|_\fr^2\;, \nonumber
\end{align} 
where the penultimate step above holds by lifting the flattened matrix inner products in the previous line to the matrix inner product, and the last line uses the variation characterization of the Frobenius norm.
Thus, we have obtained the following: 
\begin{align}
\label{eq:dummy-1}
    \left \| \E_{\oX \sim \oT' }[\oX \oX^\top] - \E_{\oX \sim \oT }[\oX \oX^\top] \right \|_\fr^2 \leq \frac{2}{\alpha}\Var_{\oX \sim \oT}[\oX^\top  A \oX].
\end{align}

We will now establish the first part of our lemma.
Using the relation between the second moment matrix and the covariance, we have the following: 
\begin{align*}
    \left\| \cov_{\oX \sim \oT'}[\oX ] - \E_{\oX \sim \oT }[\oX \oX^\top] \right\|_\fr^2  
&=  \left\| \E_{\oX \sim \oT'}[\oX \oX^\top ] - \E_{\oX \sim \oT'}[\oX]\E_{\oX \sim \oT'}[\oX^\top] - \E_{\oX \sim \oT }[\oX \oX^\top] \right\|_\fr^2 \\
&\qquad\leq 2 \left\| \E_{\oX \sim \oT'}[\oX \oX^\top ] - \E_{\oX \sim \oT }[\oX \oX^\top] \right\|_\fr^2 
 + 2\left\|\E_{\oX \sim \oT'}[\oX]\E_{\oX \sim \oT'}[\oX^\top] \right\|_\fr^2\\
&\qquad\leq \frac{4}{\alpha}     \Var_{\oX \sim \oT}[\oX^\top  A \oX] 
 + 2\left\|\E_{\oX \sim \oT'}[\oX]\E_{\oX \sim \oT'}[\oX^\top] \right\|_\op^2\;,
\end{align*}
where the last step uses \Cref{eq:dummy-1} and the fact that $\E_{\oX \sim \oT'}[\oX]\E_{\oX \sim \oT'}[\oX^\top]$ has rank one and thus its Frobenius and operator norm match.
Thus to establish the desired inequality, it suffices to prove that $\E_{\oX \sim \oT'}[\oX]\E_{\oX \sim \oT'}[\oX^\top] \preceq (2/\alpha) \cdot \E_{\oX \sim \oT}[\oX\oX^\top]$.
Indeed, we have that
\begin{align*}
 \E_{\oX \sim \oT'}[\oX]\E_{\oX \sim \oT'}[\oX^\top] &\preceq \E_{\oX \sim \oT'}[\oX \oX^\top]  
 = \frac{1}{|\oT'|}\sum_{\oX \in \oT' } \oX \oX^\top 
 = \frac{|\oT|}{|\oT'|} \frac{1}{|\oT|}\sum_{\oX \in \oT' } \oX \oX^\top \\
 &\preceq \frac{2}{\alpha} \frac{1}{|\oT|}\sum_{\oX \in \oT' } \oX \oX^\top
 \preceq \frac{2}{\alpha} \frac{1}{|\oT|}\sum_{\oX \in \oT } \oX \oX^\top,
\end{align*}
where the last expression is in fact equal to $(2/\alpha) \cdot \E_{\oX \sim \oT}[\oX\oX^\top]$.
This completes the proof for the first statement.
 
We now consider the case when $T$ contains a large stable subset.
Let $T' =T \cap S$, which satisfies that $|T'| \geq (\alpha/2)|T|$ and 
$|T'| \geq (1 - \epsilon)|S|$.
By applying \Cref{lem:from_empirical_to_true} with $L = \E_{X \sim T}[PXX^\top P]$ to the second inequality in the statement with $T'$,
we obtain the desired result.

\end{proof}

We can now show \Cref{lem:stop_guarantee_new} using \Cref{lem:certificate_lemma}, by choosing $P = H^{\dagger/2}$.

\begin{proof}[Proof of \Cref{lem:stop_guarantee_new}]
We will apply the main certificate lemma, \Cref{lem:certificate_lemma}, with $P = H^{\dagger/2}$.
The first two inequalities in the lemma then follows by noting that $\E_{X \sim T}[PXX^\top P] = H^{\dagger/2} H H^{\dagger/2} = \proj{H}$ and the operator norm of $\proj{H}$ is at most $1$.

For the last inequality, we will use the following result, proved in \Cref{sec:linAlgebra}, to upper bound the Frobenius norm of $H^{\dagger/2} \Sigma H^{\dagger/2} - \proj{\Sigma}$:
\begin{restatable}{lemma}{LemNullSpaceTransfer}
\label{lem:sigma-guarantee} 
Let $A$ and $B$ be two PSD matrices with $\proj{B} \preceq \proj{A}$ or equivalently $\ker(A) \subseteq \ker(B)$.
Then
$    \| A B A - \proj{B} \|_\fr 
 \leq 2    \| A B A - \proj{A} \|_\fr 
$.
\end{restatable}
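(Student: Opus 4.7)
The plan is to reduce everything to the triangle inequality
\[ \|ABA - \proj{B}\|_\fr \;\le\; \|ABA - \proj{A}\|_\fr + \|\proj{A} - \proj{B}\|_\fr, \]
which already gives half of the required bound. The task is therefore to show
\[ \|\proj{A} - \proj{B}\|_\fr \;\le\; \|ABA - \proj{A}\|_\fr. \]
The main point will be to exploit the fact that $\proj{A} - \proj{B}$ is itself a nice projection under the hypothesis $\proj{B} \preceq \proj{A}$, and to use conjugation by this projection to extract $\proj{A} - \proj{B}$ from $ABA - \proj{A}$.

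Concretely, I would introduce $P := \proj{A} - \proj{B}$ and first verify that $P$ is an orthogonal projection, namely the projection onto $\mathrm{range}(A) \cap \mathrm{range}(B)^\perp$. This uses the hypothesis $\mathrm{range}(B)\subseteq \mathrm{range}(A)$, which (for symmetric idempotents) yields the identities $\proj{A}\proj{B} = \proj{B}\proj{A} = \proj{B}$, from which $P^2 = P$ and $P = P^\top$ follow directly. In particular $\|P\|_\op \le 1$.

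Next I would compute $P(ABA - \proj{A})P$ and show it equals $-P$. For the $ABA$ term, since $B = \proj{B} B$ and $P\proj{B} = \proj{A}\proj{B} - \proj{B}^2 = \proj{B} - \proj{B} = 0$, we get $PB = 0$ and hence $P A B A P = 0$. For the $\proj{A}$ term, since $P \preceq \proj{A}$ we have $P\proj{A} = P$, so $P\proj{A} P = P^2 = P$. Thus $P(ABA - \proj{A})P = -P$, which is the key identity.

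Finally, I would invoke \Cref{fact:FrobeniusSubmult} together with $\|P\|_\op \le 1$ to conclude
\[ \|P\|_\fr \;=\; \|P(ABA - \proj{A})P\|_\fr \;\le\; \|P\|_\op^2\, \|ABA - \proj{A}\|_\fr \;\le\; \|ABA - \proj{A}\|_\fr, \]
and combine with the triangle inequality to finish the proof. The only step that requires any subtlety is verifying that $P = \proj{A} - \proj{B}$ is an orthogonal projection and that $PB = 0$; everything else is routine manipulation of idempotents. This is not really an obstacle, but it is the linear-algebraic content that makes the argument work.
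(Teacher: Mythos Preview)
Your triangle-inequality reduction to showing $\|\proj{A} - \proj{B}\|_\fr \le \|ABA - \proj{A}\|_\fr$ matches the paper exactly, and your observation that $P := \proj{A} - \proj{B}$ is an orthogonal projection with $PB = 0$ is correct. The gap is in the step ``$PB = 0$ hence $PABAP = 0$'': this inference fails because $A$ need not commute with $P$, so $A$ need not preserve the decomposition $\mathrm{range}(B) \oplus \bigl(\mathrm{range}(A) \cap \mathrm{range}(B)^\perp\bigr)$. Concretely, take
\[
A = \begin{pmatrix} 1 & 0 \\ 0 & 2 \end{pmatrix}, \qquad B = \begin{pmatrix} 1 & 1 \\ 1 & 1 \end{pmatrix}.
\]
Then $\proj{A} = I$, $P = I - \proj{B}$ is the projection onto the span of $(1,-1)^\top$, and one checks directly that $PB = 0$ but $PABAP = \tfrac{1}{2}P \ne 0$. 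Thus $P(ABA - \proj{A})P = -\tfrac{1}{2}P$, whose Frobenius norm is only $\tfrac{1}{2}\|P\|_\fr$, so the key identity $P(ABA-\proj{A})P = -P$ that your argument rests on does not hold.

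The paper proves $\|\proj{A} - \proj{B}\|_\fr \le \|ABA - \proj{A}\|_\fr$ by a rank-counting argument instead: on one hand $\|\proj{A} - \proj{B}\|_\fr^2 = \rank(A) - \rank(B)$ because $P$ is a projection of exactly that rank; on the other hand, since $\proj{A}$ has $\rank(A)$ unit eigenvalues while $\rank(ABA) \le \rank(B)$, there are at least $\rank(A) - \rank(B)$ orthonormal eigenvectors $u$ of $\proj{A}$ with eigenvalue $1$ that lie in $\ker(ABA)$, each contributing $\|(ABA - \proj{A})u\|_2^2 = 1$ to $\|ABA - \proj{A}\|_\fr^2$. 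Your conjugation-by-$P$ idea does not seem to recover this bound without essentially reverting to such a rank argument.
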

Since $H$ is the second moment matrix of $T$, we have that $\ker(H) \subseteq \ker(\sum_{x \in S \cap T} xx^\top)$, which is further contained in $\ker(\Sigma)$ by \Cref{it:rank} in \Cref{def:stability}.
We apply \Cref{lem:sigma-guarantee} with $B = \Sigma$ and $A = H$, which gives the last inequality in the lemma statement. 
\end{proof}

The proof of \Cref{lem:sigma-guarantee} involves longer technical calculations and is deferred to \Cref{sec:linAlgebra}, but here we give a brief sketch.
By the triangle inequality, it suffices to show that $\|\proj{A} -\proj{B}\|_\fr \leq  \| A B A - \proj{A} \|_\fr$.
Since $\ker(A) \subset \ker(B)$, $\proj{A} -\proj{B}$ is again a projection matrix  of rank equal to $\rank(A) - \rank(B)$ and thus its Frobenius norm is equal to square root of $\rank(A) - \rank(B)$.
On the other hand, the matrices $A B A$ and $\proj{A}$ have a rank difference of at least $\rank(A) - \rank(B)$. Combining this observation with the fact that $\proj{A}$ has binary eigenvalues, we can lower bound $\|A B A - \proj{A}\|_\fr$   by square root of $\rank(A) - \rank(B)$.

\subsection{Filtering: Removing Extreme Outliers}
\label{sec:filter-main}

As discussed in the algorithm outline, if the termination check fails, namely the expected score over the entire set of $\oT$ is large, then we proceed to either filter or bi-partition our samples.
This subsection states the guarantees of the filtering procedure (\Cref{lem:filter-main}), which assumes that the $\alpha m$-smallest and largest elements in the set $\{\tilde{x}^\top A \tilde{x} \; : \; \tilde{x} \in \oT \}$ have distance at most $R$ for some $R = \poly(1/\alpha)$.

Recall from \Cref{lem:var_bound} that $\Var_{\oX \sim H^{\dagger/2}D}[\oX^\top A \oX] = \E_{\oX \sim H^{\dagger/2}D}[(\oX^\top A \oX - \E_{\oX \sim H^{\dagger/2}D}[\oX^\top A \oX])^2]$ is bounded by $O(1/\alpha^2)$.
By stability, the same is true for $\E_{\oX \sim \oS \cap \oT}[(\oX^\top A \oX - \E_{\oX \sim \oS \cap \oT}[\oX^\top A \oX])^2]$.
Notice that this looks almost like our score function $f(\oX)$, except that in $f(\oX)$ we use $\ymed$ for centering instead of $\E_{\oX \sim \oS \cap \oT}[\oX^\top A \oX]$, since the latter quantity is by definition unknown to the algorithm.
In \Cref{cl:median_close_new}, we show that the two quantities have distance upper bounded by $O(R)$, where $R$ is the quantile distance in our assumption, which in turn implies that the inliers in $\oS \cap \oT$ contribute very little to $\E_{\oX \sim \oT}[f(\oX)]$.
Given that $\E_{\oX \sim \oT}[f(\oX)]$ is large, by virtue of having failed the termination check, we can then conclude that most of the score contribution comes from the outliers.
Thus, we can safely use the score to randomly pick an element in the dataset for removal, with probability proportional to its score, and the element will be overwhelmingly more likely to be an outlier rather than an inlier.
\Cref{lem:filter-main} below states the precise guarantees on the ratio of the total score of the inliers versus the total score over the entire dataset.

\begin{lemma}[Filtering] \label{lem:filter-main}
{Make \Cref{def:notation-new}.}
Let $A$ be {an arbitrary} symmetric matrix with $\|A\|_\fr = 1$.
Let {$R = C(1/\alpha)\log(1/ \eta)$ for $C \geq 100\sqrt{\Cvar}$}.
Define $\ymed = \mathrm{Median}(\{\ox^\top A \ox \mid \ox \in \oT\})$.
Define the function $f(\ox):= (\ox^\top A \ox - \ymed)^2$.
{Let $m_1$ be a number less than $|S|/3$.}
Denote by  $q_i$ the $i$-th smallest point of $\{\ox^\top A \ox \mid \ox \in \oT\}$.

If $q_{|T| - m_1} - q_{m_1}  \leq R$
and $\E_{X \sim T}\left[f(x)\right] > C' R^2/\alpha^3$ {for $C' \geq 720/\eps_0$},
that is, in the case where the check in \Cref{line:victorycheck} fails,
then, the function $f(\cdot)$  satisfies 
\begin{align*}
    \sum_{\ox \in \oT}f(\ox) > \frac{40}{\eps_0}\frac{1}{\alpha^3} \sum_{\ox \in \oS \cap \oT}f(\ox) \;.   
\end{align*}
\end{lemma}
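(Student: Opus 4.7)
My goal is to show that the inliers in $\oS \cap \oT$ contribute only a small fraction of the total score $\sum_{\ox \in \oT} f(\ox)$. Concretely, I will upper bound $\sum_{\ox \in \oS \cap \oT} f(\ox) = O(|S \cap T| R^2)$ and then combine with the hypothesis $\sum_{\ox \in \oT} f(\ox) > |T| C' R^2 / \alpha^3$ together with the trivial bound $|T| \geq |S \cap T|$.

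Introduce $\mu_A := \E_{\oX \sim \oS \cap \oT}[\oX^\top A \oX]$ and $\mu_D := \E_{X \sim D}[(H^{\dagger/2} X)^\top A (H^{\dagger/2} X)]$. Applying $(a-b)^2 \leq 2(a-c)^2 + 2(c-b)^2$ with $c = \mu_A$ gives
\[
\sum_{\ox \in \oS \cap \oT} f(\ox) \leq 2|S \cap T| \Var_{\oX \sim \oS \cap \oT}[\oX^\top A \oX] + 2|S \cap T|(\mu_A - \ymed)^2.
\]
The variance term is controlled by stability: since $|S \cap T| \geq (1 - 2\eps_0)|S|$, \Cref{it:var} applies to $S \cap T$ and bounds the variance by $4 \Var_{X \sim D}[(H^{\dagger/2} X)^\top A (H^{\dagger/2} X)] \leq 72 \Cvar / \alpha^2$ via \Cref{lem:var_bound}; this is $O(R^2)$ because $R \geq 100\sqrt{\Cvar}/\alpha$ (and $\log(1/\eta) \geq 1$).

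The main work is to show $|\mu_A - \ymed| = O(R)$, which I split via triangle inequality into bounds on $|\mu_A - \mu_D|$ and $|\mu_D - \ymed|$. For $|\mu_A - \mu_D|$, writing $\oX^\top A \oX = \langle H^{\dagger/2} A H^{\dagger/2}, XX^\top\rangle$ and expanding $XX^\top$ around $\mu$, I invoke stability \Cref{it:mean} and \Cref{it:cov} for $S \cap T$ together with the bounds $\|\oSigma\|_\op \leq 3/\alpha$ and $\|\omu\|_2 \leq \sqrt{3/\alpha}$ from \Cref{cl:meancov_after_transf_new} to get $|\mu_A - \mu_D| = O(1/\alpha) = O(R)$. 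For $|\mu_D - \ymed|$, the quantile gap hypothesis says that $I := [q_{m_1+1}, q_{|T|-m_1}]$ has width at most $R$, and $\ymed \in I$ since $m_1 \leq |S|/3 < |T|/2$. A counting argument shows $I$ contains at least $|S \cap T| - 2m_1 \geq 0.98|S| - 2|S|/3 > 0.3|S|$ inlier values (any middle slot that is not an inlier must be one of the at most $|T| - |S \cap T|$ outliers). Meanwhile the stability tail bound \Cref{it:tails} combined with \Cref{lem:var_bound} ensures at most $\eta|S \cap T| \leq 0.001|S|$ inlier values differ from $\mu_D$ by more than $10\sqrt{18\Cvar}/\alpha \cdot \log(2/\eta) = O(R)$. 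Hence there exists a ``good'' inlier in $I$ within $O(R)$ of $\mu_D$, yielding $|\ymed - \mu_D| \leq R + O(R) = O(R)$ by triangle inequality through this good value.

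Combining these bounds gives $\sum_{\ox \in \oS \cap \oT} f(\ox) \leq K |S \cap T| R^2$ for an explicit constant $K$. Dividing the hypothesis $\sum_{\ox \in \oT} f(\ox) > |T| C' R^2/\alpha^3$ by this upper bound and using $|T| \geq |S \cap T|$ produces a ratio of $\Omega(C'/\alpha^3)$, which exceeds $40/(\eps_0 \alpha^3)$ provided $C'/K \geq 40/\eps_0$; this is why the hypothesis demands $C' \geq 720/\eps_0$. The primary technical obstacle is the combinatorial middle-band argument: it requires the inlier count forced into $I$ (driven by $m_1 \leq |S|/3$) to strictly exceed the ``bad'' inliers permitted by the tail, which is precisely what pins down the parameter choices $\eta \leq 0.001$ and $\eps_0 = 0.01$ in \Cref{def:notation-new}.
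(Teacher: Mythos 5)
Your proposal is correct and follows essentially the same strategy as the paper: upper-bound the inlier score $\sum_{\ox \in \oS \cap \oT} f(\ox)$ by $O(|S|R^2)$ via a variance term plus a ``median is close to the inlier mean'' term, then compare against the hypothesized lower bound on the total score. The paper packages the median-proximity step as a separate Lemma (\Cref{cl:median_close_new}) that centers around $\mu' := \E_{\oX \sim \oS}[\oX^\top A \oX]$ (mean over all of $\oS$, relating it to $\mu_D$ via \Cref{it:quadratic} of \Cref{cl:meancov_after_transf_new}) and argues $q_{m_1} \le y_\mathrm{right}$ and $q_{|T|-m_1} \ge y_\mathrm{left}$ directly by pigeonhole, whereas you center around $\E_{\oX \sim \oS \cap \oT}[\oX^\top A \oX]$ and locate a ``good'' inlier inside the middle band $[q_{m_1+1}, q_{|T|-m_1}]$ by a direct count; both are equivalent in substance, driven by the same two inputs (the tail bound \Cref{it:tails} and the variance bound \Cref{lem:var_bound}) and the same role of the parameter choice $m_1 \le |S|/3$. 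Your final comparison via $|T| \ge |S \cap T|$ is slightly cleaner than the paper's $|T| \ge |S|/2$, and your constants track through to the required $C' \ge 720/\eps_0$ with room to spare.
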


The score ratio determines (in expectation) the ratio between the number of outliers and inliers removed.
In \Cref{lem:filter-main}, the ratio is in the order of $1/\alpha^3$---this will allow us to guarantee that at the end of the entire recursive execution of \Cref{alg:main_alg}, we would have removed at most a $0.01\alpha$ fraction of the inliers. See \Cref{remark:explain_alpha} in \Cref{sec:proof_of_main_thm} for more details.

The rest of the subsection proves \Cref{cl:median_close_new,lem:filter-main}.
We first show \Cref{cl:median_close_new}, that if we take the median of a quadratic polynomial $p$ over the corrupted set, then the sample mean of $p$ over the inliers is not too far from the median if the left and right quantiles are close.
\begin{lemma}[Quantiles of quadratics after normalization]\label{cl:median_close_new} 
Make \Cref{def:notation-new}.
Let $A$ be an arbitrary symmetric matrix with $\|A\|_\fr=1$.
Define $\ymed = \mathrm{Median}\left(\left\{\ox^\top A \ox \mid \ox \in \oT\right\}\right)$.
{Let $m_1$ be any number less than   $|S|/3$.}
{Denote by  $q_i$ the $i$-th smallest point of $\left\{\ox^\top A \ox \mid \ox \in \oT\right\}$.

Suppose that $ q_{|T| - m_1} - q_{m_1} \le R$} for $R := C(1/\alpha)\log(1/ \eta)$ with $C>100\sqrt{\Cvar}$ (recall that the absolute constant $\Cvar$ is defined in \Cref{def:notation-new}).
Then,  $\left|\E_{\oX \sim \oS}[\oX^\top A \oX] - \ymed\right| \leq 2R$.
\end{lemma}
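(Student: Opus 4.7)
The plan is to control $|\ymed - \mu_0|$, where $\mu_0 := \E_{X \sim D}[(H^{\dagger/2}X)^\top A (H^{\dagger/2}X)]$ is the population mean of the quadratic on the inlier distribution, and then absorb the slack $|\E_{\oX \sim \oS}[\oX^\top A \oX] - \mu_0| \le 1/\alpha$ given for free by \Cref{cl:meancov_after_transf_new}\ref{it:quadratic}.

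First, \Cref{lem:var_bound} gives $\sqrt{\Var_{X \sim D}[(H^{\dagger/2}X)^\top A (H^{\dagger/2}X)]} \le \sqrt{18\Cvar}/\alpha$. Applying \Cref{it:tails} of stability to the even degree-$2$ polynomial $x \mapsto x^\top H^{\dagger/2} A H^{\dagger/2} x$ on the stable set $S$, all but an $\eta$-fraction of the values $\{\oX^\top A \oX : \oX \in \oS\}$ are within $10\sqrt{18\Cvar}/\alpha \cdot \ln(2/\eta)$ of $\mu_0$. The choice $C > 100\sqrt{\Cvar}$ together with $\ln(2/\eta) \lesssim \ln(1/\eta)$ for $\eta \le 0.001$ makes this threshold at most $R/2$. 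Combined with $|S \cap T| \ge (1-2\eps_0)|S|$, at least $(1-\eta-2\eps_0)|S| > 0.97|S|$ points of $\oT$ lie in the interval $I_1 := [\mu_0 - R/2,\, \mu_0 + R/2]$.

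Let $I_2 := [q_{m_1},\, q_{|T|-m_1}]$; by hypothesis $|I_2| \le R$, and $\ymed \in I_2$ since $m_1 < |S|/3 \le |T|/2$. The crux of the argument is to show $I_1 \cap I_2 \ne \emptyset$. Suppose not: without loss of generality, $I_1$ lies strictly to the left of $I_2$. Then every $\oT$-point lying in $I_1$ has value strictly less than $q_{m_1}$, so it has rank at most $m_1 - 1$ in the sorted order of $\oT$. This caps the total number of $\oT$-points in $I_1$ by $m_1 < |S|/3$, contradicting the $0.97|S|$ lower bound from the previous paragraph. The symmetric case (where $I_1$ lies to the right of $I_2$) is analogous.

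Finally, with $I_1 \cap I_2 \ne \emptyset$, $|I_1| = R$, and $|I_2| \le R$, we have $I_2 \subseteq [\mu_0 - 3R/2,\, \mu_0 + 3R/2]$, so $|\ymed - \mu_0| \le 3R/2$. Adding $|\E_{\oX \sim \oS}[\oX^\top A \oX] - \mu_0| \le 1/\alpha \le R/2$ (using $R \ge C/\alpha \ge 100\sqrt{\Cvar}/\alpha \ge 2/\alpha$) via the triangle inequality yields $|\ymed - \E_{\oX \sim \oS}[\oX^\top A \oX]| \le 2R$, as desired. The main obstacle I anticipate is the quantile counting step: the naive attempt to compare the counts of $\oT$-points in $I_1$ versus $I_2$ against $|T|$ fails to give a contradiction for small $\alpha$ (since $|T|$ can be as large as $|S|/\alpha$); the correct move is to compare against the small quantile threshold $m_1$ using the ranks induced by $I_2$.
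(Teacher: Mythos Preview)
Your proof is correct and follows essentially the same approach as the paper: both arguments use \Cref{it:tails} together with \Cref{lem:var_bound} to trap at least $|S|/3$ many $\oT$-points in a short interval around the population mean $\mu_0$, compare this count against the quantile rank $m_1 < |S|/3$ to force $[q_{m_1},q_{|T|-m_1}]$ to meet that interval, and then invoke \Cref{cl:meancov_after_transf_new}\ref{it:quadratic} to pass from $\mu_0$ to $\E_{\oX \sim \oS}[\oX^\top A \oX]$. The only cosmetic difference is packaging: the paper shows directly that $\mu'$ lies within $R$ of $[q_{m_1},q_{|T|-m_1}]$, whereas you phrase the same quantile-counting step as ``$I_1 \cap I_2 \neq \emptyset$''.
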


\begin{proof}
Let $\mu' = \E_{\oX \sim \oS}[\oX^\top A \oX]$.
(Not to be confused with $\hat{\mu}$ and $\omu$ in \Cref{def:notation-new}, which are expectations of $\oX$ in the samples and at the population level, instead of the quadratic form in the definition of $\mu'$.)

Given that $\ymed$ (by definition) lies within the interval $[q_{m_1},q_{|T| - m_1}]$ which has length at most $R$, it suffices to argue that  $\mu'$ also lies within distance $R$ of that interval.
Namely, $\mu' \ge q_{m_1}-R$ and $\mu' \le q_{|T| - m_1} + R$.

Let $\sigma^2:= \Var_{X \sim D}[(H^{\dagger/2 } X)^\top  A (H^{\dagger/2} X)]$.
The $(\eta,2\eps_0)$-stability of $S$ in \Cref{def:notation-new} (see \cref{it:tails})  implies that all but an $\eta$-fraction of $\{\ox^\top A \ox: \ox\in \oS \cap \oT\}$
lies in the interval $\E_{X \sim D} [(H^{\dagger/2 } X)^\top A (H^{\dagger/2 } X)] \pm 10 \sigma \log(1/\eta)$ since the size of the set $\oS \cap \oT\}$ is at least $|S|(1-2\eps_0)$
(by \Cref{def:notation-new}).

Thus, 
for at least $(1 - \eta)|S \cap T| \geq |S|(1-2\eps_0 - \eta) |S|$
points in $T$,
the value of $\ox^\top A \ox$ lies in the interval $\E_{X \sim D} [(H^{\dagger/2 } X)^\top A (H^{\dagger/2 } X)] \pm 10 \sigma \log(1/\eta)$.
Since $\eta \leq 0.01$ and $\eps_0<1/4$, the number of such points is at least $|S|/3$.

Rephrasing, there is an interval $[y_\mathrm{left},y_\mathrm{right}]$ that contains at least $|S|/3$ points in $\oT$, where
\begin{align*}
y_\mathrm{left} &= \E_{X \sim D} [(H^{\dagger/2 } X)^\top A (H^{\dagger/2 } X)] - 10 \sigma \log(1/\eta)\;,\\
y_\mathrm{right} &= \E_{X \sim D} [(H^{\dagger/2 } X)^\top A (H^{\dagger/2 } X)] + 10 \sigma \log(1/\eta) \;.
\end{align*}

Therefore, there are at most $|T|-|S|/3$ points in $\oT$ that are less than $y_\mathrm{left}$, implying that $q_{|T|-|S|/3} \ge y_\mathrm{left}$.
Furthermore, since, by assumption we have $m_1 \le |S|/3$, this implies that $q_{|T|-m_1} \ge y_\mathrm{left}$.
By a symmetric argument, we also have that $q_{m_1} \le y_\mathrm{right}.$

Next, recall that by \Cref{it:quadratic} of \Cref{cl:meancov_after_transf_new}, we have that $|\E_{X \sim D} [(H^{\dagger/2 } X)^\top A (H^{\dagger/2 } X)] - \mu'| = |\E_{X \sim D} [(H^{\dagger/2 } X)^\top A (H^{\dagger/2 } X)] - \E_{\oX \sim \oS } [\oX^\top A \oX]| < 1/\alpha$.
Thus, we have $q_{|T|-m_1} \ge y_\mathrm{left} \ge \mu' - 10\sigma\log(1/\eta) - 1/\alpha$.
Rearranging, we get $\mu' \le q_{|T|-m_1} + 10\sigma\log(1/\eta) + 1/\alpha$.
Symmetrically, we get that $\mu' \ge q_{m_1} - 10\sigma\log(1/\eta) - 1/\alpha$.

Finally, we argue that $10\sigma\log(1/\eta) + 1/\alpha \le R$, showing that $\mu' \le q_{|T|-m_1} + R$ and $\mu' \ge q_{m_1} - R$, which as we argued at the beginning of the proof is sufficient to show the lemma statement.
To argue this, by \Cref{lem:var_bound}, we have that $\sigma \leq 5{\sqrt{\Cvar}}/\alpha$.
Thus, it suffices to choose, as in the lemma statement, that $R = (C/\alpha)\log(1/\eta)$ for some sufficiently large $C > 0$.

\end{proof}
\Cref{cl:median_close_new} implies that if we calculate the ``squared deviation'' of the projections of inliers centered around the \emph{empirical median of the corrupted set}, then it will be 
of the order of the empirical variance of inliers up to a factor of $R^2$ (by triangle inequality).
Moreover, the empirical variance of the inliers is of the order $O(1/\alpha^2)$ by \Cref{lem:var_bound}. 
Thus if the variance of the corrupted set is much larger than the upper bound,
then we can assign score to each point, the function $f$ as defined in \Cref{alg:main_alg}, such that the contribution from outliers is much larger than the inliers.

We can now give the proof of \Cref{lem:filter-main}, using \Cref{cl:median_close_new}.

\begin{proof}[Proof of \Cref{lem:filter-main}]
The core argument for the proof is that $\sum_{\ox \in \oS \cap \oT} f(\ox)$ can be upper bounded under the assumption that $q_{|T|-m_1}-q_{m_1} \le R$, as shown in the following claim.
\begin{claim} \label{cl:cleanpoints}    
Assuming the conditions in \Cref{lem:filter-main}, we have $\sum_{\ox \in \oS \cap \oT} f(\ox)\leq 9|S| R^2$.
\end{claim}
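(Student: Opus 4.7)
The plan is to decompose $f(\ox) = (\ox^\top A \ox - \ymed)^2$ by triangle inequality around the inlier population mean of the quadratic form, and then bound each piece separately using stability and the previously established quantile-median estimate. Concretely, let $\mu' := \E_{\oX \sim \oS}[\oX^\top A \oX]$. Using the inequality $(a+b)^2 \le 2a^2 + 2b^2$,
\begin{equation*}
\sum_{\ox \in \oS \cap \oT} f(\ox) \;\le\; 2\sum_{\ox \in \oS \cap \oT}(\ox^\top A \ox - \mu')^2 \;+\; 2|\oS \cap \oT|\,(\mu' - \ymed)^2.
\end{equation*}

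For the second summand, \Cref{cl:median_close_new} directly gives $|\mu' - \ymed| \le 2R$, so this term is at most $2|S|(2R)^2 = 8|S|R^2$.

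For the first summand, I would enlarge the sum to all of $\oS$ (only adding non-negative terms) to get $2\sum_{\ox \in \oS}(\ox^\top A \ox - \mu')^2 = 2|S|\,\Var_{\oX \sim \oS}[\oX^\top A\oX]$. Since $\ox^\top A \ox = x^\top (H^{\dagger/2} A H^{\dagger/2}) x$ is an even degree-$2$ polynomial in $x$, I can apply \Cref{it:var} of the $(\eta,2\eps_0)$-stability of $S$ (taking $A' = S$) to get $\Var_{\oX \sim \oS}[\oX^\top A \oX] \le 4\Var_{X \sim D}[(H^{\dagger/2}X)^\top A (H^{\dagger/2}X)]$. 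Then \Cref{lem:var_bound} bounds the right-hand side by $72\,\Cvar/\alpha^2$, and so the first summand is at most $144\,\Cvar |S|/\alpha^2$.

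Finally, I combine the two bounds. Since $R \ge 100\sqrt{\Cvar}/\alpha$ by hypothesis, we have $R^2 \ge 10^4 \Cvar/\alpha^2 \ge 144\,\Cvar/\alpha^2$, so the first summand is bounded by $|S|R^2$ and the total is at most $9|S|R^2$, as claimed. There is no real obstacle here: the claim is essentially a bookkeeping step that combines the variance control from stability (\Cref{it:var}) with the median-approximates-mean estimate from \Cref{cl:median_close_new}; the only care needed is to re-center around $\mu'$ rather than $\ymed$, since stability controls variance around the true inlier mean while the score function $f$ is centered at $\ymed$.
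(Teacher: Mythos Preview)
Your proof is correct and follows essentially the same approach as the paper: both recenter around $\mu' = \E_{\oX \sim \oS}[\oX^\top A \oX]$, use $(a+b)^2 \le 2a^2+2b^2$, bound the cross term via \Cref{cl:median_close_new}, and bound the variance term by combining \Cref{it:var} with \Cref{lem:var_bound}. The only cosmetic difference is that the paper first enlarges the sum from $\oS\cap\oT$ to $\oS$ and then decomposes, whereas you decompose first and then enlarge the variance term; the arithmetic and final constants coincide.
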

\begin{proof}
Let $\mu' = \E_{\oX \sim \oS}[\oX^\top A \oX]$.
We have the following series of inequalities:
\begin{align}
    \sum_{\ox \in \oS \cap \oT} f(\ox) &\leq \sum_{\ox \in \oS} f(\ox)   \tag{since $f(x) \geq 0$}\\
    &=\sum_{\ox \in \oS} (\ox^\top A \ox - \ymed)^2 \notag \\
    &\leq 2  \sum_{\ox \in \oS } \Bigg( \Big(x^\top A x -  \mu'\Big)^2 + \Big(\mu' - \ymed\Big)^2\Bigg)   \tag{by $(a + b)^2 \leq 2a^2 + 2b^2$}   \\
    &=   2|S| \Var_{\oX \sim \oS}[\oX^\top  A \oX]   + 2\sum_{\ox \in \oS } (\mu'-\ymed)^2 \notag\\
    &\le 2|S| \Var_{\oX \sim \oS}[\oX^\top  A \oX] + 8R^2 |S| \tag{by \Cref{cl:median_close_new}} \\
    &\leq   2 |S| \left(\frac{72 \Cvar}{\alpha^2} \right)  + 8R^2 |S|   \tag{by \Cref{lem:var_bound} and \Cref{it:var}}  \\
    &\leq 9 |S| R^2    \tag{by definition of $R$}  \;,
\end{align}
where the last line uses that {$R \geq 12 \sqrt{\Cvar}/\alpha$, which is satisfied for $C \geq 12\sqrt{\Cvar}$.}
This completes the proof of \Cref{cl:cleanpoints}.
\end{proof}
On the other hand, we have that 
\begin{align*}
 \sum_{\ox \in \oT} f(\ox) &> C'R^2|T|/{\alpha^3} \tag{by assumption} \\
 &\geq 
 0.5C'R^2|S|/{\alpha^3} \tag{since $|T| \geq |S \cap T|\geq |S|(1-2\eps_0) \geq |S|/2$ }\\
  &> {(360/\eps_0)}   R^2|S|/{\alpha^3}  \tag{{since $C' \geq 720/\eps_0$}}\\
  &\geq {40/(\eps_0\alpha^3}) \sum_{\ox \in \oS \cap \oT} f(\ox)  \;. \tag{using \Cref{cl:cleanpoints}}
\end{align*}
This completes the proof of \Cref{lem:filter-main}%

\end{proof}

\subsection{Divider: Identifying Multiple Clusters and Recursing}
\label{sec:divider-main}

The previous subsections covered the cases where (i) the expected score is small, or (ii) the expected score over $\oT$ is large and the $\alpha$ and $1-\alpha$ quantiles of $\{\oX^\top A \oX \; : \; \ox \in \oT\}$ are close to each other.
What remains is the case when both the expected score is large yet the quantiles are far apart.
In this instance, we will not be able to make progress via filtering using the above argument.
This is actually an intuitively reasonable scenario, since the outliers may in fact have another $\approx \alpha m$ samples that are distributed as a different Gaussian with a very different covariance---the algorithm would not be able to tell which Gaussian is supposed to be the inliers.
We will argue that, when both the expected score is large and the quantiles are far apart, the samples are in fact easy to bipartition into 2 clusters, such that the most of the inliers fall within 1 side of the bipartition.
This allows us to make progress outside of filtering, and this clustering mechanism also allows us to handle Gaussian mixture models and make sure we (roughly) handle components separately.

The key intuition is that, by the stability \Cref{it:var,it:tails}, we know that the inliers under the 1-dimensional projection $\{ \oX^\top A \oX \; : \; \oX \in \oS \cap \oT\}$ must be well-concentrated, in fact lying in an interval of length $\widetilde{O}(1/\alpha)$.
The fact that the quantile range is wide implies that there must be some point within the range that is close to very few samples $\oX^\top A \oX$, by an averaging argument.
We can then use the point as a threshold to bipartition our samples.

The precise algorithm and guarantees are stated below in \Cref{lem:divider-main} and \Cref{alg:divider}.

\begin{lemma}[Divider Algorithm] \label{lem:divider-main}
Let $T = (y_1,\dots,y_{m'})$
be a set of $m'$ points on the real line.
Let $S_{\mathrm{proj}} \subset T$ be a set of $n'$ points
such that $S_{\mathrm{proj}}$ is supported on an interval $I$ of length $r$.
For every $i \in [m']$, let the $i$-th smallest point of the set $T$ be $q_{i}$. 
Suppose $q_{|T|-m_1} - q_{m_1} \geq R$ such that $R \geq 10 (m'/n') r $.
Then, given $T$ and $n',m_1$, there is an algorithm (\Cref{alg:divider}) that returns a point $t$ such that if we define $T_1 = \{x \in T: x \leq t\}$ and $T_2 = T \setminus T_1$ then:
    (i) $\min(|T_1|, |T_2|) \geq m_1$ and (ii)  $S_{\mathrm{proj}} \subseteq T_1$ or $S_{\mathrm{proj}} \subseteq T_2$.
\end{lemma}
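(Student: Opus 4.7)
The plan is to design the divider algorithm around a simple grid-based search inside the quantile range $[q_{m_1}, q_{|T|-m_1}]$. First I would compute $R' := q_{|T|-m_1} - q_{m_1}$ from the data (note $R' \ge R$ by hypothesis) and set the scale $r_* := R' n'/(10 m')$; the hypothesis $R \ge 10(m'/n')r$ then guarantees $r_* \ge r$. Next I would partition the quantile interval into $K := \lfloor R'/(2 r_*) \rfloor = \lfloor 5 m'/n' \rfloor$ disjoint consecutive sub-intervals of length exactly $2 r_*$ (discarding any residual tail of length less than $2r_*$ at the end). For each sub-interval I count how many elements of $T$ fall inside it, pick the sub-interval $J^*$ with the minimum count, and return the midpoint of $J^*$ as the threshold $t$.

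To verify property (ii), the key step is a pigeonhole bound showing that $J^*$ contains strictly fewer than $n'$ points of $T$. Since $S_{\mathrm{proj}} \subseteq T$ forces $m' \ge n'$, we have $K \ge 5m'/n' - 1 \ge 4m'/n'$, so the minimum count is at most the average $|T|/K \le m'/K \le n'/4 < n'$. By construction $J^* = [t - r_*, t + r_*]$, and since $r \le r_*$, we have $[t - r, t + r] \subseteq J^*$, so fewer than $n'$ points of $T$ lie within distance $r$ of $t$. If it happened that $t \in I$, then $|I| = r$ would give $I \subseteq [t - r, t + r]$, forcing all $n'$ points of $S_{\mathrm{proj}} \subseteq I$ into $[t - r, t + r]$, contradicting the count. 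Hence $t \notin I$, and since $I$ is a connected interval it must lie entirely on one side of $t$, establishing $S_{\mathrm{proj}} \subseteq T_1$ or $S_{\mathrm{proj}} \subseteq T_2$.

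To verify property (i), I would observe that $J^* \subseteq [q_{m_1}, q_{|T|-m_1}]$ has strictly positive length $2 r_*$, so its midpoint $t$ satisfies $q_{m_1} < t < q_{|T|-m_1}$. Then $q_1, \ldots, q_{m_1}$ all satisfy $q_i \le q_{m_1} < t$, so they belong to $T_1$, giving $|T_1| \ge m_1$. Symmetrically, the points $q_{|T|-m_1}, q_{|T|-m_1 + 1}, \ldots, q_{|T|}$ are all strictly greater than $t$, so they belong to $T_2$, giving $|T_2| \ge m_1 + 1 \ge m_1$.

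The main obstacle will be calibrating the scale $r_*$ so that the two requirements $r_* \ge r$ (which ensures the inclusion $[t-r,t+r] \subseteq J^*$) and $K \ge 5m'/n'$ (which ensures the averaging bound is strictly below $n'$) hold simultaneously; the choice $r_* = R' n'/(10m')$ uses the hypothesis $R \ge 10(m'/n')r$ precisely to balance these two requirements. A minor technicality is the residual tail at the end of the partition, which is harmless because discarding it only shrinks the candidate pool without affecting the pigeonhole count on the covered sub-intervals. The algorithm never needs to know $r$ itself; only the upper bound $r_*$ derived from the observable quantile gap $R'$ is used.
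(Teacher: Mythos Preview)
Your proof is correct and follows essentially the same approach as the paper: partition $[q_{m_1},q_{|T|-m_1}]$ into $\Theta(m'/n')$ equal subintervals, use pigeonhole to locate a sparse one, return its midpoint, and argue by contradiction that the inlier interval $I$ cannot contain that midpoint. The only differences are cosmetic---the paper uses $2m'/n'$ subintervals and a count threshold of $n'/2$ where you use $\lfloor 5m'/n'\rfloor$ and $n'/4$---and you handle the integrality of the number of subintervals more carefully than the paper does.
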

\begin{proof}
We give the algorithm below:

\begin{algorithm}[H]  
    \caption{Divider for list decoding}\label{alg:divider}
    \begin{algorithmic}[1]  
    \Statex  
     \Function{FindDivider}{$T$,$n'$,$m_1$}
     \State Let the $m_1$-th smallest point be $q_{m_1}$ and $m_1$-th largest point be $q_{|T|-m_1}$.
     \State Divide the interval $[q_{m_1},q_{|T|-m_1}]$ into $2m'/n'$ equally-sized subintervals. 
     \State Find a subinterval $I'$ with at most $n'/2$
     points and return its midpoint. 
     Such a subinterval must exist by an averaging argument.

    \EndFunction
    \end{algorithmic}  
  \end{algorithm}
Consider the midpoint $t$ of the returned subinterval $I'$, which by construction is at least $q_{m_1}$ and at most $q_{|T|-m_1}$.
Since $T_1$ contains all points at most $q_{m_1}$, and $T_2$ contains all points at most $q_{|T|-m_1}$, we must have $\min(|T_1|,|T_2|) \ge m_1$.
Lastly, we verify the second desideratum, which holds if $t \notin I$.
For the sake of contradiction, if $t \in I$, then since $I$ has length $r$ and $I'$ has length at least $R/(2m/n') \geq 5r$,
then $I \subseteq I'$.
However, since $|I'\cap T| \le n'/2$,
we know that $I$ cannot be strictly contained in $I'$, reaching the desired contradiction.
\end{proof}

\section{Proof of \Cref{thm:intro}}\label{sec:proof_of_main_thm}

In this section, we present the main algorithmic result of the paper. As previously stated, the theorem holds for distributions beyond Gaussians, as long as the input set of points satisfies the deterministic conditions from \Cref{sec:det-conditions} and the distribution meets a mild requirement. We now restate and prove the theorem in this more general form.
\begin{theorem}[Weak List Decodable Covariance Estimation]
\label{thm:main}
Let the ambient dimension be $d \in \Z_+$, let $C$ be a sufficiently large absolute constant, and let the parameters $\alpha \in (0,1/2)$, $\eps \in  (0,\eps_0)$ for a sufficiently small positive constant $\eps_0$, and failure probability $\delta \in (0,1/2)$ be known to the algorithm.
Let $D$ be a distribution with mean $\mu \in \R^d$, covariance $\Sigma \in \R^{d \times d}$, and assume that $\Var_{X \sim D}[X^\top A X] = O(\|\Sigma^{1/2} A \Sigma^{1/2}\|_F^2 + \| \Sigma^{1/2} A \mu \|_2^2)$.
There is a polynomial-time algorithm such that, on input $\alpha, \delta$ 
as well as an $(\alpha,\eps)$-corruption of a set $S$ (\Cref{def:corruption}) with $|S| > C /(\alpha^6 \eps_0^2) \log(1/(\alpha \delta))$ which is $((\eps_0/40)\alpha^3,2\eps_0))$-stable with respect to $D$ (\Cref{def:stability}),
with probability at least $1-\delta$  over the randomness of the algorithm, the algorithm returns a list of at most $O(1/\alpha)$ many sets $T_i$ which are disjoint subsets of samples in the input, each $T_i$ has size at least $0.5\alpha m$, and there exists a $T_i$ in the output list such that: 
\begin{itemize}
    \item Recall the notation in the corruption model (\Cref{def:corruption}) where $n$ is the size of the original inlier set $S$ and $\ell$ is the number of points in $S$ that the adversary replaced---$n$ and $\ell$ are unknown to the algorithm except that $n \ge \alpha m$ and $\ell \leq \eps n$.
    The set $T_i$ satisfies that $|T_i \cap S| \ge (1-0.01\alpha)(n-\ell)$. %
    \item  Denote $H_i := \E_{X \sim T_i}[X X^\top]$. The matrix $H_i$  satisfies 
    $$\max\left(\|(H_i^\dagger)^{1/2}\Sigma (H_i^\dagger)^{1/2}- \proj{H}\|_\fr\,\,\,,\,\, \|(H_i^\dagger)^{1/2}\Sigma (H_i^\dagger)^{1/2}- \proj{\Sigma}\|_\fr \right) \le O( (1/\alpha^{4}) \log(1/\alpha)).$$  
\end{itemize}

\end{theorem}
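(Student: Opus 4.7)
The plan is to prove \Cref{thm:main} by induction along a carefully chosen \emph{inlier branch} of the recursion tree generated by \Cref{alg:main_alg}. At each recursive call along this branch, I maintain the setup of \Cref{def:notation-new}: (i) the current set $T_t$ contains $|T_t \cap S| \geq (1 - 0.01\alpha)(n - \ell)$ of the original stable inliers, and (ii) $|T_t| \leq (1/\alpha)|S|$, so that \Cref{cl:meancov_after_transf_new} and \Cref{lem:var_bound} apply at every recursive call. At the root these invariants hold by the corruption model and the bound $n \ge \alpha m$. I will argue that along the inlier branch the invariants propagate through both filter and divide steps, that the branch terminates by passing the check at \Cref{line:victorycheck}, and that \Cref{lem:stop_guarantee_new} then delivers the claimed Frobenius-norm bound.

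\textbf{Divide steps.} When the quantile condition at \Cref{line:quantilecheck} fails, I invoke \Cref{lem:divider-main}, taking $S_{\mathrm{proj}}$ to be the subset of projected inlier values $\{\ox^\top A \ox : \ox \in \oS \cap \oT_t\}$ that concentrate around their mean. By \Cref{it:tails} applied to $S \cap T_t$ (valid because the invariant gives $|S \cap T_t| \ge (1 - 2\eps_0)|S|$) and the variance bound of \Cref{lem:var_bound}, all but an $\eta$-fraction of these projected inliers lie in an interval of length $r = O((\sqrt{\Cvar}/\alpha)\log(1/\eta))$. Because $R = C(1/\alpha^2)\log(1/(\eps_0\alpha))$ is chosen larger than $10 (|T_t|/|S \cap T_t|) r \lesssim (1/\alpha) r$, \Cref{lem:divider-main} produces a threshold that sends all of $S_{\mathrm{proj}}$ into a single side, which I designate as the next step of the inlier branch. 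The inlier loss at each divide is therefore at most $\eta |S \cap T_t|$, and since the divide nodes produce disjoint leaves of size $\ge 0.5\alpha m$, there are at most $O(1/\alpha)$ such nodes on any path. The aggregate divide-induced loss is $O(\eta / \alpha) \cdot n \ll 0.01 \alpha n$ for the chosen $\eta = (\eps_0/40)\alpha^3$.

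\textbf{Filter steps.} When the quantile condition holds but the variance check fails, \Cref{lem:filter-main} guarantees that inliers contribute at most a $(\eps_0 \alpha^3/40)$-fraction of the total score. Consequently, the sample removed at \Cref{line:randomfilter} is an inlier with (conditional) probability at most $\eps_0 \alpha^3/40$. Letting $I_k$ denote the number of inliers removed across the first $k$ filter operations anywhere in the entire recursion tree, the sequence $M_k = I_k - (\eps_0\alpha^3/40)\, k$ is a supermartingale with $|M_k - M_{k-1}| \le 1$ with respect to the filtration generated by the algorithm's random choices. Since the total number of filter operations is bounded by $m$ (each removes one sample), \Cref{fact:Azuma} gives $I_m \leq (\eps_0\alpha^3/40)\,m + O(\sqrt{m\log(1/\delta)})$ with probability $1 - \delta$, which is $o(0.01 \alpha n)$ for the sample size chosen. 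Combined with the divide loss, the invariant $|T_t \cap S| \ge (1-0.01\alpha)(n-\ell)$ is preserved throughout the inlier branch. This martingale argument is the principal technical obstacle, since the filter probabilities depend on the full history of previous divide and filter decisions, so the filtration and the time horizon must be defined with care over the entire (random) tree rather than a single nominal branch.

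\textbf{Termination, certificate, and list size.} Each filter strictly reduces $|T_t|$ and each divide strictly shrinks the set, so the inlier branch must reach a leaf. The invariant together with $n - \ell \ge (1 - \eps)\alpha m$ guarantees $|T_t| \ge 0.5\alpha m$, so the branch cannot terminate by returning the empty list and must terminate through the check at \Cref{line:victorycheck} succeeding. At that point $\Var_{\oX \sim \oT_t}[\oX^\top A \oX] \le \E_{\oX \sim \oT_t}[f(\oX)] \le C' R^2/\alpha^3$, and \Cref{lem:stop_guarantee_new} yields
\[
\left\| H_t^{\dagger/2} \Sigma H_t^{\dagger/2} - \proj{H_t} \right\|_\fr^2 \lesssim \frac{R^2}{\alpha^4} + \frac{1}{\alpha^2} = O\!\left(\frac{\log^2(1/(\eps_0\alpha))}{\alpha^8}\right),
\]
giving the claimed $O((1/\alpha^4) \log(1/\alpha))$ bound after taking square roots; the $\proj{\Sigma}$ inequality follows identically from the third display of \Cref{lem:stop_guarantee_new}, using that $\ker(H_t) \subseteq \ker(\Sigma)$ by \Cref{it:rank}. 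Finally, the returned sets are pairwise disjoint (divides partition the current set and filters only remove samples), and each has size $\ge 0.5\alpha m$ by \Cref{line:victorysizecheck}, so the output list has at most $2/\alpha = O(1/\alpha)$ elements.
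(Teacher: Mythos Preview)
Your overall architecture is the same as the paper's: follow an inlier branch through the recursion tree, control divide losses via \Cref{lem:divider-main} and \Cref{it:tails}, control filter losses via a martingale built on \Cref{lem:filter-main}, and finish with \Cref{lem:stop_guarantee_new}. The divide accounting, termination, certificate, and list-size arguments are all correct and match the paper closely.

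There is, however, a genuine gap in your filter analysis. You define $I_k$ as the number of inliers removed by the first $k$ filter operations \emph{anywhere in the entire recursion tree} and claim $M_k = I_k - (\eps_0\alpha^3/40)k$ is a supermartingale. This is false: on a non-inlier branch, \Cref{def:notation-new} need not hold (that branch may contain only a handful of inliers), so \Cref{lem:filter-main} does not apply there and a single filter step could remove an inlier with probability close to $1$. The supermartingale property therefore fails for your $M_k$ as defined. The fix is to restrict the martingale to filter operations \emph{along the inlier branch only}, and to couple it with a stopping time at the first moment the invariant $|S\cap T_t|\ge (1-2\eps_0)|S|$ is violated: up to that time \Cref{lem:filter-main} applies and the increment bound holds; Azuma then shows that with probability $1-\delta$ the stopping time is never reached. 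You allude to this difficulty (``the filtration and the time horizon must be defined with care'') but the object you actually define does not resolve it.

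Once corrected, your single-martingale-along-the-branch organization is a legitimate alternative to the paper's. The paper instead applies a separate martingale (\Cref{lem:martingale_argument}) \emph{per recursive call}, tracks the combined potential $\alpha^3(\eps_0/40)|T| + |S\setminus T|$ inductively over levels of the tree (\Cref{lem:coreinduction}), and union-bounds the $O(1/\alpha)$ per-level failure probabilities. Your approach avoids the per-level induction and union bound at the cost of needing a more delicate global filtration and stopping time; the paper's approach is more modular but carries a slightly heavier potential-function bookkeeping. Both yield the same final bounds.
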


The Gaussian distribution $\cN(\mu,\Sigma)$ satisfies the distributional assumption of \Cref{thm:main} (c.f.\ \Cref{fact:boundradius}), and a set of size 
\new{$m=C'\frac{d^2 \log^5(d/\alpha \delta)}{\alpha^6}$} 
from $\cN(\mu,\Sigma)$ is \new{$((\eps_0/40)\alpha^3,2\eps_0))$}-stable with probability $1-\delta$ for a constant $\eps_0$ (c.f.\ \Cref{lem:DeterministicCond}).
Thus, \Cref{thm:main} theorem covers the Gaussian case $D=\cN(\mu,\Sigma)$ with polynomial sample complexity, and it directly implies \Cref{thm:intro}, the main result of this work.

We achieve the guarantee of \Cref{thm:main} by combining the  procedure of the previous section in a recursive algorithm (\Cref{alg:main_alg}). We maintain the notation of \Cref{def:notation-new} with the addition of subscripts that indicate the number of filtering steps performed.

We sketch the high-level idea of \Cref{thm:intro} before presenting the complete proof.

Recall that \Cref{alg:main_alg} is a recursive algorithm: each call repeatedly filters out samples  before either terminating or splitting the dataset into two and recursively calling itself.
The execution of \Cref{alg:main_alg} can thus be viewed as a binary tree, with each node being a recursive call.

The high-level idea for proving \Cref{thm:intro} is straightforward, though  involving technical calculations to implement.
Consider the subtree grown from the root recursive call, up to and including a certain level $j$.
We proceed by induction on the height $j$ of such subtree, and claim there must exists a leaf node in this subtree such that most of the inliers remain in the input dataset of the leaf node. 

Concretely, let $\cT_j$ be the subtree of height $j$ grown from the root node.
We claim that there must be a leaf in this subtree, whose input set $T$ satisfies
\begin{align}
        \alpha^3 (\eps_0/40) |T| + |S \setminus T | \le  (j+1)\alpha^3 (\eps_0/20) m + \ell \;,
    \label{eq:rel-progress-main}
\end{align}
recalling that $\alpha$ is the proportion of inliers, $\eps_0$ is the maximum fraction of inliers removed by the adversary, $\ell$ is the actual (unknown) number of inliers removed and $m$ is the size of the original dataset returned by the adversary.
The left hand side keeps track of both a) how many inliers we have accidentally removed, through the $|S \setminus T|$ term, and b) the relative proportions of the outliers we have removed versus the inliers we have removed, by comparing both terms on the left hand side.\looseness=-1

For the induction step, we essentially need to analyze the execution of a single recursive call.
We show that \Cref{eq:rel-progress-main} implies \Cref{def:notation-new}, and so the inductive step can follow the case analysis outlined in \Cref{sec:single_iter_main}---either we terminate, or we decide to either filter or bipartition the sample set.
The only missing piece is in the analysis of the repeated filtering: \Cref{lem:filter-main} in \Cref{sec:filter-main} shows that the outliers have a total of $\Theta(1/\alpha^3)$ times the score of the inliers, which means that in expectation, we remove $\Theta(1/\alpha^3)$ more outliers than inliers.
To convert this into a high-probability statement, we use a standard (sub-)martingale argument, which appears in \Cref{sec:matingale-argument}.

We now provide the complete proof of \Cref{thm:main} below. 

\begin{proof}[Proof of \Cref{thm:main}]
Since the final error guarantees do not depend on $\eps$, without loss of generality we use the maximum level of corruptions $\eps = \eps_0 = 0.01$ in the following proof.
We will also use the letter $\eta$ to denote the expression $(\eps_0/40)\alpha^3$ as in the stability assumption in the theorem statement on the inlier set.
We will proceed to prove \Cref{thm:main} by induction, and crucially use the following fact that ``union bounds also work under conditioning''.

\begin{fact}
\label{fact:condunionbound}
If event $A$ happens with probability $1-\tau_1$ and event $B$ happens with probability $1-\tau_2$ conditioned on event $A$, then the probability of both $A$ and $B$ happening is at least $1-\tau_1-\tau_2$.
\end{fact}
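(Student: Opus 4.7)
The plan is to reduce the claim to the elementary product rule for conditional probability. Specifically, I would write $\Pr[A \cap B] = \Pr[A] \cdot \Pr[B \mid A]$, which is well-defined since $\Pr[A] \geq 1 - \tau_1 > 0$ (for the only interesting regime $\tau_1 < 1$; the statement is vacuous otherwise since the right-hand side $1 - \tau_1 - \tau_2$ is then non-positive).

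Next, I would plug in the two hypotheses: $\Pr[A] \geq 1 - \tau_1$ and $\Pr[B \mid A] \geq 1 - \tau_2$. Multiplying these lower bounds (both are non-negative) yields
\[
\Pr[A \cap B] \;\geq\; (1 - \tau_1)(1 - \tau_2) \;=\; 1 - \tau_1 - \tau_2 + \tau_1 \tau_2.
\]
Since $\tau_1, \tau_2 \in [0,1]$, the cross term $\tau_1 \tau_2$ is non-negative, so dropping it only weakens the bound, giving $\Pr[A \cap B] \geq 1 - \tau_1 - \tau_2$, as desired.

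There is essentially no obstacle: the only subtlety is handling the degenerate case $\Pr[A] = 0$ (where conditional probability is undefined) or $\tau_1 \geq 1$ (where the right-hand side is already non-positive and the statement holds trivially). I would dispense with this in a single opening sentence before invoking the product rule. The proof is short enough that it can be written inline in a couple of sentences within the main text rather than as a separate displayed argument.
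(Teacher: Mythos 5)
Your proof is correct, and the argument via the product rule $\Pr[A\cap B] = \Pr[A]\Pr[B\mid A] \ge (1-\tau_1)(1-\tau_2) \ge 1-\tau_1-\tau_2$ is the standard one; the paper itself states \Cref{fact:condunionbound} without proof, treating it as elementary, so there is no competing argument to compare against. Your handling of the degenerate case $\Pr[A]=0$ is a reasonable touch, though one could also sidestep the product rule entirely by writing $\Pr[\neg(A\cap B)] = \Pr[\neg A] + \Pr[A]\Pr[\neg B\mid A] \le \tau_1 + \tau_2$, which avoids any division and holds even when $\Pr[A]=0$ under the usual convention that the second term is then zero.
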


The technical bulk for proving \Cref{thm:main} is \Cref{lem:coreinduction}, an inductive claim over the levels of recursion, that (in addition to some other basic properties) with high probability over any level, either (i) there was a recursive call at a prior level whose input contains most of the inliers, and by definition that call has terminated, or (ii) there exists some recursive call at this current level whose input contains most of the inliers.
We will use the following notation (\Cref{def:rec_tree}) to denote various objects in the recursive execution of our algorithm.

\begin{definition}[Notation for Recursion Tree] \label{def:rec_tree}
We define a rooted binary tree $\cT$ that corresponds to the execution of our recursive algorithm (\Cref{alg:main_alg}).
The root node at level 0 corresponds to the top-level call of \textsc{CovarianceListDecoding} (\Cref{alg:main_alg}) and for every non-leaf node, its left child corresponds to a call of \textsc{CovarianceListDecoding} from \Cref{line:rec1} and its right child corresponds to a call from \Cref{line:rec2}.
Thus, every node is uniquely specified by the level in the tree and its position within that level.
We denote by $\cT_{i,j}$ the $i^\text{th}$ node of level $j$, where the node numbers are also 0-indexed (e.g., the root node is $\cT_{0,0}$).
For the node $\cT_{i,j}$, we denote by $\Tset{i,j}{0}$ the input data set to the corresponding recursive call.
In order to refer to the \emph{working data set} at the $t^\text{th}$ iteration of the main loop in the execution of node $\cT_{i,j}$, we use the notation $\Tset{i,j}{t}$ and $\tildeTset{i,j}{t}$ exactly in the same way as $\Tt{t}$ and $\tildeTt{t}$ are used in \Cref{alg:main_alg}.
Finally, $\cT_j$ (i.e., using a single subscript) refers to the subtree growing from the root and including all the nodes up to and including the $j^\text{th}$ level.
\end{definition}

\begin{lemma}
\label{lem:coreinduction}
In the context of \Cref{thm:main}, consider the recursion tree of \textsc{CovarianceListDecoding} (\Cref{def:rec_tree}).
Then, for any $j \in \{0,\ldots, 9/\alpha\}$, we have that with probability at least $1-0.01\alpha j\delta$ the following holds: 
\begin{enumerate}[label=(\roman*)]
    \item (Size decrease under recursion) The input of every node $\cT_{i,j}$ of level $j$ satisfies $|\Tset{i,j}{0}| \le m - j \alpha m/9$
    \item (Disjoint inputs)  Consider the subtree $\cT_j$ growing from the root and truncated at (but including) level $j$.
    All the leaves in $\cT_j$ have disjoint input sets.
    Note that the leaves may not all be at level $j$, and some might be at earlier levels.
    \item (Bound on inliers removed) Consider the subtree $\cT_j$ growing from the root and truncated at (but including) level $j$.
    There exists a leaf $\cT_{i',j'}$ in $\cT_j$ (but the level of the leaf $j'$ is not necessarily equal to $j$) such that its input $\Tset{i',j'}{0}$ satisfies
    \begin{align}
        \alpha^3 (\eps_0/40) |\Tset{i',j'}{0}| + |S \setminus \Tset{i',j'}{0} | \le  (j+1)\alpha^3 (\eps_0/20) m + \ell
    \label{eq:rel-progress}
    \end{align}
    where as in \Cref{def:corruption}, $m$ is the size of the original input set of the root node $\cT_{0,0}$ and $\ell \le \eps_0 n$ is the number of samples that the adversary replaced in $S$.
    In particular, the number of inliers removed by the algorithm until $\Tset{i',j'}{0}$ is at most $(j+1)\alpha^3 (\eps_0/20) m$.
\end{enumerate}
\end{lemma}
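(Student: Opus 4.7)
The plan is to prove the lemma by induction on $j$, using \Cref{fact:condunionbound} to spend at most an extra $0.01\alpha\delta$ failure probability at each inductive step. The base case $j = 0$ is immediate: the root is the unique leaf of $\cT_0$ with input of size $m$, and $|S \setminus \Tset{0,0}{0}| \le \ell$ from the corruption model, so all three parts hold trivially.

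For the inductive step, I condition on the event that parts (i)--(iii) hold at level $j$ and let $\cT_{i',j'}$ be the leaf furnished by part (iii). If $\cT_{i',j'}$ is still a leaf in $\cT_{j+1}$---either because $j' < j$, or because it terminated during its own level-$j$ execution---then its input and bound are unchanged and the bound upgrades trivially from level $j$ to $j+1$. Otherwise $j' = j$ and $\cT_{i',j}$ has executed its filter loop and reached the splitting step on \Cref{line:new_datasets}, producing two children at level $j+1$. I designate the \emph{good child} as the one that, by \Cref{lem:divider-main}, contains the entire concentrated inlier set $S_{\mathrm{proj}}$, and aim to show that with probability at least $1 - 0.01\alpha\delta$ it satisfies the level-$(j+1)$ bound. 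Parts (i) and (ii) carry no probabilistic cost: (i) follows because \Cref{lem:divider-main} with $m_1 = \alpha m / 9$ forces each side of every split to lose at least $\alpha m / 9$ elements, so any level-$(j+1)$ input has size at most $m - (j+1)\alpha m / 9$; (ii) follows because the two children of any splitting node are complements inside the same $T_{t^*}$, so any two leaves of $\cT_{j+1}$ trace back to a common ancestor at whose split their branches were separated, inheriting disjointness from the induction.

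The technical core is therefore part (iii) in the splitting case. With $c := \alpha^3 \eps_0 / 40 = \eta$, define the potential $Q(T) := c|T| + |S \setminus T|$; the inductive bound reads $Q(\Tset{i',j}{0}) \le (j+1)\alpha^3(\eps_0/20) m + \ell$, and the goal is $Q(T^*_{\text{good}}) \le (j+2)\alpha^3(\eps_0/20)m + \ell$, i.e., an increment of at most $2cm$. Let $f_{\text{in}}, f_{\text{out}}$ be the inliers and outliers removed by the filter loop inside $\cT_{i',j}$, and let $s_{\text{in}}^{\text{bad}}, s_{\text{out}}^{\text{bad}}$ be those sent by the divider to the opposite side of the good child; a direct accounting gives
\[
Q(T^*_{\text{good}}) - Q(\Tset{i',j}{0}) \;=\; (1-c) f_{\text{in}} - c f_{\text{out}} + (1-c) s_{\text{in}}^{\text{bad}} - c s_{\text{out}}^{\text{bad}} \;.
\]
For the filter term, \Cref{lem:filter-main} says that whenever the filter branch is active the probability the next removed point is an inlier is at most $c$, so $M_t := \sum_{i \le t}(\mathbf{1}[\text{$i$-th removal is an inlier}] - c)$ is a supermartingale with bounded differences of $1$. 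Applying \Cref{fact:Azuma} with deviation $cm$ and at most $m$ steps, $(1-c) f_{\text{in}} - c f_{\text{out}} \le cm$ except on an event of probability at most $\exp(-2c^2 m) \le 0.01\alpha\delta$ by the sample complexity hypothesis. For the divider term, \Cref{lem:divider-main} implies that $s_{\text{in}}^{\text{bad}}$ is bounded by the number of inliers of $S \cap T_{t^*}$ whose $A$-projection lies outside the $O(\sigma \log(1/\eta))$-window around the projected mean; by \Cref{it:tails} applied to the $(\eta,2\eps_0)$-stable set $S$, this count is at most $\eta |S| \le cm$, and consequently $(1-c) s_{\text{in}}^{\text{bad}} - c s_{\text{out}}^{\text{bad}} \le cm$. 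Adding the two contributions yields the required increment of at most $2cm$.

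The main obstacle is the circular dependency in invoking \Cref{lem:filter-main} and \Cref{it:tails} consistently throughout the filter loop: both require the working set to still capture most of $S$, yet that guarantee is itself a consequence of the martingale concentration being applied. It is resolved by a bootstrap---the inductive bound on $Q(\Tset{i',j}{0})$ combined with $j \le 9/\alpha$ and $n \ge \alpha m$ forces $|S \setminus \Tset{i',j}{0}| < 2\eps_0 |S|$ at the start of $\cT_{i',j}$; stopping the supermartingale at the first iteration at which this invariant would be violated and observing that such a stop can only occur once $f_{\text{in}}$ is already too large (which Azuma rules out with the prescribed probability) closes the loop. Summing the per-level $0.01\alpha\delta$ failure budgets via \Cref{fact:condunionbound} up to the maximum depth $9/\alpha$ gives total failure probability at most $0.09\delta < \delta$, completing the induction.
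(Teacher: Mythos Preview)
Your proposal is correct and follows essentially the same approach as the paper: induction on the level $j$, the same potential $Q(T)=\alpha^3(\eps_0/40)|T|+|S\setminus T|$, the filter contribution handled by an Azuma/supermartingale argument with a stopping time to avoid the circular dependency, and the divider contribution bounded via \Cref{it:tails} by $\eta|S|\le cm$. The only cosmetic difference is that the paper isolates the filtering martingale step as a separate lemma (\Cref{lem:martingale_argument}) and spends a bit more ink verifying the preconditions of \Cref{lem:filter-main} and \Cref{lem:divider-main} (in particular the check that $R\ge 10(|T_{t^*}|/|S_{\mathrm{proj}}|)\,r$), but your plan clearly intends these checks and they go through exactly as you outline.
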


\begin{remark}\label{remark:explain_alpha}
We remark on why \Cref{eq:rel-progress} uses a cubic power of $\alpha$, which is the same cubic power appearing in \Cref{lem:filter-main} that states that the total score of all the outliers is a $1/\alpha^3$ factor more than the total score of the inliers.
The three powers of $\alpha$ are due to the following reasons: 1) the inliers can be as few as an $\alpha$-fraction of all the samples, 2) the depth of the recursion tree can be as much as $O(1/\alpha)$, meaning that a set can be filtered $O(1/\alpha)$ times, and 3) \Cref{thm:main} guarantees that we remove at most an $O(\alpha)$ fraction of inliers.
We also remark that, by tweaking parameters in the algorithm and increasing the power of $\poly(1/\alpha)$ in our covariance estimation guarantees, we can make the score function ratio (and hence the $\poly(\alpha)$ factor in \Cref{eq:rel-progress}) as large a power of $\alpha$ as we desire.
This would guarantee that we remove at most a smaller $\poly(\alpha)$ fraction of inliers.
\end{remark}

\begin{proof}
We prove \Cref{lem:coreinduction} by induction on the level number $j$.
For the base case of $j = 0$, Conditions 1 and 2 are trivial.
As for Condition 3, at $j=0$ we have $|\Tset{0,0}{0}| =m$  and $|S \setminus \Tset{0,0}{0}| \leq \ell$ by the definition of the contamination model (\Cref{def:corruption}), which directly implies $ \alpha^3(\eps_0/40) |\Tset{0,0}{0}| + |S \setminus \Tset{0,0}{0}| \le  \alpha^3 (\eps_0/40) m + \ell \le \alpha^3(\eps_0/20)m + \ell$ as desired.

To show the inductive case, we assume the lemma statement is true for some $j$ and we will show the statement for the case $j+1$.

\paragraph{Conditions 1 and 2} These are trivially true, by virtue of the fact that the recursive algorithm, after it is done with removing points via filtering, it partitions the input set using \textsc{FindDivider} which guarantees that both output sets have size at least $m_1=\alpha m/9$ (\Cref{lem:divider-main}).

\paragraph{Condition 3}
Recall the notation $\Tset{i,j}{t}$ from \Cref{def:rec_tree} that is used to denote the set of points in the variable $T_t$ after the $t^\text{th}$ iteration in the $i^\text{th}$ recursive call in level $j$.

By \Cref{fact:condunionbound}, we will condition on Condition 3 being true for $\cT_j$, the subtree truncated at (but including) level $j$ and show that Condition 3 holds also for $\cT_{j+1}$, the subtree truncated at level $j+1$, except with probability $0.01\alpha\delta$.
Concretely, the conditioning implies that there exists a leaf $\cT_{i',j'}$ of the subtree $\cT_j$ whose input set $\Tset{i',j'}{0}$ satisfies the inequality in Condition 3 for $j' \le j$.
If $j' < j$, then we are done (since it continues to be a leaf and the desired bound on the right hand size in Equation~\eqref{eq:rel-progress} is only larger).
Otherwise, in the case of $j' = j$, we have to analyze the execution of the recursive call associated with this leaf node in the recursion tree $\cT_j$.

To begin with the  analysis, we verify that the input set $\Tset{i',j}{0}$ satisfies \Cref{def:notation-new}.
We first check that $|S \cap \Tset{i',j}{0}| \ge n(1-2\eps_0)$:
\begin{align}
    |S \cap \Tset{i',j}{0}| &= |S| - |S \setminus T^{(i',j)}_{(0)}| \notag\\
    &\ge n - (j+1)\alpha^3(\eps_0/20)m - \ell \tag{by the inductive hypothesis}\\
    &\ge n - (9/\alpha + 1)\alpha^3(\eps_0/20)m - \ell \tag{$j \le 9/\alpha$}\\
    &\geq n - \alpha^2 \eps_0 m - \eps_0 n \tag{$\ell \le \eps_0 n$}\\
    &\ge n - \alpha \eps_0 n - \eps_0 n \tag{$n \ge \alpha m$}\\
    &\ge n - 1.5\eps_0 n \label{eq:our_assumption} \; ,
\end{align}
where the last line uses that $\alpha \le 1/2$.
Then it remains to check that $|\Tset{i',j}{0}| \leq (1/\alpha)|S|$, but this is trivially satisfied since it holds for the input set $\Tset{0,0}{0}$ that we start with: $|\Tset{i',j}{0}| \leq |\Tset{0,0}{0}| = m \leq (1/\alpha)n=(1/\alpha)|S|$.

In a recursive call, the algorithm iteratively removes points (through \Cref{line:randomfilter}) before either terminating (\Cref{line:victorycheck}) or calling \textsc{FindDivider} and recursing (\Cref{line:call_divider2_left,line:call_divider2_right}).
Denote the iteration just before either terminating or running \textsc{FindDivider} by $t^*_{i'}$.
We need to argue that the iterative filtering of points prior to iteration $t^*_{i'}$ still roughly preserves Condition 3.
\Cref{lem:martingale_argument}, that is stated below and proved in \Cref{lem:martingale_argument}, captures the standard martingale argument for filtering-based robust algorithms, and it informally states that if the input set $\Tset{i',j}{0}$ satisfies \Cref{def:notation-new}, then Condition 3 is roughly preserved.

\begin{restatable}{lemma}{MartingaleLemma} \label{lem:martingale_argument}
Recall the notations and assumptions from \Cref{def:notation-new}.
Consider an execution of \textsc{CovarianceListDecoding}($T_0$) with $T_0 = T$ (with $T$ satisfying the assumptions from \Cref{def:notation-new}), and suppose that $|T_0| \ge {C'}/({\alpha^6 \eps_0^2}) \log\left( {1}/({\alpha \delta} ) \right)$.
Further assume that $ |S  \setminus T_0| \le 
1.5\eps_0 |S|$ (which is a strengthening over \Cref{def:notation-new}).
Moreover, denote by  $T_t$  the dataset through the loop iterations of \Cref{alg:main_alg}.
Then, with probability at least $1-0.01\alpha\delta$, it holds that
\begin{align*}
    \alpha^3(\eps_0/40)  |T_t| + |S \setminus T_t| \le \alpha^3(\eps_0/40)  |T_0| + |S  \setminus T_0| +  \alpha^3(\eps_0/40) |T_0|
    \;,
\end{align*}
simultaneously for all iterations $t$ until the execution of \textsc{CovarianceListDecoding} enters either \Cref{line:victory} or \Cref{line:call_divider1}.
\end{restatable}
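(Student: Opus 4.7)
}
The plan is to track the potential function $\Phi_t := \alpha^3(\eps_0/40) |T_t| + |S \setminus T_t|$ across iterations of the inner loop and show that it is (essentially) a supermartingale whose fluctuations are controlled by Azuma--Hoeffding. The conclusion of the lemma is precisely the statement that $\Phi_t - \Phi_0 \le \alpha^3(\eps_0/40)|T_0|$ uniformly in $t$ until the loop exits.

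First I would compute the one-step drift. At every iteration that proceeds through \Cref{line:randomfilter}, exactly one point $x_\mathrm{removed}$ is deleted from $T_t$, so $|T_{t+1}|=|T_t|-1$ (a deterministic decrement of the first term of $\Phi_t$ by $\alpha^3(\eps_0/40)$), while $|S \setminus T_{t+1}|$ increases by $1$ precisely when $x_\mathrm{removed}$ is an inlier, an event of (conditional) probability $p_t := \sum_{\ox \in \oS \cap \oT_t} f(\ox) / \sum_{\ox \in \oT_t} f(\ox)$. The inner-loop branch at \Cref{line:randomfilter} is only entered when both the termination check in \Cref{line:victorycheck} fails and the quantile check in \Cref{line:quantilecheck} passes, so \Cref{lem:filter-main} applies (with $m_1 = \alpha m /9$, which is less than $|S|/3$) and yields $p_t \le \eps_0\alpha^3/40$. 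Thus $\E[\Phi_{t+1}-\Phi_t \mid \mathcal{F}_t] \le -\alpha^3(\eps_0/40)+p_t \le 0$.

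The main obstacle is a circularity: invoking \Cref{lem:filter-main} requires \Cref{def:notation-new}, in particular $|S \cap T_t| \ge (1-2\eps_0)|S|$ and $|T_t| \le (1/\alpha)|S|$, yet these conditions are what I am trying to maintain through the potential bound. I would resolve this by a stopping-time trick: define
\[\tau := \min\bigl\{t : \Phi_t-\Phi_0 > \alpha^3(\eps_0/40)|T_0| \text{ or the loop exits}\bigr\},\]
and consider the stopped process $\Phi_{t\wedge\tau}$. For $t < \tau$, the potential bound together with the algebraic identity $|S\setminus T_t| = |S\setminus T_0| + (\Phi_t-\Phi_0) + \alpha^3(\eps_0/40)(|T_0|-|T_t|)$ and the inputs $|S\setminus T_0|\le 1.5\eps_0|S|$ and $|T_0|\le |S|/\alpha$ yield $|S\setminus T_t| \le 1.5\eps_0|S|+2\alpha^2(\eps_0/40)|S| \le 2\eps_0|S|$, so \Cref{def:notation-new} continues to hold and the drift computation above is valid. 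Hence $\Phi_{t\wedge\tau}$ is a bounded-difference supermartingale: the jump size is at most $\max(\alpha^3(\eps_0/40),\,1-\alpha^3(\eps_0/40)) \le 1$.

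To conclude, I would apply Azuma--Hoeffding (\Cref{fact:Azuma}) to the stopped supermartingale. The total number of iterations is trivially at most $|T_0|$, so setting the deviation to $\alpha^3(\eps_0/40)|T_0|$ gives a failure probability bounded by $\exp\bigl(-2(\alpha^3\eps_0/40)^2|T_0|^2 / |T_0|\bigr) = \exp\bigl(-\Omega(\alpha^6\eps_0^2|T_0|)\bigr)$. A union bound over the at most $|T_0|$ values of $t$ (to convert the endpoint bound into a maximal bound, which can equivalently be obtained via Doob's maximal inequality) costs only a logarithmic factor, so the hypothesis $|T_0| \ge C'/(\alpha^6\eps_0^2)\log(1/(\alpha\delta))$ with $C'$ sufficiently large makes the failure probability at most $0.01\alpha\delta$. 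On the complementary event, $\tau$ can only have been triggered by loop exit, which is exactly the desired uniform control of $\Phi_t$ throughout the filtering phase.
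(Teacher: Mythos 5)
Your proof is correct and follows essentially the same approach as the paper's: same potential $\alpha^3(\eps_0/40)|T_t| + |S\setminus T_t|$, same stopping-time trick to break the circular dependence on \Cref{def:notation-new} (the paper stops at the absolute threshold $2\eps_0|S|$ while you stop at the relative threshold $\Phi_0 + \alpha^3(\eps_0/40)|T_0|$, but the paper immediately shows the latter implies the former, so the two stopping rules coincide on the event of interest), same drift computation via \Cref{lem:filter-main}, same bounded-difference bound of $1$, and the same Azuma--Hoeffding plus union bound over the $\leq |T_0|$ iterations. One small note: you correctly identify the process as a supermartingale; the paper's text labels it a sub-martingale despite proving $\E[\Delta_{t+1}\mid T_t]\leq \Delta_t$, which is a harmless terminology slip on the paper's side since \Cref{fact:Azuma} is stated for both directions.
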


Concretely, applying \Cref{lem:martingale_argument} for input set $\Tset{i',j}{0}$ and  with failure probability $0.01\alpha \delta$ in place of $\delta$. The lemma is applicable since its first requirement $|S \setminus \Tset{i',j}{0}| \leq 1.5 \eps_0 |S|$ has been already checked in \Cref{eq:our_assumption}, and for its second requirement we have that $|\Tset{i',j}{0}| \geq |\Tset{i',j}{0} \cap S| \geq  (1-1.5\eps_0)|S| \geq  C' /(\alpha^6 \eps_0^2) \log(1/(\alpha \delta))$, where the last one uses \Cref{eq:our_assumption} and the next step uses our assumption $|S| > C /(\alpha^6 \eps_0^2) \log(1/(\alpha \delta))$. 
The lemma then yields that, with probability at least $1-0.01\alpha\delta$, $\Tset{i',j}{t^*_{i'}}$ is such that 
\begin{align}
 \alpha^3(\eps_0/40) |\Tset{i',j}{t^*_{i'}}| + |S \setminus \Tset{i',j}{t^*_{i'}}| &\leq (j+1)\alpha^3(\eps_0/20)m + \ell  +(\eps_0/40) \alpha^3 m \label{eq:delta_bound} \;.
\end{align}

Now, either the recursive call terminates at iteration $t^*_{i'}$ or it goes into \Cref{line:call_divider1}.
In the former case, we are done (since it is now a leaf node).
Otherwise, the recursive call uses \textsc{FindDivider} to partition $\Tset{i',j}{t^*_{i'}}$ into $T'$ and $T''$.
We need to show that at least one of $T'$ and $T''$ satisfies Condition 3 (for case $j+1$), using \Cref{lem:divider-main}.
Let us now derive what parameters and set to invoke \Cref{lem:divider-main} with.

Using (i) \cref{it:tails} of the stability property of the original inlier set $S$, and (ii)  the fact that $(H^{\dagger/2}X)^\top A (H^{\dagger/2}X)-\E_{X \sim \cN(\mu,\Sigma)}[(H^{\dagger/2}X)^\top A (H^{\dagger/2}X)]$ is an even quadratic in $X$ for any $H$ and $A$, it must be the case that a $1-\eta$ fraction (recall $\eta$ is a shorthand for $(\eps_0/40)\alpha^3$) of the points $x \in S$ satisfies 
\begin{align}
    \left|(H^{\dagger/2}x)^\top A (H^{\dagger/2}x) - \E_{X \sim \cN(\mu,\Sigma)}[(H^{\dagger/2}X)^\top A (H^{\dagger/2}X)]\right| &\le 10 \log(2/\eta)\sqrt{\Var_{X \sim D}[(H^{\dagger/2}X)^\top A (H^{\dagger/2}X)]} \notag\\
    &\le 50  \log(2/\eta)  \sqrt{\frac{C_1}{\alpha^2}} \;. \tag{\Cref{lem:var_bound}}
\end{align}
 Denote the set of points $x \in S$ with the above property by $S_{H,A} \subseteq S$, and let $S^{(i',j)}_\mathrm{core} = \Tset{i',j}{t^*_{i'}} \cap S_{H,A}$.
We will therefore apply \Cref{lem:divider-main}
with $S_\mathrm{proj} =  \{ (H^{\dagger/2}x)^\top A (H^{\dagger/2}x) : x \in S^{(i',j)}_\mathrm{core} \}$, $T = \{ (H^{\dagger/2}x)^\top A (H^{\dagger/2}x) : x \in \Tset{i',j}{t^*_{i'}}  \}$ and diameter $r= 100\log(2/\eta) \sqrt{C_1}/\alpha$.
It remains to check that for $q_{m_1} = q_{\mathrm{left}}$ and $q_{|\Tset{i',j}{t^*_{i'}}| - m_1} = q_{\mathrm{right}}$ of $\Tset{i',j}{t^*_{i'}}$, we have $q_{|\Tset{i',j}{t^*_{i'}}|-m_1} - q_{m_1} \ge 10 (|\Tset{i',j}{t^*_{i'}}|/|S_\mathrm{proj}|) r$, to make sure that the application of \Cref{lem:divider-main} is valid.

To show this, recall that \Cref{line:quantilecheck} fails whenever \textsc{FindDivider} is called, meaning that $q_{|\Tset{i',j}{t^*_{i'}}|-m_1} - q_{m_1} \ge R$.
Thus, we need to verify that the definition of $R$ in \Cref{alg:main_alg} satisfies $R \ge 10(|\Tset{i',j}{t^*_{i'}}|/|S_\mathrm{proj}|) r$.
The main step is to lower bound $|S_\mathrm{proj}| = |S^{(i',j)}_\mathrm{core}|$. Since $S^{(i',j)}_\mathrm{core} \subset S$, we can lower bound the above by 
\begin{align*}
|S^{(i',j)}_\mathrm{core}| &\ge |S| - |S \setminus S_{H,A}| - |S \setminus \Tset{i',j}{t^*_{i'}}| \tag{$S^{(i',j)}_\mathrm{core} = \Tset{i',j}{t^*_{i'}} \cap S_{H,A}$}\\
&\ge (1-\eta)|S| - |S \setminus \Tset{i',j}{t^*_{i'}}| \tag{by the definition of $S_{H,A}$}\\
&\ge (1-\eta)|S| - (j+1)\alpha^3(\eps_0/20)m - \ell - (\eps_0/40)\alpha^3 m \tag{by 
 \Cref{eq:delta_bound}}\\
&\ge (1-\eta)n - (j+2)\alpha^3(\eps_0/20) m - \ell \tag{$|S| = n$ and basic inequality}\\
&\ge (1-\eta)n - (j+2)\alpha^2(\eps_0/20) n - \ell \tag{$n \ge \alpha m$}\\
&\ge (1-\eta)n - \alpha \eps_0 n - \ell \tag{$j \le 9/\alpha$ and $\alpha < 1/2$}\\
&\ge (1-\eta)n - \alpha \eps_0 n - \eps_0 n \tag{$\ell \le \eps_0 n$}\\
&\ge (1-\eta-2\eps_0) n \ge n/2 \tag{$\eta \le 0.001$ and $\eps_0 = 0.01$}
\end{align*}
Combining with the fact that $|\Tset{i',j}{t^*_{i'}}| \le m$, we have that $10 (|\Tset{i',j}{t^*_{i'}}|/|S_\mathrm{proj}|) r \le 20 (m/n) r \le (20/\alpha) r$.
Recalling that $r = 100  \log(2/\eta) \sqrt{\Cvar}/\alpha$, $\eta = (\eps_0/40)\alpha^3$ and $\eps_0 = 0.001$, the definition of $R$ in \Cref{alg:main_alg} satisfies $R \ge 6000 \sqrt{\Cvar} (1/\alpha^2) \log(1/(\eps_0\alpha)) \ge (20/\alpha )r \ge 10 (|\Tset{i',j}{t^*_{i'}}|/|S_\mathrm{proj}|) r$.

Knowing that the application of \Cref{lem:divider-main} is valid, the lemma then guarantees that either $T'$ or $T''$ contains all of $S^{(i',j)}_\mathrm{core}$. Without loss of generality, we assume this happens for $T'$.
We now check Condition 3 for case $j+1$ on $T'$:
\begin{align}
     \alpha^3 (\eps_0/40) |T'| + |S\setminus T'| 
     &\le \alpha^3 (\eps_0/40)|\Tset{i',j}{t^*_{i'}}| + |S \setminus T'| \tag{since $T' \subset \Tset{i',j}{t^*_{i'}}$}\\
    &\le  \alpha^3 (\eps_0/40)|\Tset{i',j}{t^*_{i'}}| + |S \setminus S^{(i',j)}_\mathrm{core}|  \tag{by \Cref{lem:divider-main}}\\
    &\le  \alpha^3 (\eps_0/40)|\Tset{i',j}{t^*_{i'}}| + |S \setminus \Tset{i',j}{t^*_{i'}}| + |S \setminus S_{H,A}| \tag{by the definition of $S^{(i',j)}_\mathrm{core}$ and a union bound} \\
    &\le \alpha^3 (\eps_0/40)|\Tset{i',j}{t^*_{i'}}| + |S \setminus \Tset{i',j}{t^*_{i'}}| + \eta n \tag{by the definition of $S_{H,A}$}\\
    &\le (j+1) \alpha^3 (\eps_0/20) m +\ell + (\eps_0/40) \alpha^3 m + \eta n \tag{by Equation~\ref{eq:delta_bound}} \\
    &\le (j+1) \alpha^3 (\eps_0/20) m +\ell + (\eps_0/20) \alpha^3 m  \tag{since $\eta = (\eps_0/40)\alpha^3$ and $n \le m$} \\
    &=  (j+2)\alpha^3 (\eps_0/20) m +\ell \; .
\end{align}

Thus Condition 3 is satisfied by $T'$ for case $j+1$, completing the inductive proof of this lemma.

\end{proof}

We now use the lemma to conclude the proof of \Cref{thm:main}.

By \Cref{lem:coreinduction}, we know that with probability at least $1-0.09\delta$, the recursion tree generated up to level $9/\alpha$, namely $\cT_{9/\alpha}$, satisfies the lemma guarantees.
In particular, since (i) by \Cref{line:call_divider1} in \Cref{alg:main_alg}, the input to any recursive call must have size at least $m_1 = \alpha m/9$, and (ii) Condition 1 of \Cref{lem:coreinduction} applied to $j = m/m_1 = 9/\alpha$ yields an input size upper bound of 0, %
we can conclude that the execution of the algorithm must have completely terminated by level $9/\alpha$.
We know by \Cref{lem:coreinduction} that there exists a leaf node $i$ in the recursion tree whose input satisfies Condition 3 of the third part of the lemma statement.
That is, letting $j$ denote the level of that leaf, we have that
\begin{align}
     \alpha^3 (\eps_0/40) |\Tset{i,j}{0}| + |S \setminus \Tset{i,j}{0}| &\leq  (j+1)\alpha^3 (\eps_0/20) m + \ell \;.
\end{align}

The recursion node $\cT_{i,j}$ starts with that input set $\Tset{i,j}{0}$ and, before terminating, it may perform some filtering steps.
As in the proof of \Cref{lem:coreinduction}, we will use \Cref{lem:martingale_argument} to analyze the ``working set'' right before the recursive call terminates.
If we denote by $t^*$ the number of filtering steps before termination, \Cref{lem:martingale_argument} yields that except with probability $0.01\alpha\delta$, we have
\begin{align}
     \alpha^3 (\eps_0/40) |\Tset{i,j}{t^*}| + |S \setminus \Tset{i,j}{t^*}| &\leq  (j+1)\alpha^3 (\eps_0/20) m + \ell + (\eps_0/40)\alpha^3m \;. \label{eq:delta_bound2}
\end{align}
Note that by \Cref{fact:condunionbound}, the total failure probability is upper bounded by $0.09\delta+0.01\alpha\delta$ which is less than $\delta$.

We are now ready to prove the first bullet of \Cref{thm:main}.
The above inequality implies that
\begin{align}
    |\Tset{i,j}{t^*} \cap S| &= |S| -  |S \setminus \Tset{i,j}{t^*}| \notag \\
    &\geq n - \ell - (\eps_0/20) (j+1) \alpha^3 m  - (\eps_0/40)\alpha^3 m \tag{by \Cref{eq:delta_bound2}}  \\
    &\ge n - \ell - (\eps_0/20) (j+2) \alpha^3 m \notag \\
    &\ge n - \ell - (\eps_0/20) (j+2) \alpha^2 n \tag{$n \ge \alpha m$}\\
    &\geq n - \ell - (\eps_0/2)  \alpha  n \tag{$j \leq 9/\alpha$ and $\alpha < 1/2$}\\
    &\geq n - \ell - \eps_0 (1-\eps_0)  \alpha  n \tag{$1/2 \leq 1-\eps_0$}\\
    &\geq n - \ell - \eps_0 \alpha(n-\ell) \tag{$\ell \leq \eps_0 n$}\\
    &= (n - \ell)(1-\eps_0 \alpha) \;. \label{eq:partone} 
\end{align}
\Cref{eq:partone} further implies that $|\Tset{i,j}{t^*} \cap S|\geq (n - \ell)(1-\eps_0 \alpha) \ge n (1-\eps_0)(1-\eps_0\alpha)\geq (\alpha/2)m$ (since $n \ge \alpha m$ and $\eps_0$ and $\alpha$ are small), meaning that the set $\Tset{i,j}{t^*}$ will indeed be returned.
This completes the proof of the first bullet of \Cref{thm:main}.

We now move to the second bullet.
Defining $H = \E_{X \sim \Tset{i,j}{t^*}}[XX^\top]$, we need to show that $\| H^{\dagger/2} \Sigma H^{\dagger/2}  - H H^{\dagger} \|_\fr \lesssim (1/\alpha^4)\log^2(1/\alpha)$.
We will apply \Cref{lem:stop_guarantee_new} to show the bounds in the second bullet, which requires checking that \Cref{def:notation-new} is satisfied by $\Tset{i,j}{t^*}$. 
Earlier, right below \Cref{eq:partone}, we have already checked that $|\Tset{i,j}{t^*} \cap S| \geq (\alpha/2) |\Tset{i,j}{t^*}|$.
The only remaining condition to check is $|\Tset{i,j}{t^*}| \le |S|/\alpha$, but this is trivially true since $\Tset{i,j}{t^*} \subset \Tset{0,0}{0}$ (the original input set to the recursive algorithm) and $|\Tset{0,0}{0}| \le |S|/\alpha$ by \Cref{def:corruption}.
Thus, \Cref{lem:stop_guarantee_new} yields that
\begin{align}
         \left\|   H^{\dagger/2} \Sigma  H^{\dagger/2} - \proj{H}  \right\|_\fr^2 &\lesssim \frac{1}{\alpha}\Var_{\oX \sim \Tset{i,j}{t^*}}[\oX^\top  A \oX] + \frac{1}{\alpha^2} \notag \\
         &\le \frac{1}{\alpha} \E_{\oX \sim \Tset{i,j}{t^*}}[f(\oX)] + \frac{1}{\alpha^2}  \tag{$\Var(Y) \leq \E[(Y-c)^2]$ for any $c \in \R$} \\
         &\lesssim \frac{R^2}{\alpha^4}  \tag{c.f. \Cref{line:victorycheck} of \Cref{alg:main_alg}}\\
         &\lesssim \frac{1}{\alpha^8}\log^2\left(\frac{1}{\alpha}\right) \tag{$R = \Theta((1/\alpha^2) \log(1/\alpha))$, noting that $\eps_0 = 0.01$}\;,
\end{align}
where the penultimate inequality uses the fact that, if the algorithm terminated, it must be the case that $\E_{\oX \sim \Tset{i,j}{t^*}}[f(\oX)] \leq C' R^2/\alpha^3$ by \Cref{line:victorycheck} in \Cref{alg:main_alg}.
Taking a square root on both sides implies that 
\[  \left\|   H^{\dagger/2} \Sigma  H^{\dagger/2} - \proj{H}  \right\|_\fr \lesssim  \frac{1}{\alpha^4}\log  \left(\frac{1}{\alpha}\right)  \]
The same guarantee holds for $\left\|   H^{\dagger/2} \Sigma  H^{\dagger/2} - \proj{\Sigma}  \right\|_\fr$, also following from \Cref{lem:stop_guarantee_new}.

Lastly, we check that we only return $O(1/\alpha)$ sets in the output list.
This is true by construction: (i) \Cref{line:victorysizecheck} only allows sets to be output if they have size at least $\Omega(\alpha m)$, (ii) there were only $m$ points to begin with and (iii) all the leaves have disjoint input sets by \Cref{lem:coreinduction}.
\end{proof}

\subsection{Proof of \Cref{lem:martingale_argument}}

\label{sec:matingale-argument}
We now prove that filtering does not remove too many inliers using a standard martingale argument.
\MartingaleLemma*

\begin{proof}
We will use the notation of \Cref{def:notation-new} for a single call of \Cref{alg:main_alg}.
Denote by $t$ the iteration count.
Also define the stopping time $t_{\mathrm{end}}$ to be the first iteration when $\alpha^3(\eps_0/40) |T_{t_{\mathrm{end}}}| + |S \setminus T_{t_{\mathrm{end}}}| > 2 \eps_0|S|  $  or when the iteration goes into \Cref{line:victorycheck} or \Cref{line:call_divider1} instead of \Cref{line:randomfilter} (that is, when $\E_{X \sim T_{t_{\mathrm{end}}}}[f(H_t^{\dagger/2}X)] \leq C' R^2/\alpha^3$ or when $q_\mathrm{right}-q_\mathrm{left} > R$ for the iteration $t_{\mathrm{end}}$).
Now, define $\Delta_t = \alpha^3(\eps_0/40)  |T_{\min{\{t,t_{\mathrm{end}} \}}} | + |S \setminus T_{\min{\{t,t_{\mathrm{end}} \}}}|$.

In order to prove the lemma, we will show that at the first $t^*$ (if one exists) such that $\alpha^3(\eps_0/40)  |T_{t^*}| + |S \setminus T_{t^*}| > \alpha^3(\eps_0/40)  |T_0| + |S  \setminus T_0| +  \alpha^3(\eps_0/40) |T_0|$, then $\Delta_{t^*} > \alpha^3(\eps_0/40)  |T_0| + |S  \setminus T_0| +  \alpha^3(\eps_0/40) |T_0|$ as well.
Afterwards, we will show that $\Delta_t$ is a sub-martingale and use sub-martingale tail bounds to show that the sequence $\Delta_t$ remains small over an entire trajectory of $|T_0|$ steps with high probability.

The first step is easy to show, since the threshold of $\alpha^3(\eps_0/40)  |T_0| + |S  \setminus T_0| +  \alpha^3(\eps_0/40) |T_0| < \alpha^3(\eps_0/20)|T_0| + 1.5\eps_0 |S|  \le \alpha^2(\eps_0/20) |S| + 1.5 \eps_0 |S| \le 2 \eps_0 |S| $, where the first inequality is by the lemma assumption.
Therefore, $t^* \leq t_\mathrm{end}$ if it exists, meaning that $\Delta_{t^*} = \alpha^3(\eps_0/40)  |T_{t^*}| + |S \setminus T_{t^*}|$.

Now we need to show that $\Delta_t$ is a sub-martingale with respect to the sequence $T_{\min\{t,t_{\mathrm{end}}\}}$.
If  $t\geq t_{\mathrm{end}}$, then $\Delta_{t+1}$ is by definition equal to $\Delta_t$ and thus $\E[\Delta_{t+1} \mid T_{\min\{t,t_{\mathrm{end}}\}}] = \Delta_t$.
Otherwise,  we have that 
$\E[\Delta_{t+1} \mid T_{t}]$ is equal to $\Delta_{t}$ plus the expected number of inliers removed minus $\alpha^3(\eps_0/40) $ times the expected number of all points removed by our filter.
Since the stopping condition is not satisfied, we have $|S \setminus T_{t}| \le 2 \eps_0 |S|$, meaning that \Cref{def:notation-new} continues to hold for $T_t$, as well as that the other conditions for \Cref{lem:filter-main} hold.
We can therefore apply \Cref{lem:filter-main} to obtain
\begin{align*}
    \E[\Delta_{t+1} \mid T_{t} ] = \Delta_{t} + \sum_{\ox \in \oS \cap \oT_{t}} f(\ox) - \alpha^3 \frac{\eps_0}{40}  \sum_{\ox \in  \oT_{t}} f(\ox) \le \Delta_t\;.
\end{align*}
Summarizing, the above case analysis shows that $\{\Delta_t\}_{t \in \N}$ is a sub-martingale with respect to $T_{\min\{t,t_{\mathrm{end}}\}}$.

We note also that for every $t$, $|\Delta_t - \Delta_{t+1}| \leq 1$ with probability 1.
Thus, using the standard Azuma-Hoeffding inequality for sub-martingales (\Cref{fact:Azuma}), we have that for every  $t \leq |T_0|$,
\begin{align*}
    \pr\left[ \Delta_t - \Delta_0 > (\eps_0/40)\alpha^3 |T_0|   \right] \leq e^{-(2/40^2) \cdot \eps_0^2 \alpha^6 |T_0| } \;.
\end{align*}
By a union bound over all $t \in [|T_0|]$, we have 
\begin{align}
    \pr\left[ \exists t \in [m] : \Delta_t > \Delta_0 + \alpha^3(\eps_0/40) |T_0|   \right] \leq |T_0| e^{-(2/40^2) \cdot \eps_0^2 \alpha^6 |T_0| } \leq 0.01\alpha\delta \;, \label{eq:result}
\end{align}
where the last inequality uses that $|T_0| >  \frac{C'}{\alpha^6 \eps_0^2} \log\left( \frac{1}{\alpha \delta}  \right)$ for a sufficiently large absolute constant $C'$.

\end{proof}

\section{Application to Robustly Learning Gaussian Mixture Models} %
\label{sec:gmm_appendix}
In this section, we prove the application of our result to robustly learning Gaussian mixture models.
\ThmGMM*

We show the first two bullets as a direct corollary of \Cref{thm:intro} applied to each Gaussian component $D=\cN(\mu_p,\Sigma_p)$ as the inlier distribution.
The proof of the final bullet involves some more involved linear algebraic lemmas.
Here we give only the high-level proof of the final bullet, and defer the proof of these individual lemmas to \Cref{sec:linAlgebra}.

\begin{proof}
First observe that, with a sufficiently large $\poly(d,\frac{1}{\alpha},\log\frac{1}{\delta})$ samples, with probability at least $1-\delta/4$, the number of inliers $n_p$ from component $p$ is at least $0.99\alpha m$ for all components $p$.
We will condition on this event.

Furthermore, noting that in the corruption model defined in \Cref{def:GMMcorruption}, at most $\eps \alpha m$ points can be corrupted total, this implies that $\ell_p \le \eps \alpha m \le (\eps/0.99) n_p$ (since $n_p \ge 0.99\alpha m$).

Therefore, we apply \Cref{thm:intro} to all the $k$ Gaussian components, using parameter $0.99\alpha$ in place of $\alpha$, $\eps/0.99$ in place of $\eps$, and failure probability $\delta/(4k)$.
This implies that, with a union bound, using $m = \poly(d,1/\alpha,\log k/\delta) = \poly(d,1/\alpha,\log 1/\delta)$ samples as in \Cref{lem:DeterministicCond} (since $k \lesssim 1/\alpha$, for otherwise the Gaussian mixture will have weight greater than 1), with probability at least $1-\delta/4$, the two bullet points in the theorem holds for all the Gaussian components, directly guaranteed by \Cref{thm:intro}.

It remains to show that the algorithm will return at most $k$ subsets in its output list.
Noting that, across the subsets output by the algorithm for each of the components, there is already a total of $(1-0.01\alpha)\sum_{p =1}^k (n_p - \ell_p)$ points being output.
Recall that $\sum_p n_p = m$ and $\sum_p \ell_p \le \eps \alpha m$.
Thus, in the output subsets corresponding to the components, there are a total of at least $m - (\eps+0.01)\alpha m\geq m - 0.4 \alpha m$ points (where we used that $\eps$ is bounded by sufficiently small constant.
Since (i) there were only $m$ points in the input and (ii) all the output subsets are disjoint subsets of the input, this means that there are at most $0.4 \alpha m$ points that can be output in any additional subsets not corresponding to any Gaussian component.
However, by \Cref{thm:intro}, any output subset must have size at least $0.5 (0.99\alpha) m$,
and so there cannot be any extra subset output in the list that does not correspond to any Gaussian component.
This shows that there can be at most $k$ subsets in the output list.

We turn to the final bullet of the theorem.
We will show that any two components $p$ and $p'$ that belong to the same set $T_i$ in the output of the algorithm satisfy that $\|\Sigma^{-1/2}\Sigma_p\Sigma^{-1/2}  - \Sigma^{-1/2}\Sigma_{p'}\Sigma^{-1/2}\|_\fr \lesssim \frac{1}{\alpha^5} \log(1/\alpha)$.
Taking the contrapositive yields the final theorem bullet.

Let $H$ be the second moment matrix of this set, that is, $H = \E_{X \sim T_i}[XX^\top]$.
Then $H$ satisfies that $\|I - H^{\dagger/2} \Sigma_{p} H^{\dagger/2}\|_\fr \leq r$ and $\|I - H^{\dagger/2} \Sigma_{p'} H^{\dagger/2}\|_\fr \leq r$ for $r \lesssim (1/\alpha^4) \log(1/\alpha)$, by the second bullet point of the theorem.
Furthermore, $H$ is full rank because $H$ is the second moment matrix of a large subset of empirical points of component $p$ and thus \Cref{it:rank} implies that $\ker(H)$ is contained in $\ker(\Sigma_p)$, which is empty since $\Sigma_p$ is full rank by the theorem assumption.

We will now apply the following result (\Cref{lem:closeness_norm}) to infer that $\Sigma_p$ and $\Sigma_{p'}$ must be close, in the sense of the final theorem bullet point.
We defer the proof of this lemma to \Cref{sec:linAlgebra}.
\begin{restatable}{lemma}{LemClosenessNorm}
\label{lem:closeness_norm}
Consider two arbitrary full rank, positive definite matrices $\Sigma_1$ and $\Sigma_2$, and suppose there exists a full rank, positive definite matrix $H$ such that (i) $\|I - H^{\dagger/2} \Sigma_{1} H^{\dagger/2}\|_\fr \leq \rho$ and (ii) $\|I - H^{\dagger/2} \Sigma_2 H^{\dagger/2}\|_\fr \leq \rho$.
Then, for an arbitrary full rank, positive definite matrix $\Sigma$, we have
$$\|\Sigma^{-1/2}\Sigma_1\Sigma^{-1/2}  - \Sigma^{-1/2}\Sigma_2\Sigma^{-1/2}\|_\fr \leq  5\rho \max(\|\Sigma^{-1/2}\Sigma_1\Sigma^{-1/2}\|_\op, \|\Sigma^{-1/2}\Sigma_2\Sigma^{-1/2}\|_\op). $$

\end{restatable}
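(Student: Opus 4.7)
The plan is to reduce the bound on $\|M_1 - M_2\|_\fr$, where $M_i := \Sigma^{-1/2}\Sigma_i \Sigma^{-1/2}$, to a single application of Frobenius submultiplicativity after expressing the difference in a convenient factored form. First I would introduce the notation $A_i := H^{-1/2}\Sigma_i H^{-1/2}$ and $N := \Sigma^{-1/2} H^{1/2}$. Under this notation the hypotheses become $\|I - A_i\|_\fr \leq \rho$ for $i \in \{1,2\}$, and direct expansion (cancelling the $H^{1/2}H^{-1/2}$ pairs) yields the key identity $M_i = N A_i N^\top$. Subtracting gives the clean factored form
\[ M_1 - M_2 \;=\; N(A_1 - A_2) N^\top. \]

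Next I would apply \Cref{fact:FrobeniusSubmult} to obtain $\|M_1 - M_2\|_\fr \leq \|N\|_\op^2 \, \|A_1 - A_2\|_\fr$. The triangle inequality together with the hypotheses gives $\|A_1 - A_2\|_\fr \leq 2\rho$, while $\|N\|_\op^2 = \|NN^\top\|_\op = \|\Sigma^{-1/2} H \Sigma^{-1/2}\|_\op$. The remaining task is therefore to bound $\|\Sigma^{-1/2} H \Sigma^{-1/2}\|_\op$ in terms of $\max_i \|M_i\|_\op$.

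For this last step I would use that $\|I - A_i\|_\op \leq \|I - A_i\|_\fr \leq \rho$, which gives $A_i \succeq (1-\rho)I$. Conjugating by $H^{1/2}$ yields $\Sigma_i \succeq (1-\rho) H$, and a further conjugation by $\Sigma^{-1/2}$ gives $M_i \succeq (1-\rho)\,\Sigma^{-1/2}H\Sigma^{-1/2}$. Since the operator norm is monotone on the Loewner order of PSD matrices, this implies $\|\Sigma^{-1/2}H\Sigma^{-1/2}\|_\op \leq \|M_i\|_\op/(1-\rho)$. Combining the bounds yields $\|M_1 - M_2\|_\fr \leq \tfrac{2\rho}{1-\rho} \min_i\|M_i\|_\op$, which is at most $5\rho\,\max(\|M_1\|_\op,\|M_2\|_\op)$ whenever $\rho$ is bounded away from $1$ by a positive constant (e.g.\ $\rho \le 3/5$ already gives $\tfrac{2\rho}{1-\rho} \le 5\rho$).

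The main obstacle I anticipate is a circular self-reference that appears in the most direct attempt at bounding $\|\Sigma^{-1/2}H\Sigma^{-1/2}\|_\op$: writing $\Sigma^{-1/2}H\Sigma^{-1/2} = M_i - N(A_i-I)N^\top$ and taking operator norms gives $\|\Sigma^{-1/2}H\Sigma^{-1/2}\|_\op \leq \|M_i\|_\op + \rho\,\|\Sigma^{-1/2}H\Sigma^{-1/2}\|_\op$, which can be solved for the quantity of interest only when $\rho<1$. The Loewner-order route above sidesteps the circularity at the level of operators, but it still requires $\rho<1$ in order to invert the factor $1-\rho$. The proof therefore works cleanly in the small-$\rho$ regime, and the factor $5$ in the lemma statement is precisely the slack needed to absorb $\tfrac{2\rho}{1-\rho}$ into $5\rho$ in that regime.
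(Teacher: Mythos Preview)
Your argument is clean and correct in the regime $\rho \le 3/5$, but the lemma is stated (and used) for arbitrary $\rho$. In the paper's application (the proof of \Cref{thm:GMM}), the parameter is $\rho = r \lesssim (1/\alpha^4)\log(1/\alpha)$, which is much larger than $1$ for small $\alpha$. Your proof genuinely breaks in that regime: once $\rho \ge 1$, the step $A_i \succeq (1-\rho)I$ is vacuous, and in fact the quantity $\|N\|_\op^2 = \|\Sigma^{-1/2}H\Sigma^{-1/2}\|_\op$ need not be controlled by $\max_i\|M_i\|_\op$ at all. For instance, with $\Sigma = \Sigma_1 = \Sigma_2 = I$ and $H$ equal to the identity except for one huge diagonal entry, one has $\|I - A_i\|_\fr$ close to $1$ while $\|N\|_\op^2 = \|H\|_\op$ is arbitrarily large and $\|M_i\|_\op = 1$. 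The factorization $M_1 - M_2 = N(A_1 - A_2)N^\top$ followed by $\|N\|_\op^2\|A_1 - A_2\|_\fr$ is therefore too lossy when $H$ has a few very large eigenvalues.

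The paper's proof takes a different route precisely to handle $\rho \ge 1$. Writing $G = H^{-1/2}\Sigma^{1/2}$ so that $\|I - G\tilde\Sigma_i G^\top\|_\fr \le \rho$ with $\tilde\Sigma_i = M_i$, the key structural lemma (\Cref{lem:diffFrobNew}) shows that among the right singular directions of $G$, at most $4\rho^2$ of them are ``bad'' (roughly, directions where the corresponding singular value of $G$ is too small to invert). On the complementary projection $L$ of rank at least $d - 4\rho^2$, one can legitimately pass from $G\tilde\Sigma_i G^\top \approx I$ to $\tilde\Sigma_i \approx (G^\dagger)(G^\dagger)^\top$, obtaining $\|L(\tilde\Sigma_1 - \tilde\Sigma_2)L^\top\|_\fr \lesssim \rho\max_i\|\tilde\Sigma_i\|_\op$. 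The remaining $O(\rho^2)$ directions contribute at most $\|I - L\|_\fr \cdot \max_i\|\tilde\Sigma_i\|_\op \lesssim \rho\max_i\|\tilde\Sigma_i\|_\op$ to the Frobenius norm. This dimension-counting detour is exactly what is needed to avoid the blow-up in $\|N\|_\op$ that your approach suffers, and it is the missing idea in your proposal.
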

We will now apply \Cref{lem:closeness_norm} to $\Sigma_1 = \Sigma_p$ and $\Sigma_2 = \Sigma_{p'}$, and upper bound the right hand side of the lemma to give the final theorem bullet point.

We claim that the operator norm of $\Sigma^{-1/2} \Sigma_p \Sigma^{-1/2}$ is bounded by $1/\alpha$, due to the fact that component $p$ has mass at least $\alpha$.
Indeed, we have that
\begin{align*}
\Sigma_{p} \preceq \E_{X \sim \cN(\mu_p, \Sigma_p)}[(X - \mu)(X - \mu)^\top] \preceq \frac{1}{\alpha_p} \cdot  \E_{X \sim \sum_i \alpha_i \cN(\mu_i, \Sigma_i)} [(X - \mu)(X - \mu)^\top] =     \frac{1}{\alpha_p} \cdot \Sigma\,\,.
\end{align*}
This in particular implies that $\Sigma^{-1/2}\Sigma_{p}\Sigma^{-1/2} \preceq \frac{1}{\alpha_p} I \preceq \frac{1}{\alpha} I$, and hence $\|\Sigma^{-1/2}\Sigma_{p}\Sigma^{-1/2}\|_\op \le \frac{1}{\alpha}$.
The same conclusion holds for the component $p'$.
Recalling that $r \lesssim (1/\alpha^4)\log(1/\alpha)$, the right hand side of \Cref{lem:closeness_norm} can be upper bounded by $O((1/\alpha^5)\log(1/\alpha))$, thus yielding the final bullet in the theorem, as discussed above.

\end{proof}

We end this section with a brief discussion on the proof of \Cref{lem:closeness_norm}.
The lemma states that, if there exists a single matrix $H$ that is an approximate square root to both $\Sigma_1$ and $\Sigma_2$, in the sense that $\|I-H^{\dagger/2}\Sigma_{1}H^{\dagger/2}\|_\fr$ and $\|I-H^{\dagger/2}\Sigma_{1}H^{\dagger/2}\|_\fr$ are both upper bounded, then $\Sigma_1$ and $\Sigma_2$ cannot be too far from each other, in that $\Sigma_1 - \Sigma_2$ must also have bounded Frobenius norm.
In fact, this can be generalized to the normalized version $\Sigma^{-1/2}(\Sigma_1 - \Sigma_2)\Sigma^{-1/2}$ for any positive definite  $\Sigma$.

For simplicity of the discussion, let us assume for now that $\Sigma = I$.
To show \Cref{lem:closeness_norm}, a natural first idea is to show that, if $\|I-H^{\dagger/2}\Sigma_1H^{\dagger/2}\|_\fr$ is bounded, say by some quantity $\rho$, then so must be $\|HH^\top - \Sigma_1\|_\fr$ by some quantity related to $O(\rho)$, times the operator norm of $\Sigma_1$ for appropriate scaling.
Unfortunately, that is not true: Even for $\Sigma_1 = I$, we can have $H$ being infinity in the first diagonal element and 1s in the other diagonal elements, and $\|I-H^{\dagger/2}\Sigma_1 H^{\dagger/2}\|_\fr$ would only be 1 yet $HH^\top - \Sigma_1$ is infinite in the first dimension.
The crucial observation, then, is that a slightly weaker statement is true: On an orthonormal basis that depends only on $H$ and not on $\Sigma_1$, $HH^\top$ approximates $\Sigma_1$ in most dimensions, namely $d-O(\rho^2)$ many dimensions.
This statement is formally captured by \Cref{lem:diffFrobNew}.
Thus, in these dimensions, we can bound the difference between $\Sigma_1$ and $\Sigma_2$ (in Frobenius norm).
In the rest of the dimensions where $HH^\top$ does not approximate either $\Sigma_1$ or $\Sigma_2$ well, we can bound these dimensions' contribution to the Frobenius norm of $\Sigma_1 - \Sigma_2$ simply by $\max(\|\Sigma_1\|_\op,\|\Sigma_2\|_\op)$ times the number of such dimensions, which is small.
Combining these bounds yields \Cref{lem:closeness_norm} as desired.

\bibliographystyle{alpha}
\bibliography{allrefs.bib,newrefs.bib}
\newpage
\appendix

\section*{Appendix}

\section{Linear Algebraic Results}
\label{sec:linAlgebra}
In this section, we prove the proofs of certain linear algebraic results that we need.
In particular, we provide the proofs of \Cref{lem:sigma-guarantee,lem:closeness_norm} that were used earlier.
Before that, we state the following facts that will be used later:

\begin{fact}[Pseudo-inverse of a matrix]
\label{fact:pinverse}
Let $G$ be a square matrix and define $R = G^\dagger$. Let the SVD of $G$ be $U \Lambda V^\top$.
Then $R = V \Lambda^\dagger U^\top$,  $\proj{G} = U \Lambda \Lambda^\dagger U^\top$, and $R = R\proj{G}$. Moreover, 
	 Let $\Delta$ be any arbitrary diagonal matrix such that if $\Lambda_{i,i} = 0$, then $\Delta_{i,i} = 0$, and define $L = V \Delta V^\top$. Then $L = L R G$.
\end{fact}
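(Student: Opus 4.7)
The plan is to verify each claim by expanding the SVD $G = U\Lambda V^\top$ and working through the resulting products of diagonal and orthogonal matrices. For the first identity $R = V\Lambda^\dagger U^\top$, I would verify the four Moore–Penrose conditions directly: setting $M = V\Lambda^\dagger U^\top$, compute $GMG = U\Lambda V^\top V \Lambda^\dagger U^\top U\Lambda V^\top = U\Lambda\Lambda^\dagger\Lambda V^\top = U\Lambda V^\top = G$ using $V^\top V = I$, $U^\top U = I$, and the diagonal identity $\Lambda\Lambda^\dagger\Lambda = \Lambda$ (which holds entrywise since $\Lambda$ is diagonal and $\Lambda^\dagger$ inverts only the non-zero entries). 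The remaining three conditions $MGM = M$, $(GM)^\top = GM$, and $(MG)^\top = MG$ follow by the same kind of cancellation, and then uniqueness of the Moore–Penrose pseudo-inverse gives $R = M$.

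For the second identity, I simply substitute: $\proj{G} = GG^\dagger = (U\Lambda V^\top)(V\Lambda^\dagger U^\top) = U\Lambda\Lambda^\dagger U^\top$, again using $V^\top V = I$. For the third identity $R = R\proj{G}$, I would compute
\[
R\proj{G} = (V\Lambda^\dagger U^\top)(U\Lambda\Lambda^\dagger U^\top) = V\Lambda^\dagger\Lambda\Lambda^\dagger U^\top = V\Lambda^\dagger U^\top = R,
\]
where the penultimate equality uses the diagonal identity $\Lambda^\dagger\Lambda\Lambda^\dagger = \Lambda^\dagger$, a direct entrywise check.

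For the fourth identity, expand $LRG = (V\Delta V^\top)(V\Lambda^\dagger U^\top)(U\Lambda V^\top) = V\Delta\Lambda^\dagger\Lambda V^\top$ using $V^\top V = I$ and $U^\top U = I$. The matrix $\Lambda^\dagger\Lambda$ is diagonal, with a $1$ in every position where $\Lambda_{i,i} \neq 0$ and a $0$ where $\Lambda_{i,i} = 0$. Hence the diagonal matrix $\Delta\Lambda^\dagger\Lambda$ agrees with $\Delta$ at every index $i$ where $\Lambda_{i,i}\neq 0$, and equals $0$ at every index $i$ where $\Lambda_{i,i} = 0$; but by the hypothesis on $\Delta$, the latter entries of $\Delta$ are already $0$, so $\Delta\Lambda^\dagger\Lambda = \Delta$. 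Substituting gives $LRG = V\Delta V^\top = L$, as claimed.

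All four parts reduce to routine manipulations of orthogonal factors together with diagonal entrywise identities, so there is no real obstacle — the only thing to be careful about is tracking which of the two sides of the SVD (the $U$ side for the range, the $V$ side for the corange) the projections and pseudo-inverse act on, which is exactly what makes the last identity require the hypothesis on $\Delta$'s zero pattern being aligned with $\Lambda$'s.
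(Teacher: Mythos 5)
Your proof is correct and takes essentially the same route as the paper: expand the SVD, cancel the orthogonal factors, and reduce everything to entrywise identities on the diagonal matrix $\Lambda$. The only differences are cosmetic — the paper cites \cite[Chapter 8 and Proposition 8.1.7]{Bernstein18} for the first two identities whereas you verify the first directly via uniqueness of the Moore--Penrose pseudo-inverse, and you correctly invoke $\Lambda^\dagger\Lambda\Lambda^\dagger = \Lambda^\dagger$ for the third identity where the paper's inline justification has an apparent typo (it writes the right-hand side as $\Lambda$).
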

\begin{proof}
The first follows from \cite[Chapter 8]{Bernstein18}, the second follows from \cite[Proposition 8.1.7 (xii)]{Bernstein18}, and the third follows from the first two and the fact that $\Lambda^\dagger \Lambda \Lambda^\dagger = \Lambda$.
The final claim follows from the following series of equalities:
	\begin{align*}
	LRG &= V \Delta V^\top V \Lambda^\dagger U^\top U \Lambda V^\top \\
	&= V \Delta V^\top V \Lambda^\dagger U^\top U \Lambda V^\top\\
	&= V \Delta \Lambda ^\dagger \Lambda V ^\top\\
	&= V \Delta V^\top = L,
	\end{align*}
	where we use that $\Delta$ is diagonal and for each $i$ such that $\Delta_{i,i} \neq 0$,
	$\Lambda^\dagger_{i,i}\Lambda_{i,i} = 1$.
	
\end{proof}
\begin{fact}[{\cite[Fact 8.4.16]{Bernstein18}}]
\label{fact:pinverse-3}
Let $A$ and $B$ be two square matrices and suppose that $A$ is full rank.
Then $\proj{BA} = \proj{B}$.
\end{fact}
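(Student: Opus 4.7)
The plan is to reduce the claim to the elementary observation that $\proj{M}$ is determined solely by the range (column space) of $M$, and then to show that right-multiplication by an invertible square matrix does not change the range.

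First I would observe that an $n \times n$ matrix $A$ that is full rank is invertible, so the linear map $x \mapsto Ax$ is a bijection on $\R^n$. Consequently the set $\{Ax : x \in \R^n\}$ equals all of $\R^n$. Applying $B$ to both sides gives
\begin{equation*}
\operatorname{range}(BA) \;=\; \{BAx : x \in \R^n\} \;=\; \{By : y \in \R^n\} \;=\; \operatorname{range}(B).
\end{equation*}

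Next I would invoke the fact (which is immediate from the definition of $\proj{M} = MM^\dagger$ in the paper, since $MM^\dagger$ is the orthogonal projector onto the column space of $M$, cf.\ the cited \cite[Proposition 8.1.7 (xii)]{Bernstein18}) that the orthogonal projection matrix onto a subspace depends only on the subspace. Since $\operatorname{range}(BA) = \operatorname{range}(B)$, the two projection matrices $\proj{BA}$ and $\proj{B}$ project onto the same subspace, and hence are equal.

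There is no real obstacle here: the only subtlety worth flagging is that ``full rank'' for a square matrix means invertible, which is precisely what makes right-multiplication by $A$ a bijection of $\R^n$ onto itself and hence range-preserving. If one wanted a purely algebraic derivation avoiding the range argument, one could alternatively write $\proj{BA} = (BA)(BA)^\dagger = (BA)(A^{-1} B^\dagger) = BB^\dagger = \proj{B}$, using that $(BA)^\dagger = A^{-1} B^\dagger$ when $A$ is invertible; but the range-based proof above is cleaner and matches the definition of $\proj{\cdot}$ used throughout the paper.
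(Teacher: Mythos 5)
The paper does not prove this fact; it is cited directly from Bernstein's reference text, so there is no in-paper proof to compare against. Your main argument — that a full-rank square $A$ is invertible, hence $\operatorname{range}(BA) = \operatorname{range}(B)$, and that $\proj{M} = MM^\dagger$ is determined solely by the column space of $M$ — is correct, complete, and is the natural way to see this.

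One caution about the alternative algebraic derivation you offer at the end: the identity $(BA)^\dagger = A^{-1}B^\dagger$ is \emph{not} valid in general, even when $A$ is invertible. For instance, with $B = \begin{pmatrix}1 & 0\\ 0 & 0\end{pmatrix}$ and $A = \begin{pmatrix}1 & 1\\ 0 & 1\end{pmatrix}$, one finds $(BA)^\dagger = \tfrac{1}{2}\begin{pmatrix}1 & 0\\ 1 & 0\end{pmatrix}$ whereas $A^{-1}B^\dagger = \begin{pmatrix}1 & 0\\ 0 & 0\end{pmatrix}$. (The Moore--Penrose pseudoinverse obeys $(MN)^\dagger = N^\dagger M^\dagger$ only under additional hypotheses, e.g.\ $M$ full column rank and $N$ full row rank.) It so happens that $(BA)(A^{-1}B^\dagger) = BB^\dagger$ does equal $\proj{BA}$, but your stated justification for that step is wrong; if you want a purely algebraic route you would need to verify directly that $BA(A^{-1}B^\dagger)$ and $(A^{-1}B^\dagger)BA$ satisfy the four Penrose conditions, which they do not in general (the second product fails symmetry). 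Stick with the range argument, which is both correct and cleaner.
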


\subsection{Relative Frobenius Norm to Frobenius Norm: Proof of \Cref{lem:closeness_norm}}
In this section, we provide the proof of \Cref{lem:closeness_norm}.

\LemClosenessNorm*
\begin{proof}[Proof of \Cref{lem:closeness_norm}]
Let $G= H^{-1/2} \Sigma^{1/2}$. Let $\oSigma_1 = \Sigma^{-1/2} \Sigma_1  \Sigma^{-1/2}$ and $\oSigma_2 = \Sigma^{-1/2} \Sigma_2  \Sigma^{-1/2}$, both of which are positive definite matrices.
Since $\Sigma$ is full rank, we obtain that
\begin{align*}
    \Sigma_1 &=  \Sigma^{1/2} \Sigma^{-1/2}\Sigma_{1} \Sigma^{-1/2}\Sigma^{1/2} = \Sigma^{1/2} \oSigma_1 \Sigma^{1/2}\,\,\\
    \Sigma_2 &=  \Sigma^{1/2} \Sigma^{-1/2}\Sigma_{2} \Sigma^{-1/2}\Sigma^{1/2} = \Sigma^{1/2} \oSigma_2 \Sigma^{1/2}.
\end{align*}
This directly gives us that $G \oSigma_1 G^\top  =    I - H \Sigma_1H$ and $G \oSigma_2 G^\top  =    I - H \Sigma_2 H$.
Overall, we have obtained the following:
\begin{align*}
    \|I- G \oSigma_1 G^\top\|_\fr
    \leq \rho \,\,\, \,\,\text{and}\,\,\,\,\, \|I- G \oSigma_2 G^\top\|_\fr \leq \rho.
\end{align*}
We first state a result, proved in the end of this section, that transfers closeness in relative Frobenius norm to Frobenius norm of the difference (along most of the directions).
\begin{restatable}{lemma}{LemDiffFrobNormNew}
    \label{lem:diffFrobNew}
Let $G$ be a $d\times d$ matrix and let $B$ be an arbitrary symmetric matrix of the same dimension. Recall that $\proj{G}$ denotes the projection matrix onto range of $G$.
Suppose that 
$$\| \proj{G}  - G B G^\top\|_\fr \leq \rho$$ for some $\rho \geq 1$. 
Let the SVD of $G$ be $G = U \Lambda V^\top$ for some orthonormal matrices $U$ and $V$ and a diagonal matrix $\Lambda$.

Then there exists a $d\times d$ matrix $L$ such that
\begin{enumerate}
    \item ($L$ captures directions of $G$) $L = V \Delta_B V^\top$ for some binary diagonal matrix $\Delta_B \preceq \Lambda \Lambda^\dagger$.
    {That is, the diagonal entries of $\Delta_B$ are all in $\{0,1\}$, and furthermore, all the 1 diagonal entries of $\Delta_B$ are also non-zero diagonal entries in $\Lambda$.}
    \item (rank of $L$) The rank of $L$ satisfies $\|\Lambda - \Delta_B\|_\fr \leq 2 \rho$ and   
    \item (closeness of $G^\dagger$ and $B$ along $L$) $\left\|L \left( R R^\top - B\right)L^\top\right\|_\fr \leq 2\rho \|B\|_{\op}$, where $R = G^\dagger$.
\end{enumerate}
\end{restatable}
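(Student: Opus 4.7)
The plan is to diagonalize the problem via the SVD $G = U\Lambda V^\top$, which reduces both the hypothesis and the conclusion to entrywise statements about the symmetric matrix $M := V^\top B V$. Write $I_+ := \{i : \lambda_i > 0\}$. Substituting $\proj{G} = U\Lambda\Lambda^\dagger U^\top$ and $GBG^\top = U\Lambda M \Lambda U^\top$, the hypothesis becomes $\|\Lambda\Lambda^\dagger - \Lambda M \Lambda\|_\fr \le \rho$, i.e.\
\begin{align*}
  \sum_{i \in I_+}(1 - \lambda_i^2 M_{ii})^2 \;+\; 2 \sum_{\substack{i < j\\ i, j \in I_+}} \lambda_i^2 \lambda_j^2 M_{ij}^2 \;\le\; \rho^2. \tag{$\star$}
\end{align*}
On the conclusion side, for any $L = V \Delta_B V^\top$ with $\Delta_B$ a $\{0,1\}$-diagonal matrix supported on $S \subseteq I_+$, a direct computation using $RR^\top = V(\Lambda^\dagger)^2 V^\top$ gives $L(RR^\top - B)L^\top = V\Delta_B\big[(\Lambda^\dagger)^2 - M\big]\Delta_B V^\top$, whose squared Frobenius norm is $\sum_{i \in S}(1/\lambda_i^2 - M_{ii})^2 + 2\sum_{i < j,\, i,j \in S} M_{ij}^2$. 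Thus the proof reduces to choosing $S$ so that each term of this target sum can be matched against a corresponding term of $(\star)$, with an extra $\|B\|_\op^2$ factor absorbed.

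The key idea is to threshold by the operator norm: let $S := \{i \in I_+ : \lambda_i^2 \ge 1/(2\|B\|_\op)\}$ and let $\Delta_B$ be its $\{0,1\}$-indicator. Then item~1 is immediate since $\Delta_B \preceq \Lambda\Lambda^\dagger$, and moreover the support condition lets me invoke \Cref{fact:pinverse} to conclude $L = LRG$ (useful if needed). For item~2, which I read as the rank-deficit statement $\|\Lambda\Lambda^\dagger - \Delta_B\|_\fr \le 2\rho$ (the expression as literally written is not dimensionally meaningful for general singular values, so I interpret $\Lambda$ there as $\Lambda\Lambda^\dagger$), I note that each $i \in I_+ \setminus S$ satisfies $\lambda_i^2 |M_{ii}| \le \lambda_i^2 \|B\|_\op < 1/2$, hence $(1 - \lambda_i^2 M_{ii})^2 > 1/4$; summing against $(\star)$ bounds $|I_+ \setminus S| \le 4\rho^2$, which equals $\|\Lambda\Lambda^\dagger - \Delta_B\|_\fr^2$. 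For item~3, on each $i \in S$ one has $1/\lambda_i^4 \le 4\|B\|_\op^2$, and for each pair $i,j \in S$ one has $1/(\lambda_i^2 \lambda_j^2) \le 4\|B\|_\op^2$. Multiplying each term in the target sum by the appropriate reciprocal factor and invoking $(\star)$ yields
\begin{align*}
  \sum_{i \in S}\Big(\tfrac{1}{\lambda_i^2} - M_{ii}\Big)^2 \,+\, 2\!\!\sum_{\substack{i<j\\i,j\in S}}\!\! M_{ij}^2 \;\le\; 4\|B\|_\op^2 \!\sum_{i \in S}(1 - \lambda_i^2 M_{ii})^2 \,+\, 4\|B\|_\op^2 \cdot 2\!\!\sum_{\substack{i<j\\i,j\in S}}\!\! \lambda_i^2 \lambda_j^2 M_{ij}^2 \;\le\; 4\rho^2\|B\|_\op^2,
\end{align*}
giving $\|L(RR^\top - B)L^\top\|_\fr \le 2\rho\|B\|_\op$, as desired.

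The argument is essentially a bookkeeping exercise once the SVD reduction is in place, so the main obstacle is purely constant-tuning: the threshold $\lambda_i^2 \ge 1/(2\|B\|_\op)$ is calibrated to make the rank constant come out exactly $2\rho$ in item~2, while the reciprocal bound $4\|B\|_\op^2$ then simultaneously controls both the diagonal and the (already-weighted-by-$2$) off-diagonal contributions in item~3, yielding $2\rho\|B\|_\op$ as stated. The only subtlety worth flagging is the apparent typographical issue in item~2, which I resolve by interpreting $\Lambda$ as $\Lambda\Lambda^\dagger$; this is consistent with the parenthetical ``rank of $L$'' and with how this bound is used downstream (to control the number of ``bad'' directions along which $\Sigma_1$ and $\Sigma_2$ need not be compared in \Cref{lem:closeness_norm}).
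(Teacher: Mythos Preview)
Your proof is correct and follows essentially the same SVD-diagonalization strategy as the paper. Two minor differences are worth noting. First, the paper thresholds on $\lambda_i^2 M_{ii} \ge 1/2$ (in your notation) rather than your uniform threshold $\lambda_i^2 \ge 1/(2\|B\|_\op)$; since the paper ultimately only uses $M_{ii} \le \|B\|_\op$ to bound $\|LR\|_\op$, the two choices yield the same final constants. Second, for item~3 the paper uses the matrix factorization $L(RR^\top - B)L^\top = LR(\proj{G} - GBG^\top)R^\top L^\top$ (via the identities $L = LRG$ and $R = R\proj{G}$ from \Cref{fact:pinverse}) followed by submultiplicativity and $\|LR\|_\op^2 \le 2\|B\|_\op$, whereas you obtain the identical bound by direct entrywise bookkeeping; your route is slightly more elementary, the paper's slightly slicker. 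Your interpretation of item~2 as $\|\Lambda\Lambda^\dagger - \Delta_B\|_\fr \le 2\rho$ is exactly what the paper proves and how it is used downstream (where $G$ is full rank, so $\Lambda\Lambda^\dagger = I$).
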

We will now apply \Cref{lem:diffFrobNew} twice following the same notation as in its statement. In particular, the SVD decomposition of $G = U \Delta V^\top$ and $R = G^\dagger$.
\Cref{lem:diffFrobNew} implies that there exist two binary diagonal matrices $\Delta_1$ and $\Delta_2$, where $\Delta_1$ is for $\oSigma_1$ and $\Delta_2$ is for $\oSigma_2$ such that (i) both $\|I - \Delta_1\|_\fr$ and $\|I - \Delta_1\|_\fr$ are at $2\rho$,  
and (ii) $ V \Delta_1 V^\top(RR^\top - \oSigma_1) V \Delta_1 V^\top $ and $ V \Delta_2 V^\top(RR^\top - \oSigma_2) V \Delta_2 V^\top $ have  Frobenius norms at most $2\rho\|\oSigma_1\|_\op$ and $2\rho\|\oSigma_2\|_\op$, respectively. 
Let $\Delta = \Delta_1 \times \Delta_2$ and define $L = V \Delta V^\top$.
By (i), we have that $\|I - L \|_\fr = \|I - \Delta\|_\fr \leq 4\rho$.
Since  $L \preceq V \Delta_1 V^\top$ and $L$ is a projection matrix, \Cref{fact:frob-projection} along with (ii) above implies that
\begin{align*}
    \|L(RR^\top -  \oSigma_1 )L^\top\|_\fr \leq 2\rho \|\oSigma_1\|_\op.
\end{align*}
Similarly for $\oSigma_2$.
By triangle inequality,
 we obtain that
\begin{align*}
    \|L(\oSigma_1 - \oSigma_2 )L^\top\|_\fr \leq 4\rho \max\left(\|\oSigma_1\|_\op, \|\oSigma_2\|_\op \right).
\end{align*}
Since $L$ is a projection matrix, triangle inequality implies  that  
\begin{align*}
\left\| \left(\oSigma_1 - \oSigma_2\right) \right\|_\fr
&\leq  \left\|L \left(\oSigma_1 - \oSigma_2\right) L^\top\right\|_\fr +   \left\|(I - L)\left(\oSigma_1 - \oSigma_2\right)(I - L)^\top\right\|_\fr \\
&\lesssim \left\|L \left(\oSigma_1 - \oSigma_2\right) L^\top\right\|_\fr + \|I - L\|_\fr \cdot \left\|\left(\oSigma_1 - \oSigma_2\right)\right\|_\op \\
&\lesssim \rho \cdot \max\left(\|\oSigma_1\|_\op, \|\oSigma_2\|_\op \right) \;.
\end{align*}
This completes the proof.

\end{proof}
We now provide the proof of \Cref{lem:diffFrobNew}.
\LemDiffFrobNormNew*
\begin{proof}
Let $u_1, u_2, \dots, u_d$ be the columns of $U$
and let $v_1, v_2, \dots, v_d$ be the columns of $V$.
Let the entries of $\Lambda$ be $\lambda_1, \lambda_2, \dots, \lambda_d$, which are non-negative by SVD property.
Let $\cJ$ be the set of $i$'s in $[d]$ such that $\lambda_i > 0$.

Define the matrices $F := U^\top G B G^\top U$ and $B' := V^\top B V$.
Then since $G^\top = V \Lambda U^\top$, we have
\begin{align*}
    F = U^\top G B G^\top U &=
    U^\top  U \Lambda V^\top B V \Lambda U^\top U = \Lambda V^\top B V \Lambda = \Lambda B' \Lambda.
\end{align*}
Thus the $(i,j)$-th entry of $F$ is $B'_{i,j} \lambda_i \lambda_{j}$.

Now define the matrix $M := U^\top \proj{G} U$.
From \Cref{fact:pinverse}, we have that  $\proj{G} = U \Lambda \Lambda^\dagger U^\top $.
{And thus, $M = \Lambda \Lambda^\dagger$, which is the diagonal matrix with entries $M_{ii} = 1$ for each $i \in \cJ$, and $M_{ii} = 0$ otherwise.}

Using the invariance of Frobenius norm under orthogonal transform, we obtain the following
series of equalities:
\begin{align}
    \|\proj{G} &- G B G^\top\| _\fr ^2 \nonumber \\
    &=  \left\| U^\top \left(\proj{G}   -  G B G^\top \right)U\right\| _\fr ^2 \nonumber\\
    &= \|M - F\| _\fr ^2 \nonumber\\
    &\geq  \sum_{i \in \cJ} \left(M_{i,i} - F_{i,i}\right)^2\nonumber\\
    &= \sum_{i \in \cJ} \left(1 -  B'_{i,i} \cdot \lambda_i^2 \right)^2
\end{align}
Let $\cI \subseteq \cJ $ be the set of $i$'s in $\cJ$ such that $\lambda_i^2 B'_{i,i}  < (1/2) $.
Then $\sum_{i \in \cI} \left(1 -  B'_{i,i} \cdot \lambda_i^2 \right)^2 > |\cI|/4$. 
Thus if $|\cI|$ is larger than $4 \rho^2$, then the Frobenius norm squared is larger than $r^2$, which is a contradiction.

If $i \in \cJ\setminus \cI$, then $ \lambda_i^2 \geq 1/( 2 B'_{i,i})$ and not just $\lambda_i > 0$. 

Define the matrix $L$ to be $L := \sum_{i: i \in \cJ \setminus \cI} v_iv_i^\top$, which is equal to $V \Delta_B V^\top$ where $\Delta_B$ is a diagonal matrix with the $i$-th diagonal entry equal to  $1$ if $i \in \cJ  \setminus \cI$ and $0$ otherwise. 
By definition, $L$ has rank  $|\cJ| -|\cI|$ and furthermore, the 1 diagonal entries of $\Delta_B$ are always non-zero diagonal entries in $\Lambda$.

Let $R = G^\dagger$, and by \Cref{fact:pinverse}, we 
have that  $ R = V \Lambda^{\dagger}U^\top$.
Thus $L R = V \Delta_B V^\top V  \Lambda^{\dagger}U^\top = V \Delta_B \Lambda^{\dagger}  U^\top$, and thus the singular values of $L R$ satisfy that they are at most 
$$\frac{1}{\min_{i \in \cJ\setminus \cI}\lambda_i} \leq \max_{i} \sqrt{2B'_{i,i}} \leq  \sqrt{2\|B'\|_\op} = \sqrt{2\|V^\top B V\|_\op} = \sqrt{2\|B\|_\op}.$$

We first need two more observations: $L= L R G $ and $LR R^\top L^\top = LR \proj{G} R^\top L^\top$, both of which hold by \Cref{fact:pinverse}.
Finally, we show that $R R^\top$ and $B$ are close along the directions in $L$. 
\begin{align}
    \|L&(RR^\top - B)L^\top\|_\fr \nonumber\\
    &= \| LR R^\top L^\top   - L B L^\top \|_\fr \nonumber\\
    &= \| LR \proj{G} R^\top L^\top   - LR G B G^\top R^\top L^\top \|_\fr \nonumber\\
    &= \| LR (\proj{G} - G B G^\top) R^\top L^\top\|_\fr \nonumber\\
    & \leq \| LR \|_\op \|\proj{G} - G B G^\top \|_\fr \| R^\top L^\top\|_\op \nonumber\\
    &= \| L R \|_\op^2 \|\proj{G} - G B G^\top\|_\fr \nonumber\\
    &\leq 2 \|B\|_\op \cdot \rho.
\end{align}
This completes the proof.
\end{proof}

\subsection{Proof of \Cref{lem:sigma-guarantee}}
We now restate and prove \Cref{lem:sigma-guarantee}.

\LemNullSpaceTransfer*

\begin{proof}
Applying triangle inequality, we have that 
\begin{align*}
    \| ABA - \proj{B} \|_\fr 
 \leq      \| ABA - \proj{A} \|_\fr + \| \proj{A} - \proj{B}\|_\fr.
\end{align*}
It thus suffices to show that the second term,  $\| \proj{B} -  \proj{A}\|_\fr$, is less than $\| ABA - \proj{A} \|_\fr$.
Let $d_A$ be the rank of $A$ and $d_B$ be the rank of $B$.
We will show the following two results, which imply the desired bound:
\begin{align}
\| \proj{B} - \proj{A}\|_\fr^2 =     |d_B - d_A| \leq \| ABA - \proj{B} \|_\fr ^2 
\label{eq:rank-difference}
\end{align}
In the rest of the proof, we will establish the above two inequalities.

We begin with the second inequality in \cref{eq:rank-difference}. By the lemma assumptions, we have that the null space of $A$ being a subspace of the null space of $B$.
This is equivalent to saying that the column space of $B$ being a subspace of the column space of $A$.
In particular, this also implies that $d_B \le d_A$.
It is easy to see that $\rank(ABA) \leq \rank(B) \leq \rank(A) = \rank(\proj{A})$, and the eigenvalues of $\proj{A}$ are either 0 or $1$.
We now state a simple result lower bounding the Frobenius norm by the difference of the ranks for such matrices:
\begin{claim}[Bound on rank difference] 
\label{cl:rank-bound}
Let $A'$ be a symmetric matrix such that the eigenvalues of $A'$ belong to $\{0,1, -1\}$. Let $B'$ be an arbitrary matrix.
Then $\|B' - A'\|_\fr^2 \geq \rank(A') - \rank(B')$.
\end{claim}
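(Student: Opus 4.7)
My plan is a short dimension-counting argument. First, if $\rank(B') \geq \rank(A')$, then $\rank(A') - \rank(B') \leq 0$ and the inequality is vacuous, so I assume $k := \rank(B') < r := \rank(A')$. The key structural input is that since $A'$ is symmetric with eigenvalues in $\{0, 1, -1\}$, the $(\pm 1)$-eigenspace of $A'$, call it $V$, is exactly $r$-dimensional, and every unit vector $v \in V$ satisfies $\|A' v\|_2 = 1$.

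Next I intersect $V$ with $\ker(B')$. The subspace $V \subseteq \R^d$ has dimension $r$, and $\ker(B') \subseteq \R^d$ has dimension $d - k$, so by the standard dimension formula
\[
\dim(V \cap \ker(B')) \;\geq\; r + (d-k) - d \;=\; r - k.
\]
I pick any orthonormal basis $v_1, \ldots, v_{r-k}$ of $V \cap \ker(B')$. For each $j$, $B' v_j = 0$, and since $v_j \in V$ is a unit vector, $\|A' v_j\|_2 = 1$. Therefore
\[
\|B' - A'\|_\fr^2 \;\geq\; \sum_{j=1}^{r-k} \|(B' - A') v_j\|_2^2 \;=\; \sum_{j=1}^{r-k} \|A' v_j\|_2^2 \;=\; r - k \;=\; \rank(A') - \rank(B'),
\]
where the first inequality is the standard fact that $\|M\|_\fr^2 \geq \sum_j \|M v_j\|_2^2$ for any orthonormal set $\{v_j\}$ (obtained by extending $\{v_j\}$ to a full orthonormal basis and dropping the nonnegative terms corresponding to the extension).

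I do not anticipate any real obstacle: the only non-obvious step is recognizing that the $(\pm 1)$-eigenspace of $A'$ is the right subspace to intersect with $\ker(B')$. Equivalently, one can view the claim as a direct specialization of the Eckart--Young--Mirsky theorem, since the singular values of $A'$ (being the absolute values of its eigenvalues) consist of exactly $r$ ones and $d - r$ zeros, so the best rank-$k$ approximation to $A'$ in Frobenius norm already has squared error $\sum_{i>k}\sigma_i(A')^2 = r - k$. Either route gives the bound with essentially a single line of calculation.
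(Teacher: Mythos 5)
Your proof is correct, and it takes a more careful route than the paper's, which in fact contains a subtle gap that your argument repairs. The paper also fixes the orthonormal eigenbasis $u_1,\dots,u_d$ of $A'$ with $|\lambda_i|=1$ for $i\le r:=\rank(A')$, lower-bounds $\|B'-A'\|_\fr^2 \ge |\cI|$ where $\cI=\{i\le r: B'u_i=0\}$, and then asserts that because $\rank(B')<r$, at most $\rank(B')$ of the vectors $u_1,\dots,u_r$ can have $B'u_i\neq 0$, hence $|\cI|\ge r-\rank(B')$. That counting step is false in general: a rank-one $B'=vv^\top$ with $v$ not orthogonal to any $u_i$ sends all of $u_1,\dots,u_r$ to nonzero vectors, so $|\cI|$ can be $0$ even when $r-\rank(B')>0$ (e.g., $A'=I_2$ and $B'$ the $2\times 2$ matrix with every entry equal to $1/2$). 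Your argument sidesteps this by intersecting the $r$-dimensional span $V$ of $u_1,\dots,u_r$ with $\ker(B')$---which by the dimension formula has dimension at least $r-\rank(B')$---and choosing an orthonormal basis of \emph{that intersection} rather than insisting on the $u_i$ themselves. Since $A'$ acts isometrically on $V$, each such basis vector contributes exactly $1$ to $\|B'-A'\|_\fr^2$, yielding the bound. Your Eckart--Young observation is a clean alternative. In short, your proof is correct and strictly better justified than the paper's.
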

\begin{proof}
Let $u_1, \dots u_d$ be the orthonormal eigenvectors of $A'$ with eigenvalues $\lambda_1, \lambda_2, \dots, \lambda_d $ such that  $|\lambda_i| = 1$ for $i \leq \rank(A')$ and $\lambda_i = 0$ otherwise.
Without loss of generality, we assume $\rank(B') \leq \rank(A')$, otherwise the result is trivial.
Let $\cI $ be the set of $i$'s such that $i \leq \rank(A')$ and $B'u_i = 0$.
\begin{align*}
        \| B' - A' \|_\fr^2 &= \sum_{i=1}^d \| B' u_i - A' u_i \|_2^2 
     \geq \sum_{i=1}^{\rank(A')} \| B' u_i - A' u_i \|_2^2 \\
     &= \sum_{i=1}^{\rank(A')} \| B'u_i - \lambda_i u_i \|_2^2.
     \geq \sum_{i \in \cI} \|  \lambda_iu_i \|_2^2 = |\cI|.
\end{align*}
Since the rank of $B'$ is less than the rank of $A'$, it follows that $B'u_i$ is nonzero for at most $\rank(B')$ many $u_i$'s in $\{u_i : i \leq \rank(A')\}$. Thus $|\cI|$ is at least $\rank(A') - \rank(B')$.
\end{proof}
By \Cref{cl:rank-bound}, we have that $\rank(\proj{B}) - \rank(ABA) \leq \|\proj{B} - ABA\|_\fr^2$.
Thus, it implies the desired inequality, $|d_B - d_A| \leq \| ABA - \proj{A} \|_\fr^2$.

We will now establish the first inequality in \cref{eq:rank-difference}.
Observe that both $\proj{A}$ and $\proj{B}$ are PSD matrices, whose eigenvalues are either $1$ or $0$.
The following result upper bounds the Frobenius norm of $\proj{A} - \proj{B}$ in terms of the difference of their ranks.
\begin{claim} 
\label{cl:rank-bound-2-projection-matrices}
If $A$ and $B$ are two PSD matrices such that $\proj{B} \preceq \proj{A}$, then
 $\|\proj{A} - \proj{B}\|_\fr^2 = \rank(A) - \rank(B)$.
\end{claim}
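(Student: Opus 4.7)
The plan is to observe that under the hypothesis $\proj{B} \preceq \proj{A}$, the difference $\proj{A} - \proj{B}$ is itself an orthogonal projection, so that its squared Frobenius norm equals its trace, which in turn equals a rank difference.

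First, I would note that since $A$ and $B$ are PSD, $\proj{A}$ and $\proj{B}$ are orthogonal projections onto $\mathrm{range}(A)$ and $\mathrm{range}(B)$ respectively, and in particular $\rank(\proj{A}) = \rank(A)$ and $\rank(\proj{B}) = \rank(B)$. The assumption $\proj{B} \preceq \proj{A}$ is equivalent to $\mathrm{range}(B) \subseteq \mathrm{range}(A)$, which in turn implies the standard identity for nested orthogonal projections $\proj{A} \proj{B} = \proj{B} \proj{A} = \proj{B}$.

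Using this, I would compute
\[
(\proj{A} - \proj{B})^2 = \proj{A}^2 - \proj{A}\proj{B} - \proj{B}\proj{A} + \proj{B}^2 = \proj{A} - \proj{B} - \proj{B} + \proj{B} = \proj{A} - \proj{B},
\]
using that projection matrices are idempotent. Combined with the fact that $\proj{A} - \proj{B}$ is symmetric, this shows $\proj{A} - \proj{B}$ is itself an orthogonal projection.

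Then I would finish with
\[
\|\proj{A} - \proj{B}\|_\fr^2 = \tr\bigl((\proj{A} - \proj{B})^2\bigr) = \tr(\proj{A} - \proj{B}) = \tr(\proj{A}) - \tr(\proj{B}) = \rank(A) - \rank(B),
\]
where the last equality uses that the trace of an orthogonal projection equals its rank. There is no real obstacle here — the only subtlety is justifying the identity $\proj{A} \proj{B} = \proj{B}$ from $\proj{B} \preceq \proj{A}$, which follows because the range inclusion $\mathrm{range}(\proj{B}) \subseteq \mathrm{range}(\proj{A})$ means $\proj{A}$ acts as the identity on each column of $\proj{B}$.
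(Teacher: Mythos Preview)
Your proof is correct and takes a genuinely different route from the paper. The paper proceeds by choosing an orthonormal basis $v_1,\ldots,v_d$ adapted to the nested column spaces of $\proj{B}$ and $\proj{A}$, and then computes $\|\proj{A}-\proj{B}\|_\fr^2 = \sum_i \|(\proj{A}-\proj{B})v_i\|_2^2$ term by term, observing that only the $d_A-d_B$ basis vectors in $\mathrm{range}(\proj{A})\setminus\mathrm{range}(\proj{B})$ contribute, each with norm $1$. Your argument instead shows algebraically that $\proj{A}-\proj{B}$ is itself an orthogonal projection via the identity $\proj{A}\proj{B}=\proj{B}$, and then reads off $\|\proj{A}-\proj{B}\|_\fr^2 = \tr(\proj{A}-\proj{B}) = \rank(A)-\rank(B)$. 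Your approach is shorter and basis-free; the paper's approach is more explicit and makes the geometry of the nested subspaces visible. Both are standard, and neither requires anything the other does not already implicitly use.
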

\begin{proof}
Let $d_A$ be the rank of $A$ and $d_B$ be the rank of $B$. 
We have that $d_B = \rank(\proj{B}) \leq \rank(\proj{A}) = d_A$.
Consider any orthonormal basis $v_1,\ldots,v_d$ in $\R^d$, such that $v_1,\ldots,v_{d_A}$ spans the column space of $\proj{A}$, and $v_1,\ldots,v_{d_B}$ spans the column space of $B$.
Since the column space of $\proj{B}$ is a subspace of that of $\proj{B}$, such an orthonormal basis always exists.

Moreover, since the eigenvalues of $\proj{A}$ are either $0$ or $1$, we have that $\proj{A} v_i = v_i$ for $i \leq d_A$ and $0$ otherwise.
Similarly, $\proj{B} v_i = v_i$ for $i \leq d_B$ and $0$ otherwise.

We will also use the following basic fact:
for any square matrix $J$ and any set of orthonormal basis vectors $u_1,\dots,u_d$, we have that  $\|J\|_{\fr}^2 = \sum_{i=1}^d \|J u_i\|_2^2$.
Applying this fact to $\proj{A} - \proj{B}$ and the orthonormal basis $v_1,\dots,v_d$ above, we get the following:
\begin{align*}
    \| \proj{A} - \proj{B} \|_\fr^2 &= \sum_{i=1}^d \| \proj{A} v_i - \proj{B} v_i \|_2^2 
    = \sum_{i=1}^{d_B} \| \proj{A} v_i - \proj{B} v_i \|_2^2 + \sum_{d_B + 1}^{d_A} \| \proj{A} v_i - \proj{B} v_i \|_2^2
    = \sum_{d_B + 1}^{d_A} \|  v_i \|_2^2\\
    &= d_A - d_B, 
\end{align*}
where the second equality uses that $\proj{A}v_i = \proj{B}v_i = 0 $ for $i \geq d_A$, and the third equality uses that $\proj{A}v_i = \proj{B}v_i = v_i$ for $i \leq d_B$ and $\proj{B}v_i = 0$ and $\proj{A}v_i = v_i$ for $i \in \{d_B + 1, \dots, d_A\}$.
\end{proof}
By \Cref{cl:rank-bound-2-projection-matrices}, we have that $\|\proj{A} - \proj{B}\|_\fr^2 = |\rank(A) - \rank(B)|$.

\end{proof}

\section{Omitted Proofs from \Cref{sec:prelim}}
In this section, we present the proofs of the results omitted from \Cref{sec:prelim}.

\subsection{Proof of \Cref{fact:boundradius}}
\label{sec:proof-of-boundradius}
\FactBoundRadius*
\begin{proof}
We write $X \sim \cN(\mu,\Sigma)$ as $X = \Sigma^{1/2} Z + \mu$ for $Z \sim \cN(0,I)$. We have that
\begin{align}
    \Var_{X \sim \cN(\mu,\Sigma)}[X^\top  A X] &= \Var_{Z \sim \cN(0,I)}[(\Sigma^{1/2} Z + \mu)^\top  A (\Sigma^{1/2} Z + \mu)] \notag\\
    &= \Var_{Z \sim \cN(0,I)}[Z^\top  \Sigma^{1/2} A \Sigma^{1/2} Z + 2 \mu^\top  A \Sigma^{1/2} Z + \mu^\top  A \mu]\notag  \\
    &\leq 2 \left(\Var_{Z \sim \cN(0,I)}[Z^\top  \Sigma^{1/2} A \Sigma^{1/2} Z ] + 4 \Var_{Z \sim \cN(0,I)}[\mu^\top  A \Sigma^{1/2} Z] \right) \notag \\
    &\leq 4 \|\Sigma^{1/2} A \Sigma^{1/2} \|_\fr^2 + 8 \|\Sigma^{1/2} A \mu \|_2^2 \;, 
\end{align}
where the first inequality line uses that $\Var[A+B] = \Var[A] +  \Var[B] + 2 \cov[A,B] \leq \Var[A] +  \Var[B] + 2\sqrt{\Var[A]\Var[B]} = (\sqrt{\Var[A]}+\sqrt{\Var[B]})^2 \leq 2(\Var[A] + \Var[B])$, and  the last inequality uses the following: The first term uses the fact that for a symmetric matrix $B$ and a vector $v$,  $\Var_{Z \sim \cN(0,I)}[Z^\top  B Z] = 2 \|B\|_\fr^2$ and $\Var_{Z \sim \cN(0,I)}[v^\top  Z] = \|v\|_2^2$,  with $B:=\Sigma^{1/2} A \Sigma^{1/2}$ and  $v:= \Sigma^{1/2} A \mu$. 
\end{proof}

\subsection{Sample Complexity: Proof of \Cref{lem:DeterministicCond}}
\label{sec:stability-appendix}

In this section, we establish the sample complexity 
for the stability condition to hold for Gaussians.
Before that, we note below that the stability condition 
should hold for a broader class of distributions as well. 
\begin{remark}\label{remark:stability_note} 
We note that the deterministic conditions of \Cref{def:stability} are not specific to the Gaussian distribution but hold with polynomial sample complexity $\poly(d/\eta)$ for broader classes of distributions, roughly speaking, distributions with light tails for degree-4 polynomials.
\begin{itemize}
    \item \Cref{it:mean} is satisfied with polynomial sample complexity by distributions that have bounded second moment.
    
    \item \Cref{it:tails} at the population level is a direct implication of  hypercontractivity of degree-2 polynomials. The $\log(1/\eta)$ factor in the condition as stated currently is tailored to the Gaussian distribution but it is not crucial for the algorithm and could be relaxed to some polynomial of $1/\eta$; this would translate to incurring some $\poly(1/\eta)$ to the final error guarantee of the main theorem. After establishing the condition at the population level, it can be transferred to sample level with polynomial sample complexity.

    \item Regarding \Cref{it:cov}, every entry of the Frobenius norm is a degree 2 polynomial, thus the property holds for distributions with light tails for these polynomials, e.g., hypercontractive degree-2 distributions.
    
    \item \Cref{it:var} can also be derived by hypercontractivity of degree 4 polynomials similarly.
    \item \Cref{it:rank} holds almost surely for continuous distributions.
\end{itemize}
\end{remark}

\DetConditions*
\begin{proof}
Let $A$ be a set of i.i.d.\ samples from $\cN(\mu,\Sigma)$. 
We check the four conditions from \Cref{def:stability} separately.

We start with \Cref{it:cov}. Let $X \sim \cN(\mu,\Sigma)$, then $X-\mu \sim \cN(0,\Sigma)$. Equivalently, $X-\mu = \Sigma^{1/2}Z$ with $Z \sim \cN(0,I)$. Let $A$ be a set of $m$ i.i.d. samples from $\cN(\mu,\Sigma)$ and like before, for every $x \in A$, write $x -\mu = \Sigma^{1/2} z$, where $z$ are corresponding i.i.d. samples from $\cN(0,I)$. Let a subset $A' \subset A$ with $|A'| \geq |A|(1-\eps)$. Then, we have the following
\begin{align*}
    \tr\left( \left( \frac{1}{n}\sum_{x \in A'}(x-\mu)(x-\mu)^\top - \Sigma \right) U  \right) 
    &=\tr\left(  \Sigma^{1/2}\left(\frac{1}{n} \sum_{z } z z^\top - I \right) \Sigma^{1/2} U \right) \\
    &= \tr\left(   \left(\frac{1}{n} \sum_{z } z z^\top - I \right) \Sigma^{1/2} U \Sigma^{1/2}  \right) \\
    &\leq C \eps \log(1/\eps) \| \Sigma^{1/2} U \Sigma^{1/2} \|_\fr \;,
\end{align*}
where the first line uses the rewriting that we introduced before, the second line uses the cyclic property of the trace operator, and the last line uses \cite[Corollary 2.1.13]{li2018principled}. As stated in that corollary, the inequality holds with probability $1-\delta/4$ as long as the sample complexity is a sufficiently large multiple of $(d^2/\eps^2)\log(4/\delta)$.  Fixing $\eps =\eps_0$ a sufficiently small positive absolute constant can make the right hand side of the above inequality $0.1\| \Sigma^{1/2} U \Sigma^{1/2} \|_\fr$.

\Cref{it:mean} can be shown identically using \cite[Corollary 2.1.9]{li2018principled}.

We now turn our attention to \Cref{it:tails}. 
\cite[Lemma 5.17]{DiaKKLMS16-focs} implies that with the stated number of samples,  with high probability, it holds that
\begin{align*}
\pr_{X \sim A}\left[ |p(x)| > \sqrt{\Var_{Y \sim \cN(\mu,\Sigma)}[p(Y)]} 10 \ln\left( \frac{2}{\eta} \right) \right] 
    \leq  \frac{\eta}{2} .
\end{align*}
Since $|A'| \geq |A|(1-\eps_0) \geq |A|/2$, \cref{it:tails} holds.
Finally, \Cref{it:var} follows by noting that $\Var_{X \sim A'}[p(X)] \leq 2 \Var_{X \sim A}[p(X)]$ and using \cite[Lemma 5.17]{DiaKKLMS16-focs} again with $\eps=\eps_0$. 
In order for the conclusion of that lemma to hold with probability $1-\delta/4$, the number of samples of the Gaussian component is a sufficiently large multiple of $d^2  \log^5( 4d /(\eta \delta))/\eta^2$. A union bound over the events corresponding to each of the four conditions concludes the proof.

Lastly, we show \Cref{it:rank}.
We will in fact show a stronger statement: suppose the rank of $\Sigma$ is $d_{\Sigma}$.
Then, if we take $m \ge d_{\Sigma}+1$ i.i.d.~samples $\{\mu+z_1, \mu+z_2,\ldots,\mu+z_m\}$ from $\cN(\mu,\Sigma)$, it must be the case that with probability 1, for all $A' \subseteq A$ with $|A'| \ge d_{\Sigma}+1$, we have $\sum_{x \in A'} (v^\top x)^2 = 0$ (or equivalently, $v$ being orthogonal to all vectors $x \in A'$) implying $v^\top \Sigma v = 0$ for all vectors $v \in \R^d$.

We will need the following straightforward claim: if $A' = \{\mu+z'_0,\ldots,\mu+z'_{d_{\Sigma}}\}$ (of size $d_{\Sigma}+1$) is such that 1) $\{z'_1-z'_0, \ldots, z'_{d_{\Sigma}}-z'_0\}$ are linearly independent and 2) $z'_i$ all lie within the column space of $\Sigma$, then for any vector $v \in \R^d$, $v$ being orthogonal to all of $A'$ implies that $v^\top \Sigma v = 0$.
To see why this is true, observe that $\{z'_1-z'_0, \ldots, z'_{d_{\Sigma}}-z'_0\}$, which are linearly independent vectors that all lie in the column space of $\Sigma$, must span the entirety of the column space of $\Sigma$ by dimension counting.
Thus, if $v$ is orthogonal to all the points in $A'$, then $v$ is orthogonal to all of $\{z'_1-z'_0, \ldots, z'_{d_{\Sigma}}-z'_0\}$, which implies $v^\top \Sigma v = 0$.

We can now show the following inductive claim on the size of $A$, which combined with the above claim implies the strengthened version of \Cref{it:rank}.
Specifically, we show inductively that, with probability 1 over the sampling of $A$, for any subset $A' \subset A$ of size at most $d_{\Sigma}+1$, the claim conditions hold for $A'$.
Namely, if $A' = \{\mu+z'_0,\ldots,\mu+z'_{j}\}$ for some $j \le d_{\Sigma}$, for some arbitrary ordering of the elements in $A'$, then 1) the set $\{z'_1-z'_0, \ldots, z'_j-z'_0\}$ is linearly independent and 2) $z'_i$ all lie in the column space of $\Sigma$.

The base case of $|A| = 1$ is trivial.
Now suppose the above statement is true for $|A_\ell| = \ell$, and consider sampling a new point $\mu+z_{\ell}$ and adding it to $A_\ell$ to form $A_{\ell+1}$.
Take any subset $A'$ of $A_{\ell+1}$ of size at most $d_{\Sigma}+1$.
If $A' \subset A_\ell$ we are already done by the induction hypothesis.
Otherwise, $A' = \{\mu+z_\ell\} \cup A'_{\ell}$, where $A'_{\ell} = \{ \mu+z'_0,\ldots,\mu+z'_{j}\}$ has size at most $d_{\Sigma}$ and satisfies the conditions in the induction by the inductive hypothesis.
The space of $\mu+z$ such that $z-z'_0$ lies in the span of $\{z'_1-z'_0, \ldots, z'_j-z'_0\}$ has measure 0 under $\cN(\mu,\Sigma)$, given there are strictly fewer than $d_{\Sigma}$ many linearly independent vectors in $\{z'_1-z'_0, \ldots, z'_j-z'_0\}$, all of which lie in the column space of $\Sigma$.
Furthermore, there is only a finite number of such $A'_\ell$.
Thus, with probability 1, the new point $\mu+z_\ell$ will be sampled such that for all such $A'_\ell$, $\{\mu+z_\ell\} \cup A'_{\ell}$ remains linearly independent.
Additionally, also with probability 1, $z_\ell$ will lie in the column space of $\Sigma$.
This completes the proof of the inductive claim, which as we have shown implies \Cref{it:rank}.

\end{proof}

\subsection{Facts Regarding Stability: Proofs of \Cref{cl:stability_implication_cov} and \Cref{lem:from_empirical_to_true} }
\label{sec:proofs-stability}

We present the proofs of \Cref{cl:stability_implication_cov} and \Cref{lem:from_empirical_to_true} in this section.

\StabilityImplications*
\begin{proof}
We explain each condition separately.
We begin with the first condition as follows:
For any unit vector $u$ in $\R^d$, we have that
\begin{align*}
   \left\lvert  u^\top  P \left(  \frac{1}{|A'|}\sum_{x \in A'}x  -  \mu  \right) \right\rvert \leq 0.1 \sqrt{u^\top  P \Sigma P u} \leq \sqrt{ \| P \Sigma P \|_\op} \;,
\end{align*}
where the first inequality uses \Cref{it:mean} of \Cref{def:stability} for $v = P u$.

We now focus our attention to the second condition. 
Denote $\hat{\Sigma} := \frac{1}{|A'|}\sum_{x \in A'}(x  - \mu )(x  - \mu )^\top $ for saving space.  
Let $Q$ be any symmetric
matrix $Q \in \R^{d \times d}$ with $\|Q\|_\fr\leq 1$. Using the cyclical property of trace, we obtain
\begin{align*}
    \tr\left(  P \left(  \hat{\Sigma}  - \Sigma  \right) P Q  \right)
    =  \tr\left( \left(  \hat{\Sigma} - \Sigma  \right) P Q P \right)\leq  0.1 \left\| \Sigma^{1/2} P Q  P \Sigma ^{1/2} \right\|_\fr  \;,
\end{align*}
where the last step applies \Cref{it:cov} of \Cref{def:stability} with $V=PQP$. Finally, using the cyclic properties of the trace operator,
\begin{align*}
    \left\| {\Sigma}^{1/2} P Q  P {\Sigma }^{1/2}\right\|_\fr^2 &= 
    \tr\left({\Sigma }^{1/2} P Q  P {\Sigma } P Q  P {\Sigma }^{1/2} \right) 
     = \tr(  Q  P {\Sigma } P Q  P {\Sigma } P ) \\
     &= \|  Q  P {\Sigma } P \|_\fr^2 
     \leq \|Q\|_\fr^2 \|P \Sigma   P \|_{\op}^2
     \leq \|P \Sigma   P \|_{\op}^2,
\end{align*}
where the first inequality uses \Cref{fact:FrobeniusSubmult}
and the second inequality uses $\|Q\|_\fr \leq 1$.
Thus we have the following:
\begin{align*}
    \left\| P \left( \hat{\Sigma }  - \Sigma   \right)P \right\|_\fr &= 
     \sup_{Q: \|Q\|_\fr \leq 1}     \tr\left(  P \left(  \hat{\Sigma}  - \Sigma  \right) P Q  \right)  \\
    &=\sup_{\text{symmetric } Q: \|Q\|_\fr \leq 1}     \tr\left(  P \left(  \hat{\Sigma}  - \Sigma  \right) P Q  \right)\\
    &\leq 
    0.1 \|P \Sigma P \|_{\op}.
\end{align*}
where the  second equality is due to the fact that $P \left( \hat{\Sigma }  - \Sigma   \right)P$ is itself symmetric.

\end{proof}

\FrobGuarantee*
\begin{proof}
To simplify notation, let $\widehat{\Sigma}:=\cov_{X \sim A'}[X]$ and $\overline{\Sigma}:= \E_{X \sim A'}[(X - \mu)(X - \mu)^\top ]$. 
Observe that $\overline{\Sigma} = \widehat{\Sigma} + (\mu - \hat{\mu})(\mu - \hat{\mu})^\top$.
We apply triangle inequality and \Cref{cl:stability_implication_cov} with  to obtain the following:
\begin{align}
    \| P(\widehat{\Sigma} - \Sigma )P \|_\fr &\leq \| P(\overline{\Sigma} - \Sigma)P \|_\fr + \| P(\mu - \hat{\mu}) (\mu - \hat{\mu})^\top P\|_\fr \notag \\
    &= \| P(\overline{\Sigma} - \Sigma)P \|_\fr + \| P(\mu - \hat{\mu})\|_2^2 \notag \\
&\leq 0.1 \|P \Sigma P\|_\op + 0.1 \|P \Sigma P\|_\op  \notag \\
&\leq 0.2 \|P \Sigma P\|_\op\;,\label{eq:remainstoprove}
\end{align}
where the first inequality follows from the triangle inequality and the second inequality uses \Cref{cl:stability_implication_cov}. 

We now analyze the desired expression by using the triangle inequality 
and the lemma assumption as follows: 
\begin{align*}
    \| P  \Sigma  P - L \|_\fr &\leq \| P \widehat{\Sigma} P - L \|_\fr  +  \| P (\widehat{\Sigma}  - \Sigma )P \|_\fr 
    \leq r + 0.2 \|P \Sigma P\|_\op, 
\end{align*}
where the first term was bounded by $r$ by assumption and the second term uses the bound \Cref{eq:remainstoprove}. 
Thus it suffices to show that $\|P \Sigma P\|_\op = O(r + \|L\|_\op)$. To this end, we again use triangle inequality as follows:
\begin{align}
\notag   \|P\Sigma  P\|_\op &\leq \| P  \widehat{\Sigma}  P - L \|_\op + \| L \|_\op + \| P (\widehat{\Sigma}  -\Sigma )P \|_\op \\
   &\leq r + \| L \|_\op + 0.2 \|P\Sigma  P\|_\op,
   \label{eq:boundOnHSH}
\end{align}
where the second inequality uses the bound \Cref{eq:remainstoprove}.
 Rearranging the above display inequality, we obtain $ \|P\Sigma P\|_\op = O(r + \|L\|_\op)$. 
\end{proof}

\section{Proof of Normalization \Cref{cl:meancov_after_transf_new}}

\label{sec:appendix_normalization}

In this section, we prove \Cref{cl:meancov_after_transf_new}, which states that after normalization with $H^{\dagger/2}$, the mean and covariance of the inliers, both empirical and population-level, are bounded.
This is intuitive since the inliers constitute $\Omega(\alpha)$-fraction of the overall samples and the second moment of the complete set after normalization is bounded by identity.

We restate \Cref{def:notation-new} from the main body, which contain the relevant notations and assumptions for \Cref{cl:meancov_after_transf_new}.
\begin{mdframed}
\ASSUMPTIONMAIN*
\end{mdframed}

\LemNormalization*
\begin{proof}
We prove each part below:
\paragraph{Establishing \cref{it:emp_cov}}
The transformation $H$ is such that $\E_{\oX \sim \oT}[\oX \oX^\top ]= \E_{X \sim T}[H^{\dagger/2} X X^\top  H^{\dagger/2}] = H^{\dagger/2}HH^{\dagger/2} = \proj{H}$, which has operator norm at most $1$.
By assumption, we have that $\left|\oS \cap \oT \right| \geq  (1-2\eps_0)\alpha |T| \geq 0.5\alpha |T|$. 
Thus, we obtain the following inequality:
\begin{align}
    \frac{1}{\left|\oS \cap \oT \right|}\sum_{\ox \in \oS \cap \oT} \ox \ox^\top \preceq \frac{1}{\left|\oS \cap \oT \right|} \sum_{\ox \in \oT} \ox \ox^\top 
                = \frac{|\oT|}{\left|\oS \cap \oT \right|} \proj{H} \preceq \frac{2}{ \alpha} \proj{H} \;.
\label{eq:rawSecondMoment}
\end{align}
By applying \Cref{eq:rawSecondMoment}, we obtain
\begin{align*}
    \hat{\Sigma} &=  \frac{1}{\left|\oS \cap \oT\right|} \sum_{\ox \in \oS \cap \oT}  (\ox - \hat{\mu})(\ox - \hat{\mu})^\top \preceq \frac{1}{\left|\oS \cap \oT \right|}\sum_{\ox \in \oS \cap \oT} \ox \ox^\top \preceq \frac{2}{ \alpha} \proj{H},
\end{align*}
implying that $\|\hat{\Sigma}\|_\op \leq 2/\alpha$, since $\proj{H}$ is just a projection matrix with $0/1$ eigenvalues.
\paragraph{Establishing \cref{it:emp_mean}}

Since, for any random vector $X$, we have that $\E[X]\E[X]^\top \preceq \E[XX^\top]$ (which is equivalent to $\cov[X] \succeq 0$), we obtain
\begin{align*}
    \hat{\mu} \hat{\mu}^\top \preceq  \frac{1}{\left|\oS \cap \oT \right|}\sum_{\ox \in \oS \cap \oT} \ox \ox^\top \preceq \frac{2}{ \alpha} \proj{H},
\end{align*}
where the last inequality uses \cref{eq:rawSecondMoment}. This implies that $\|\hat{\mu}\|_2^2 \leq 2/ \alpha$.

\paragraph{Establishing \cref{it:true_cov}}
The goal is to bound the population-level covariance $\|\oSigma\|_\op$. 
To this end, we will use the bounds from \Cref{it:emp_cov,it:emp_mean} which bounds their empirical versions and relate the empirical versions to the population ones via the deterministic conditions.

Consider an arbitrary subset $\oS_1$ of $\oS$ satisfying $\left|\oS_1\right| \geq \left|\oS\right|(1-2\eps_0)$.
We first note that by \Cref{cl:stability_implication_cov}, we have that
\begin{align}
    \label{eq:implStabNorm}
    \left\| \oSigma - \E_{\oX \sim \oS_1}[(\oX -\omu )(\oX -  \omu)^\top] \right\|_\fr \leq 0.1 \left\|\oSigma\right\|_\op \,\,\, \text{ and } \,\,\,  \left\|\omu - \E_{\oX\sim \oS_1}[\oX] \right\|_2 \leq 0.1 \sqrt{\left\|\oSigma\right\|_{\op} }.
\end{align}
Now define $\oSigma_1:= \E_{\oX \sim \oS \cap \oT}[(\oX - \omu)(\oX - \omu)^\top]$, the centered second moment matrix of $\oS \cap \oT$,
which satisfies $\oSigma_1 
= \hat{\Sigma}  + (\omu - \hat{\mu})(\omu -\hat{\mu})^\top$.
We have that
\begin{align}
    \| \oSigma \|_\op & \leq \| \oSigma - \hat{\Sigma}\|_\fr + \| \hat{\Sigma} \|_\op \tag{triangle inequality}\\
    &\leq  \| \oSigma - \oSigma_1\|_\fr + \|\omu - \hat{\mu}\|_2^2     + \| \hat{\Sigma} \|_\op  \tag{triangle inequality and $\oSigma_1 = \hat{\Sigma} + (\mu - \hat{\mu})(\mu - \hat{\mu})^\top$}\\
    &\leq 0.2 \|\oSigma\|_\op + \| \hat{\Sigma} \|_\op, \notag
\end{align}
where in the last line we use  \Cref{eq:implStabNorm} for $\oS_1 = \oS\cap \oT$ and the fact that $\left|\oS \cap \oT \right| \geq (1-2\eps_0)|\oS|$.
Rearranging, we have that  $\| \oSigma \|_\op < 1.25\| \hat{\Sigma} \|_\op$ for which we can use \Cref{it:emp_cov} to further upper bound it by $3/\alpha$.

\paragraph{Establishing \cref{it:true_mean}}
We use a similar argument as in \cref{it:true_cov}:
\begin{align}
\label{eq:trueMeanBd}
    \| \omu \|_2 &\leq \| \hat{\mu} \|_2 + \| \omu - \hat{\mu} \|_2 \leq \| \hat{\mu} \|_2 + 0.1\sqrt{\| \oSigma \|_\op} \leq   \sqrt{3/\alpha} ,
    \end{align}
    where the first step uses the triangle inequality, the second inequality uses Equation \cref{eq:implStabNorm} for $\oS_1 = \oS \cap \oT$, and the last inequality uses \cref{it:emp_mean,it:true_cov}.

\paragraph{Establishing \cref{it:inliers_bounds}}
For the covariance condition, we have the following:
\begin{align}
    \left\| \cov_{\oX \sim \oS}[\oX ] \right\|_\op \notag
    &\leq  \left\| \E_{\oX \sim \oS}\left[(\oX-\omu)(\oX-\omu)^\top  \right]\right\|_\op \tag{using  $\cov(\oX) \preceq \E[\oX \oX^\top]$} \\
    &\leq \left\| \E_{\oX \sim \oS}\left[(\oX-\omu)(\oX-\omu)^\top   \right] - \oSigma \right\|_\op  + \| \oSigma \|_{\op} \tag{triangle inequality} \\
    &\leq 1.1 \|\oSigma\|_{\op}, \notag
\end{align}
where the last step upper bounds the first term using Equation \cref{eq:implStabNorm} for $\oS_1 = \oS$, which trivially satisfies $|\oS| \ge |\oS|(1-2\eps_0)$.
The overall expression is upper bounded by $(1.1 \times 3)/\alpha$ by \cref{it:true_cov}.

The mean condition has a similar proof:
\begin{align*}
    \left\| \E_{\oX \sim \oS}[\oX] \right\|_2 \leq  \left\| \E_{\oX \sim \oS}[\oX ]-\omu \right\|_2 + \| \omu \|_2 \leq 0.1\sqrt{\| \oSigma\|_\op} + \sqrt{3/\alpha} \leq 2 /\sqrt{\alpha},
\end{align*}
where we use Equations \cref{eq:implStabNorm} and \cref{eq:trueMeanBd}.

\paragraph{Establishing \cref{it:quadratic}}
Consider an arbitrary square matrix $A$ with $\|A\|_\fr \le 1$.
We have that

\begin{align*}
    &\left|  \E_{\oX \sim \oS} [\oX^\top A \oX] - {\E_{X \sim D} [(H^{\dagger/2 }X)^\top A (H^{\dagger/2 }X) ]} \right| \\
    &= \left|  \left\langle A, \E_{\oX \sim \oS} [\oX \oX^\top  ] - {\E_{X \sim D} [(H^{\dagger/2 }X) (H^{\dagger/2 }X)^\top  ]} \right\rangle  \right| \\
    &=  \left|  \left\langle A, \left(\cov_{\oX \sim \oS}[\oX] + \E_{\oX \sim \oS}[\oX] \E_{\oX \sim \oS}[\oX]^\top\right) - \left(\oSigma  + \omu\omu^\top\right) \right\rangle  \right|\\
    &\leq  \left|  \left\langle A, \cov_{X \sim \oS}[X] - \oSigma  \right\rangle  \right|  + \left|\left\langle A, \E_{X \sim \oS}[X]  \E_{X \sim \oS}[X]^\top - \omu  \omu ^\top \right\rangle  \right| \tag{triangle inequality}\\
    &\leq  \left\|  \cov_{\oX \sim \oS}[\oX] - \oSigma   \right\|_\fr  + \left\| \E_{\oX \sim \oS}[\oX]  \E_{\oX \sim \oS}[\oX]^\top - \omu  \omu ^\top \right\|_\fr \tag{variational definition of Frobenius norm}\\
    &= \left\|  \E_{\oX \sim \oS} [(\oX-\omu)(\oX-\omu)^\top  ] - \oSigma    + \left(\E_{\oX \sim \oS}[\oX] - \omu\right)\left(\E_{\oX \sim \oS}[\oX] - \omu\right)^\top\right\|_\fr  \\
    &\qquad + \left\| \E_{\oX \sim \oS}[\oX]  \E_{\oX \sim \oS}[\oX]^\top - \omu  \omu ^\top \right\|_\fr\\
    &\leq  \left\|  \E_{\oX \sim \oS} [(\oX-\omu)(\oX-\omu)^\top  ] - \oSigma   \right\|_\fr +  \left\|\left(\E_{\oX \sim \oS}[\oX] - \omu\right)\left(\E_{\oX \sim \oS}[\oX] - \omu\right)^\top\right\|_\fr  \\
    &\qquad+ \left\| \E_{\oX \sim \oS}[\oX]  \E_{\oX \sim \oS}[\oX]^\top - \omu  \omu ^\top \right\|_\fr\\
    &\overset{(a)}{\leq} \left\|  \E_{X \sim \oS} [(X-\omu)(X-\omu)^\top  ] - \oSigma \right\|_\fr + 3\left\| \E_{X \sim \oS}[X] - \omu \right\|_2^2   \\
    &\leq 0.1 \|\oSigma \|_\op + 0.03  \|\oSigma \|_\op  \tag{using \cref{eq:implStabNorm}}\\
    &<1/\alpha \tag{using \cref{it:true_cov}}\;,
\end{align*}
where the inequality (a)
uses the fact that $\|uv^\top-wz^\top\|_\fr^2 \leq 
\|u-w\|_2^2 + \|v-z\|_2^2$ for the last term.

\end{proof}

\end{document}